\documentclass[preprint,10pt]{elsarticle} 

\makeatletter
\def\ps@pprintTitle{%
	\let\@oddhead\@empty
	\let\@evenhead\@empty
	\def\@oddfoot{}%
	\let\@evenfoot\@oddfoot}
\makeatother
\usepackage[left=15mm,right = 15mm,top = 15mm,bottom = 15mm]{geometry}
\setlength{\parindent}{0em}
\usepackage[titletoc]{appendix}
\usepackage[table]{xcolor}
\usepackage{float}
\usepackage{amssymb, amsmath,amsthm,graphics,psfrag,graphicx,color,fancyhdr}
\usepackage{appendix}
\usepackage{verbatim}
\usepackage{blindtext}
\usepackage{hyperref}
\usepackage{cleveref}
\usepackage{stmaryrd}
\usepackage{soul}
\usepackage{sidecap}
\usepackage{tikz}
\usepackage{rotating}
\usepackage{booktabs}
\usepackage{comment}
\usepackage{todonotes}
\usepackage{multirow}
\usepackage{changepage}
\usepackage{float}
\usepackage{wrapfig}
\usepackage{stmaryrd}
\usepackage{mdframed}
\usepackage{mathtools}
\usepackage{subcaption}


\SetSymbolFont{stmry}{bold}{U}{stmry}{b}{n}

\definecolor{lightblue}{rgb}{0.32,0.45,0.90}
\definecolor{lightgreen}{rgb}{0.42,0.7,0.40}
\numberwithin{equation}{section}
\numberwithin{figure}{section}
\hypersetup{colorlinks=true,linkcolor = blue,citecolor=blue}

\usetikzlibrary{shapes,arrows}
\usetikzlibrary{decorations.markings}
\numberwithin{figure}{section}


\def\b{\boldsymbol}

\makeatletter
\newcommand{\vast}{\bBigg@{4}}
\newcommand{\Vast}{\bBigg@{5}}
\makeatother\def\b{\boldsymbol}

\colorlet{cgray}{gray!20!white}


\theoremstyle{definition}
\newtheorem{definition}{Definition}[section]

\newtheorem{proposition}{Proposition}[section]
\newtheorem*{remark}{Remark}
\newtheorem{example}{Example}[section]



\tikzset{every label/.style={font=\footnotesize,inner sep=1pt}}

\allowdisplaybreaks

\journal{Journal of note taking}
\begin{document}
	
\begin{frontmatter}
\title{Numerically modelling semidirect product geodesics}

\author[inst4]{J Woodfield}

\affiliation[inst4]{organization={Department of Mathematics, Imperial College London},
addressline={South Kensington Campus}, 
city={London},
postcode={SW7 2AZ}, 
country={United Kingdom}}

\begin{abstract}
This paper numerically investigates Euler-Poincaré equations arising from a self-semidirect product group structure. Nonlinearly coupled systems of equations emerge from the semidirect product action where one set of dynamics can be considered in the frame of another. A monolithic energy-preserving continuous Galerkin finite element method is used to study geodesic equations associated with the semidirect product of the diffeomorphism group on a circle with itself. Theoretically predicted peakon solutions are observed as an emergent behaviour. In addition, complicated nonlinear transfers of energy are associated with the semidirect product coupling, where amongst various nonlinear interactions, we observe coupled peakon behaviour. A mimetic (C-grid) finite difference method is used to study the geodesic flow of the semidirect product of the volume preserving diffeomorphism group with itself, where similar coupling behaviour is observed in the vorticity variables. We also investigate coadjoint and Lie-Poisson structures in the context of geodesic equations on semidirect product groups, where the underlying group is first extended by central extension or semidirect product.

\end{abstract}

\begin{keyword}

Semidirect product of groups,
Diffeomorphism, Geodesic, $\operatorname{Diff}(M)\ltimes \operatorname{Diff}(M)$, extension
\end{keyword}
\end{frontmatter}

\tableofcontents


 \smallskip

\section{Introduction and literature review}

As demonstrated by Arnold \cite{arnold1966geometrie}, Euler's equations describing an incompressible fluid can be considered geodesic equations on the group of volume-preserving diﬀeomorphisms for the $L^2$ kinetic energy \cite{arnold2014topological,ebin1970groups,ebin1968space,omori2006infinite}. The \emph{metric} plays a determining role in the equations, for example, by changing from the $L^2$ metric to the
$\dot{H}^{1/2}$ metric, the 2D Euler's equations become the Surface Quasi Geostrophic equations \cite{washabaugh2015sqg}. Similarly,
\emph{Group and algebra extensions}  have played a historical role in determining geophysical fluid and wave models can be regarded as geodesic equations on an extended group \cite{vizman2008geodesic,zeitlin1994differential,roger1995extensions}. Typical group extensions include: (i) the extension by direct product with another group, $H\times K$; (ii) the semidirect product with another group, $H\ltimes K$; (iii) the semidirect product with a vector space, $G\ltimes V$ and (iv) the central extension of a group $G$, denoted as $\widehat{G}$. This paper investigates the group structures $G\ltimes G$, $\widehat{G}\ltimes \widehat{G}$, $(G\ltimes V)\ltimes  (G\ltimes V)$, and the resulting geodesic equations arising from their corresponding variational principles.

\smallskip

In \cite{escher2011euler} self-semidirect product geodesics associated with $\operatorname{Diff}(S^1)\ltimes\operatorname{Diff}(S^1)$ are considered, and for a specific metric/inertia operator a coupled system of Camassa-Holm equations are derived. This coupled system is subsequently untangled and shown to admit peakon solutions in the uncoupled variables. In this paper, we numerically solve the coupled CH-CH system of equations when choosing weighted $H_{\alpha}^1$, $H_{\beta}^1$ norms. We observe theoretically predicted peakon solutions as an emergent behaviour in addition to other complicated nonlinear interactions. In particular, we observe peakons in each variable travelling together with the same speed. 

To determine if this type of coupling behaviour is associated more generally with semidirect product coupled systems, we investigate the geodesic equations arising from the self-semidirect product group $\operatorname{SDiff}(\mathbb{T}^2)\ltimes \operatorname{SDiff}(\mathbb{T}^2)$, where $\operatorname{SDiff}(\mathbb{T}^2)$ is the volume-preserving diffeomorphism group on the torus. This gave a coupled system of 2D-Euler's equations, where numerically we observe similar coupling between the vorticity variables.

We also investigate group structures arising when considering the self-semidirect product of centrally extended groups.  The Camassa-Holm equation and the KdV equations are integrable equations originally derived to model shallow-water waves over a flat bed \cite{camassa1993integrable,korteweg1895xli} supporting peakon and soliton solutions respectively. The central extension has allowed the Camassa-Holm equation and the KdV equation to be interpreted as geodesic flows on the group of orientation-preserving diffeomorphisms on the circle in the presence of dispersion. More specifically
Ovsienko and Khesin \cite{ovsienko1987korteweg} showed the solutions of the KdV equation can be interpreted as geodesics of the right invariant $L^2$ metric on the Virasoro-Bott group. 
Misiolek \cite{misiolek1998shallow} extended this to show that solutions of the Camasa Holm equation can be interpreted as geodesics of the right invariant $H^1_{\alpha}$ metric on the Virasoro-Bott group. We consider coupled systems of such equations, through the semidirect product of centrally extended groups, where additional dispersive coupling is noted.

We derive self-semidirect product coupled systems of equations arising when a vector space has extended the underlying group. 
The Semi-direct product with the vector space of functions was shown to give a 2-component generalization of the Camassa-Holm equation (CH2) from the geodesic motion for the $H^1$ metric on the semidirect product space $\operatorname{Diff}\left(S^1\right) \ltimes C^{\infty}\left(S^1\right)$ or the Itô \cite{ito1982symmetries} equations for the $L^2$ metric. We consider a coupled system of such equations arising from the self-semidirect product.
This semidirect product space was centrally extended for modified 2-component Cammassa-Holm equations in \cite{ovsienko1994extensions,marcel1997extension,guha2008geodesic} and extended to higher spatial dimensions in \cite{holm2009geodesic}.

The semidirect product of two non-Albelian groups denoted $H \ltimes K$ is a fundamental construction considered in the reduction by stages framework \cite{ortega2013momentum,marsden2007hamiltonian,gay2009geometric,cendra2001lagrangian,cendra1987lin,perlmutter1999symplectic} and has previously received attention in other modelling applications; such as 
multiscale image registration \cite{bruveris2012mixture} hybrid plasma models \cite{holm2010euler}, Hall Magnetohydrodynamics \cite{araki2015differential}, Landau-Lifshitz equations \cite{patrick1999landau}. In \cite{holm2023lagrangian} the Lagrangian reduction by stages framework has been proposed as a coupling mechanism for waves and fluids. This work employs an analogous coupling mechanism, discussed in \cite{holm202431,holm2019stochastic} where one considers one fluid in the frame of another, this leads to adjoint actions associated with the semidirect product of groups \cite{bruveris2012mixture}, and elaborated on in \cref{sec: composition of maps}.

\smallskip



\subsection{Overview}\label{sec:Overview}
The rest of this paper is organised as follows.
\smallskip

Section \ref{sec:Background Material} reviews background material for reference and use in later sections of the paper. 
\begin{itemize}
\item
Section \ref{sec:Euler-Poincaré-Arnold} reviews the variational derivation of the Euler-Poincaré equations. 
Section \ref{sec:diffeomorphisms} reviews the diffeomorphism group, and how the Euler-Poincaré equations give dispersionless Camassa-Holm equations.
Section \ref{sec: volume preserving diffeomorphisms}  reviews Euler-Poincaré equations for the volume-preserving diffeomorphisms and how they give Euler's fluid equation.
\item Section \ref{sec:Direct products} reviews the various adjoint and co-adjoint actions of the direct product of groups $H\times K$ needed later in the paper. 
Section \ref{sec: semidirect products} reviews the adjoint actions associated with the semidirect product of a group with itself $G_H\ltimes G_K$ and how these actions produce different structures in Euler-Poincaré equations.
\end{itemize}
\smallskip

In \cref{sec:Numerical methodology} we numerically investigate the semidirect product geodesics $G\ltimes G$. 
\begin{itemize}
\item
In \cref{sec:geodesics_on_diff_semi_diff} we use an energy preserving finite element method to solve the geodesic equation arising from $\operatorname{Diff}(S^1)\ltimes \operatorname{Diff}(S^1)$ under weighted $H^1_{\alpha}(S)$, $H^1_{\beta}(S)$ norms. The resulting equations are self-semidirect product coupled system of Camassa-Holm equations. 
\item
In \cref{sec:geodesics_on_Sdiff_semi_Sdiff} we solve geodesics arising from $\operatorname{Diff}(\mathbb{T}^2)\ltimes \operatorname{Diff}(\mathbb{T}^2)$ under the $L^2(\mathbb{T}^2)$ norms, a self-semidirect product coupled system of Euler's equations. We observe that geodesics of self-semidirect product systems tend to strongly couple the nonlinear flow dynamics. 
\end{itemize}
\smallskip

In \cref{sec:Further extensions} we derive the semidirect product of extended groups. 
\begin{itemize}
\item
In \cref{sec: semidirect product of centrall extension} we review the central extension of a group $\widehat{G}$, the resulting Euler-Poincaré equations and the Euler-Poincaré equations for $\widehat{G}\ltimes \widehat{G}$. In \cref{sec:Example 0: Geodesics} geodesics of $\widehat{\operatorname{Diff}}(S^1)$ are reviewed, and in \cref{sec:Example 1: Geodesics} geodesics of $\widehat{\operatorname{Diff}}(S^1)\ltimes  \widehat{\operatorname{Diff}}(S^1)$ are introduced. 
\item
In \cref{sec:semi_direct_GV} we review the Euler-Poincaré equation for $G\ltimes V$, and introduce its self-semidirect product $(G\ltimes V)\ltimes (G\ltimes V)$. The group $(\operatorname{Diff}(S^1)\ltimes C^{\infty}(S^1))\ltimes (\operatorname{Diff}(S^1)\ltimes C^{\infty}(S^1))$ is given as an example of semidirect product coupled Itô equations \cite{ito1982symmetries} and $(G\ltimes\mathfrak{g})\ltimes(G\ltimes\mathfrak{g})$, $G=\operatorname{SDiff}(M)$ is given an example of semidirect product coupled magnetohydrodynamics (MHD) \cite{holm202431}. 
\end{itemize}
In \cref{sec:conclusion} we present the conclusion, we summarise the main results, and open problems.
\smallskip
There are 5 appendices. These appendices treat the following topics. In \cref{def:group homomorphism} we define a homomorphism. In \cref{sec: semidirect product of centrall extension} the group axioms for centrally extended groups are verified. In \cref{sec:group-axioms} the group axioms for self-semidirect product extended groups are discussed and the Adjoint actions are in \cref{sec:adjoints for H semi K}. 
In \cref{sec: composition of maps} the composition of maps modelling framework and relationship to the usual Euler-Poincaré equations \cref{sec: composition of maps} are discussed. In \cref{Centrally extending the semidirect product group} a self semidirect product Lie algebra is centrally extended, leading to different coupled systems of equations than the self-semidirect product of centrally extended Lie algebras.

\section{Background Material}\label{sec:Background Material}
\subsection{Euler-Poincaré}\label{sec:Euler-Poincaré-Arnold}
Let $(G,\cdot_G)$ be a Lie group, with group multiplication $g_1 \cdot_G g_2=g_1 g_2$, identity $e_G$, and inverse $g^{-1}$. The group $G$ acts on itself
under inner automorphism, a left action defined as
\begin{align}
\operatorname{AD}_{g_1} g_2: =g_1 \cdot g_2 = g_1 g_2 g_1^{-1}. \label{eq:inner-automorphism}
\end{align}
Here we have adopted the notation $\cdot$ without a subscript,  to denote left conjugation by group action. The derivative of the inner automorphism at the identity gives the Ad-joint representation of the group on its Lie algebra 
\begin{align}
\operatorname{Ad}_{g}\eta := \frac{d}{d\epsilon}\bigg\vert_{\epsilon=0} \operatorname{AD}_{g_1}g_2(\epsilon), \quad\text{where}\quad \eta := \frac{d }{d\epsilon}\bigg|_{\epsilon=0} g_2(\epsilon)\in \mathfrak{g}=T_e G.
\end{align}
The derivative of this Adjoint representation at the identity of the group defines the adjoint representation of the Lie algebra 
\begin{align}
\operatorname{ad}_{\xi}\eta := \frac{d}{d\epsilon}\bigg|_{\epsilon=0} \operatorname{Ad}_{g_1(\epsilon)} \eta\quad\text{where}\quad \xi := \frac{d }{d\epsilon}\bigg|_{\epsilon=0} g_1(\epsilon)\in \mathfrak{g}=T_e G.
\end{align}
A Lie algebra $\mathfrak{g}$ is a vector space and consequnently has a dual vector space denoted $\mathfrak{g}^*$, 
we denote the inner product pairing between the Lie algebra and its dual space as $\langle m,u \rangle_{\mathfrak{g}^*\times\mathfrak{g}}$. The co-Adjoint (anti)representation $\operatorname{Ad}_{g}^*\eta$ and the co-adjoint representation $\operatorname{ad}_{\xi}^*\eta$ are operators defined through the pairing as follows
\begin{align}
\langle 
\operatorname{Ad}_{g}^*\gamma,\eta
\rangle_{\mathfrak{g}^*\times\mathfrak{g}} = \langle \gamma, \operatorname{Ad}_{g}\eta \rangle_{\mathfrak{g}^*\times\mathfrak{g}},\\
\langle 
\operatorname{ad}_{\xi}^*\gamma,\eta
\rangle_{\mathfrak{g}^*\times\mathfrak{g}} = \langle \gamma, \operatorname{ad}_{\xi}\eta \rangle_{\mathfrak{g}^*\times\mathfrak{g}}.
\end{align}
When considering a constrained variational principle of the type
\begin{align}
   0=\delta \int\ell(u)dt,\quad\text{where}\quad  \delta u = \frac{d}{d\epsilon}\bigg|_{\epsilon=0} (\partial_t {g}_{\epsilon})g^{-1}_{\epsilon} = \partial_t v - \operatorname{ad}_{u}v,\quad v(t)\in \mathfrak{g}.
\end{align}
One attains the following Euler-Poincaré-Arnold equation in $\mathfrak{g}^*$
\begin{align}
    \frac{d}{dt}\frac{\delta \ell}{\delta u} + \operatorname{ad}^*_{u}\frac{\delta \ell}{\delta u} = 0, \label{eq:euler-poincare}
\end{align}
 from the variational principle
\begin{align}
 0&= \delta \int \ell(u) dt= \int \left\langle \frac{\delta \ell}{\delta u},\delta u \right\rangle_{\mathfrak{g}^*\times\mathfrak{g}} dt
 = \int \left\langle \frac{\delta \ell}{\delta u}, \frac{dv}{dt}  - \operatorname{ad}_{u}v \right\rangle_{\mathfrak{g}^*\times\mathfrak{g}} dt = -\int \left\langle \frac{d}{dt}\frac{\delta \ell}{\delta u} + \operatorname{ad}^*_{u}\frac{\delta \ell}{\delta u},v \right\rangle_{\mathfrak{g}^*\times\mathfrak{g}} dt. \label{eq:eulerarnold_variational_principle}
\end{align}
In this work we shall refer to a geodesic equation, as the Euler-Poincaré or Euler-Arnold equation \cref{eq:euler-poincare}, arising from critical points of a variational principle with any quadratic Lagrangian associated with a group $G$.

\subsection{Diffeomorphisms}\label{sec:diffeomorphisms}
To derive the Euler-Poincaré equations for fluids, we treat the diffeomorphism group operationally as a Lie group. The group action of the diffeomorphism group $(\operatorname{Diff}(M),\cdot_{\operatorname{Diff}})$, is composition $g_1 \cdot_{\operatorname{Diff}} g_2 = g_1 \circ g_2$. The corresponding algebra consists of vector fields $\operatorname{Vect}(M)$. The geometric dual space to vector fields consists of one-form densities $\Lambda^1(M)\otimes \Lambda^n(M)$ \cite{kirillov2006infinite}. 
\smallskip

The Adjoint representation of $\operatorname{Diff}(M)$, on the space of vectorfields $\operatorname{Vect}(M)$, is given by push-forward $g_*v$. The co-Adjoint antirepresentation of $\operatorname{Diff}(M)$ on $\alpha\otimes D \in \Lambda^1(M)\otimes \Lambda^n(M)$ is given by pull-back $\operatorname{Ad}_g^* (\alpha \otimes D)=g^* (\alpha \otimes D)$, from the consistency of the representation. The $\operatorname{ad}$-joint representation of $\operatorname{Vect}(M)$ on itself, is given by the negative Lie derivative, equal to the negative commutator of vectorfields.
The coadjoint representation of $\operatorname{Vect}(M)$ on the dual space $\Lambda^1(M)\otimes \Lambda^n(M)$ is the Lie derivative. These are written as 
\begin{align}
\operatorname{ad}_{u}v &= -[u,v] = - \mathcal{L}_{u}v, \\
\operatorname{ad}^*_{u}(\alpha\otimes D) &= - \mathcal{L}_{u}^{T}(\alpha\otimes D) = \mathcal{L}_{u}(\alpha\otimes D)\,.
\end{align}
Where the contragradient representation of the Lie derivative is denoted $\mathcal{L}_{u}^{T}$ and is equal to the negative Lie derivative, verifiable through integration by parts. The formal treatment of the diffeomorphism group as a Lie group is beyond the scope of this paper and is considered in \cite{ebin1970groups,ebin1968space,omori2006infinite,michor2008topics}.

\begin{example}[Camassa-Holm equation]
In one dimension with the Lagrangian
\begin{align}
\ell(u) = \frac{1}{2}||u||^2_{H^1_{\alpha}(S)}=\frac{1}{2} \int \left(u^2 + \alpha^2 u^2_x \right)\mathrm{d}x,
\end{align}
the variational derivative is calculated as the one-form density
\begin{align}
 \frac{\delta \ell}{\delta u} = (u - \alpha^2 u_{xx}) \mathrm{d}x \otimes \mathrm{d}x =m \mathrm{d}x\otimes \mathrm{d}x \in \Lambda^1(S) \otimes \Lambda^1(S).
\end{align}
The co-adjoint representation is the Lie derivative of the 1-form density
\begin{align}
\operatorname{ad}^{*}_{u}(m\mathrm{d}x\otimes \mathrm{d}x) = \mathcal{L}_{u}(m\mathrm{d}x\otimes \mathrm{d}x)= \left( (um)_x+u_xm \right) \mathrm{d}x \otimes \mathrm{d}x.
\end{align}
The geodesic equation (\cref{eq:eulerarnold_variational_principle}) associated with the above Lagrangian associated with the $H^1_{\alpha}(S^1)$ norm is the Camassa-Holm equation, 
\begin{align}
\partial_t m + (mu)_x + u_x m = 0, \quad \text{where}\quad m = (u - \alpha^2 u_{xx}).
\end{align}
\end{example}

The above example admits generalisation. For example, by using a weighted $H^k$ norm to define the Lagrangian
\begin{align}
\ell(u) = 1/2 ||u||_{H^k_{\b \alpha}}^2  = 1/2 \int \left( \alpha_0 u^2 + \alpha^2_1 u_x^2 + ... +\alpha_{k}^2 (\partial^{k}_{x} u)^2 \right) \mathrm{d}x
\end{align}
one obtains the momentum $m = u - \alpha^2_1 u_{xx} + ... +(-1)^{k} \alpha_k^2\partial^{2k}_{x}u$ and geometric pairing $\langle m \mathrm{d}x\otimes \mathrm{d}x,u\partial_x\rangle = \ell(u)$, such that ($\alpha_0=1$,$\lbrace\alpha_i=0\rbrace_{i\neq 0,1}$) gives rescaled inviscid Burgers equation, and ($\alpha_1=1$,$\lbrace\alpha_i=0\rbrace_{i\neq 1}$) gives the Hunter-Saxton equation. This example generalises to higher spatial dimensions where under the $L^2$, $H^1$ norm one attains the (Averaged) template matching equations \cite{hirani2001averaged} or the EPDiff equation. Other well-known fluid equations are observed by restricting to volume-preserving diffeomorphisms.


\subsection{Volume preserving Diffeomorphisms}\label{sec: volume preserving diffeomorphisms}






The group of volume-preserving diffeomorphisms $G= \operatorname{SDiff}(M)$, also has group multiplication given by composition (a right action).
The Lie algebra of the group $G= \operatorname{SDiff}(M)$ is the space of divergence-free vector fields $\mathfrak{g}=\operatorname{SVect}(M)$. The dual space $\mathfrak{g}^{*}$ is isomorphic to the space of 1-forms modulo exact 1-forms $\Lambda^1/ \mathrm{d}\Lambda^0$ \cite{khesin2009geometry}.%
\footnote{The additional exact 1-form is necessary because the variational derivative with respect to a divergence-free vector field is in the full dual space of $\operatorname{Vect}(M)$, not necessarily in the dual space of $\operatorname{SVect}(M)$.}
Hence, each element in the dual space may be written as $[\alpha] = \alpha+\mathrm{d}f$ where $f\in \Lambda^0$ is an arbitrary function, $\alpha\in \Lambda^1$ is a 1-form, and $\mathrm{d}$ denotes the exterior derivative. The Adjoint-actions of $g\in \operatorname{SDiff}(M)$ on $v\in \operatorname{SVect}(M)$, and $\alpha \in \operatorname{SVect}(M)^*$ are the usual push-forward and pull-back actions (i.e., representations) associated with the diffeomorphism group discussed in \cref{sec:diffeomorphisms}. 
\smallskip

The adjoint and coadjoint actions of $u\in \operatorname{SVect}(M)$ are given by
\begin{align}
\operatorname{ad}_{u} v &= -\mathcal{L}_{u}v = -[u,v],\\
\operatorname{ad}^*_{u} [\alpha] &= -\mathcal{L}^T_{u}[\alpha]= \mathcal{L}_{u}[\alpha] = \mathcal{L}_u(\alpha+\mathrm{d}f) = \mathcal{L}_u\alpha + \mathrm{d}\mathcal{L}_{u}f = [\mathcal{L}_{u}\alpha].
\end{align}
As introduced above, the notation $[\,\cdot\,]$ refers to 1-forms, modulo exact 1-forms. The notation 
$\mathcal{L}_{u}v$ denotes the Lie derivative, given by the commutator of vector fields $u$ and $v$ denoted $[u,v]$. The notation $\mathcal{L}_{u}[\alpha]$ denotes Lie derivative of the oneform $[\alpha]$. 
\smallskip

The Euler-Poincaré equations are derived modulo exact 1-forms under the pairing $\langle X,[\alpha] \rangle_{\operatorname{SVect(M)}\times \Lambda^1(M)/\mathrm{d}\Lambda^0(M)}$ from the following variational principle
\begin{align}
 0&= \delta \int \ell(u) dt= \int \left\langle \left[\frac{\delta \ell}{\delta u}\right],\delta u \right\rangle dt= \int \left\langle \left[\frac{\delta \ell}{\delta u}\right],\dot{v} - \operatorname{ad}_{u}v \right \rangle dt = -\int \left\langle \frac{d}{dt}\left[\frac{\delta \ell}{\delta u}\right] + \operatorname{ad}^*_{u}\left[\frac{\delta \ell}{\delta u}\right],v \right\rangle dt. \label{eq:ep_oneform}
\end{align}
We refer to \cite{arnol'd2009topological,khesin2009geometry} for information regarding this pairing. We refer to \cite{misiolek2010fredholm}, for the co-ordinate free derivation of the co-adjoint actions in dimensions $(2,3,n)$. In \cite{modin2010geodesics}, it is elaborated in more detail how the same adjoint action can arise under several choices of (inertia, geometric and other) pairings using different isomorphisms in the de-Rham complex. Additional information regarding geometric structures in two-dimensional incompressible flow can be found in \cite{marsden1983coadjoint} and \cite{khesin2009geometry}.

\begin{example}[Euler's equations] Given the Lagrangian
\begin{align}
\ell(u) = \frac{1}{2}||u||_2^2 =\frac{1}{2}\int_M u^2 \mathrm{d}^n x,
\end{align}
the variational derivative is calculable as
\begin{align}
 \frac{\delta \ell}{\delta u} = m \mathrm{d}x = u^{\flat} \in \Lambda^1,
\end{align}
Where $\flat$ denotes the musical isomorphism taking vectors to one forms, locally acting in coordinates as $\flat: (u^i \partial_{i}) \mapsto u_i dx^{i}$. 
The Euler-Poincaré equation \cref{eq:ep_oneform} takes the 1-form (modulo exact 1forms) description 
\begin{align}
\frac{d}{dt} \left[\frac{\delta \ell}{\delta u}\right] + \mathcal{L}_{u} \left[\frac{\delta \ell}{\delta u}\right] = 0.
\end{align}

Which upon introducing the exact 1-form pressure is equivalent to the co-ordinate free incompressible Eulers equation
\begin{align}
\frac{d}{dt} u^{\flat} + \mathcal{L}_{u}u^{\flat} = \mathrm{d}\left(\frac{1}{2}|u|^2 + p\right).
\end{align}
The exterior derivative commutes with the Lie derivative and removes exact 1-forms, such that the exterior derivative gives the 2-form vorticity conservation law 
\begin{align}
0= \left(\frac{d}{dt}+\mathcal{L}_{u}\right)\mathrm{d} \frac{\delta \ell}{\delta u} = \frac{d}{dt} g_t^{*}\mathrm{d}\left(\frac{\delta \ell}{\delta u}\right)=\frac{d}{dt} \operatorname{Ad}^*_{g_t}\mathrm{d}\left(\frac{\delta \ell}{\delta u}\right)= \frac{d}{dt} \omega + \mathcal{L}_u \omega = 0.
\end{align}  
\end{example} 

In two-dimensional incompressible flow, there is no line stretching term and there exists a stream function such that, $\omega = \operatorname{curl}(-\nabla^{\perp}\psi)$ specifies an elliptic equation, that can be solved, relating (potential)vorticity to stream-function, and one can define the two-dimensional incompressible Euler's equation
\begin{align}
\partial_t\omega + \mathcal{L}_u \omega = 0, \quad u = -\nabla^{\perp}\psi, \quad \omega = -\Delta \psi, \quad  \omega(0,x) = \omega_0(x).
\end{align}

Where conveniently for two-dimensional incompressible flow many equivalent representations exist since $\mathcal{L}_{u}\omega = u \cdot \nabla \omega = \nabla \cdot(u\omega)  = \lbrace \psi,\omega\rbrace= J(\psi,\omega)=[\psi,\omega]= -\nabla^{\perp}
\psi\cdot \nabla\omega$. In \cref{sec:geodesics_on_Sdiff_semi_Sdiff}, we discretise the divergence form, giving a flux form discretisation capable of locally preserving vorticity.


\subsection{Direct products \texorpdfstring{$H\times K$}{}}\label{sec:Direct products}
Before embarking on non-trivial group extensions, we review the trivial extension associated with the direct product.

\begin{definition}[Direct product]Given two Lie groups $(H,\cdot_{H}),(K,\cdot_{K})$ one can define a larger direct product Lie group
$G = H\times K$, where multiplication in the larger group is defined as 
\begin{align}
g_1g_2 := (h_1,k_1)(h_2,k_2) &:= (h_1\cdot_{H} h_2, k_1 \cdot_{K}k_2), \quad \forall h_1,h_2\in H, \quad k_1,k_2\in K. \label{eq:dp multiplication}
\end{align}
Associativity, inverse and identity follow from the previous group structures on $(H,\cdot_{H})$, and $(K,\cdot_{K})$. The inverse and identity are $g_1^{-1}=(h_1,k_1)^{-1} = (h_1^{-1},k_1^{-1})$, $e_G = (e_H,e_K)$, and the direct product group is also a Lie group. 
\end{definition}

The direct product of groups is a canonical example of a trivial group extension, satisfying the following equivalency of short exact sequences \footnote{Exactness refers to ($\operatorname{Im}(i) = \operatorname{Ker}(s)$), and equivalency between sequences refers to the existence of a homomorphism between G and $K\times H$}
$$
1 \rightarrow K \xhookrightarrow{i} G \xrightarrow{s} H \rightarrow 1,\quad
1 \rightarrow K \xhookrightarrow{i} K \times H \xrightarrow{s} H \rightarrow 1. 
$$ Where $i$ denotes an injective Lie group homomorphism, and $s$ denotes a surjective Lie group homomorphism. A group homomorphism and its key properties is defined in \cref{def:group homomorphism}. The corresponding AD-joint Adjoint, ad-joint, and coadjoint operators for the larger group $G$ inherit the direct product structure, 
\begin{align}
\operatorname{AD}_{(h_1,k_1)}(h_2,k_2) &= (\operatorname{AD}_{h_1}h_2,\operatorname{AD}_{k_1}k_2),\\
\operatorname{Ad}_{(h,k)}(\eta_h,\eta_k) &= (\operatorname{Ad}_{h}\eta_h,\operatorname{Ad}_{k}\eta_k),\\
\operatorname{ad}_{(\xi_h,\xi_k)}(\eta_h,\eta_k) &= (\operatorname{ad}_{\xi_{h}} \eta_h,\operatorname{ad}_{\xi_{k}}\eta_k),\\
\operatorname{ad}^*_{(\xi_h,\xi_k)}(m,n) &= (\operatorname{ad}^{*}_{\xi_{h}} m,\operatorname{ad}^{*}_{\xi_{k}}n),
\label{eq:direct prod co}
\end{align}
under the paring $$ \langle (m,n),(\eta_k,\eta_h)\rangle_{\mathfrak{g}^*\times\mathfrak{g}} = \langle m,\eta_h \rangle_{\mathfrak{h}^*\times\mathfrak{h}}+\langle n,\eta_k \rangle_{\mathfrak{k}^*\times\mathfrak{k}}.$$
The Euler-Poincaré equations \cref{eq:eulerarnold_variational_principle} for coadjoint motion \cref{eq:direct prod co} acting on  $(m,n)=(\frac{\delta\ell}{\delta\xi_h},\frac{\delta\ell}{\delta\xi_k}) \in \mathfrak{h}^*\times\mathfrak{k}^*,$ or in Lie-Poisson form acting on $(\xi_h,\xi_k) \in \mathfrak{h}\times \mathfrak{k}$ may be written as follows
\begin{align}
\mp
\partial_t \begin{bmatrix}
    m\\
    n
\end{bmatrix} =
\left( \operatorname{ad}^*_{(\xi_h,\xi_k)}(m,n) \right)^{T} = 
\begin{bmatrix}
\operatorname{ad}_{\xi_h}^*(\cdot) & 0\\
0&\operatorname{ad}_{\xi_k}^{*} (\cdot)
\end{bmatrix}
\begin{bmatrix}
    m\\
    n
\end{bmatrix} 
=
\begin{bmatrix}
\operatorname{ad}_{(\cdot)}^*m  & 0 \\
0& \operatorname{ad}_{(\cdot)}^* n
\end{bmatrix}
\begin{bmatrix}
    \xi_h\\
    \xi_k
\end{bmatrix},
\end{align}
where the sign discrepancy arises from the group acting on the left or the right. We have turned the tuples of momenta and algebra into vectors to represent the coadjoint action and Lie-Poisson structure through matrix multiplication as a notational convenience. 

\begin{remark}
These resulting uncoupled equations on $\mathfrak{h}^*$, $\mathfrak{k}^*$ do not interact. Whilst these equations are uncoupled, it is possible to consider a coupled system by modifying the Lagrangian, to include an interaction term. If $\ell(u,v) = \ell^1(u) + \ell^2(v) + \ell^{12}(u,v)$, then the resulting system takes the form 
\begin{align}
\partial_t (m+l) + \operatorname{ad}^*_{u}(m +l) = 0 ,\quad m =\frac{\delta \ell^1}{\delta u},\quad l =\frac{\delta \ell^{12}}{\delta u},\\
\partial_t (n+o) + \operatorname{ad}^*_{v}(n+o) = 0, \quad n = \frac{\delta \ell^1}{\delta v},\quad o =\frac{\delta \ell^{12}}{\delta v}.
\end{align}
Coupling of this type will be referred to as Direct Product Lagrangian Coupling since the coupling comes from modifying the Lagrangian, and leads to Euler-Poincaré equations in transformed momentum variables. 
\end{remark}

\subsection{Semidirect products \texorpdfstring{$H\ltimes K$}
{}}\label{sec: semidirect products}
The semidirect product of groups is an example of a nontrivial group extension and the semidirect product action is motivated in \cref{sec: composition of maps}. 

Let $G = H\ltimes K$ be the outer semidirect product of two Lie groups $(H,\cdot_H)$, $(K,\cdot_K)$, with group multiplication given by 
\begin{align}
(h_1,k_1)\cdot_{\ltimes}(h_2,k_2) &= \left(h_1 h_2, k_1 \phi_{h_{1}} (k_{2})\right) =  (h_1 h_2, k_1 (h_{1} \cdot k_{2}) ), \label{eq:group product structure}
\end{align} 
where the left homomorphism action 
\begin{align}
    \phi_h(k):= h \cdot k :=  hkh^{-1}, \quad H\times K \mapsto K,\label{eq:phihk}
\end{align}  
has the key properties
\begin{align}
\phi_{h_1}(\phi_{h_2}k) = \phi_{(h_1 \cdot_{H} h_2)} k = h_1 \cdot (h_2 \cdot k) &= (h_1 h_2) \cdot k, \label{eq:property1}\\
\phi_{h}(k_1 k_2) = \phi_{h} (k_1) \cdot_{K} \phi_{h}(k_2)  = h\cdot(k_1 k_2) &= (h\cdot k_1)(h\cdot k_2).\label{eq:property2}
\end{align}
The first of these, is the homomorphism property, and the second is the automorphism property. The group $G=H\ltimes K$, inherits the usual Lie-group properties from the previously assumed Lie-Group structures in $H,K$, and the homomorphism properties of $\phi$. More specifically, the identity is $e_{G} = (e_H,e_K)$ and inverse $g^{-1} = (h,h^{-1}\cdot k^{-1})$, and associativity are verified in \cref{sec:group-axioms}. 
We shall now suppose $H$ is a copy of $K$, and by a calculation elaborated upon in \cite{marsden2007hamiltonian} and \cref{sec:adjoints for H semi K} we attain the following AD-Ad-ad-ad*-joint actions. Where it is assumed that the inner product structure is of the form
\begin{align}
    \langle (m,n),(u,v) \rangle_{\mathfrak{g}^*\times\mathfrak{g}} = \langle m,u\rangle_{\mathfrak{h}^*\times\mathfrak{h}}+ \langle n,v\rangle_{\mathfrak{k}^*\times\mathfrak{k}}.
\end{align}

\begin{proposition}[AD-Ad-adjoint for semidirect product of groups]\label{prop:adjoint actions for a semidirect product}
Let $G=H\ltimes K$, where $K$ is a copy of $H$ then the AD-Ad-ad-ad$^*$-joint actions are
\begin{align}
\operatorname{AD}_{g_1}g_2 = \operatorname{AD}_{(h_1,k_1)}(h_2,k_2)
&= \left(\operatorname{AD}_{h_1}h_2,k_1 \cdot_{K} (\phi_{h_1} k_2 ) \cdot_{K}(\phi_{\operatorname{AD}_{h_{1}} h_{2} } k_{1}^{-1} )\right),\\ 
\operatorname{Ad}_{g}{\eta_g} = \operatorname{Ad}_{(h,k)}{(\eta_h,\eta_k)} &= (\operatorname{Ad}_{h}\eta_{h},\operatorname{Ad}_{kh}(\eta_k+\eta_h) -\operatorname{Ad}_{h}\eta_h),\\
\operatorname{ad}_{\xi_{g}}\eta_{g} = \operatorname{ad}_{(\xi_{h},\xi_{k})}(\eta_{h},\eta_{k})
&= (\operatorname{ad}_{\xi_h}\eta_h, \operatorname{ad}_{\xi_h+\xi_k} (\eta_{h}+\eta_{k})- \operatorname{ad}_{\xi_h}\eta_h ),\\
\operatorname{ad}^{*}_{(\xi_h,\xi_k)} (m,n) &= (\operatorname{ad}^{*}_{\xi_h} m+ \operatorname{ad}_{\xi_k}^{*} n, \operatorname{ad}_{(\xi_k + \xi_h)}^{*} n  ). 
\end{align}
\end{proposition}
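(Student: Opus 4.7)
The plan is to derive the four formulas in order by successive differentiation, using the group multiplication \eqref{eq:group product structure} and the homomorphism/automorphism properties \eqref{eq:property1}--\eqref{eq:property2} of $\phi$ as the only nontrivial inputs. Since $K$ is identified with a copy of $H$, $\phi_h$ is simply conjugation by $h$, so at the Lie-algebra level its derivative is the usual $\operatorname{Ad}_h$ and $\operatorname{ad}_{\xi_h}$ from $H$. This identification is what allows expressions like $\eta_h+\eta_k$ to make sense.

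First I would compute the inverse in $G$: setting $(h,k)\cdot_\ltimes (h',k')=(e_H,e_K)$ and using \eqref{eq:group product structure} gives $h'=h^{-1}$ and $k'=\phi_{h^{-1}}(k^{-1})$, so $(h,k)^{-1}=(h^{-1},\phi_{h^{-1}}(k^{-1}))$. With this in hand, the inner automorphism $\operatorname{AD}_{g_1}g_2 = g_1 g_2 g_1^{-1}$ is a direct expansion using \eqref{eq:group product structure} applied twice, together with \eqref{eq:property1}--\eqref{eq:property2} to collapse the nested $\phi$-actions. The $H$-slot cleanly produces $\operatorname{AD}_{h_1}h_2$, and the $K$-slot gives the three-factor product stated in the proposition. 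The bookkeeping of $\phi$-compositions is the only real work here.

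Next, to obtain $\operatorname{Ad}_{(h,k)}(\eta_h,\eta_k)$, I pick a curve $g_2(\epsilon)=(h_2(\epsilon),k_2(\epsilon))$ with $g_2(0)=e_G$ and $\tfrac{d}{d\epsilon}|_0 g_2(\epsilon)=(\eta_h,\eta_k)$, and differentiate the AD-joint formula at $\epsilon=0$. The $H$-component immediately yields $\operatorname{Ad}_h \eta_h$. For the $K$-component, I differentiate $k_1\cdot_K(\phi_{h_1}k_2(\epsilon))\cdot_K(\phi_{\operatorname{AD}_{h_1}h_2(\epsilon)}k_1^{-1})$ using the Leibniz rule; the first variation contributes $\operatorname{Ad}_{k h}\eta_k$ (after rewriting $k\,\phi_{h}(\cdot)\,k^{-1}=\operatorname{Ad}_{kh}(\cdot)\operatorname{Ad}_h(\cdot)^{-1}$-style manipulations), while differentiating $\phi_{\operatorname{AD}_{h}h_2(\epsilon)}k^{-1}$ at the identity uses the fact that the derivative of $(h,k)\mapsto \phi_h(k^{-1})$ along $\eta_h$ is essentially $\operatorname{ad}$-type; combining the pieces and simplifying using $\operatorname{Ad}_{h_1 h_2}=\operatorname{Ad}_{h_1}\operatorname{Ad}_{h_2}$ produces $\operatorname{Ad}_{kh}(\eta_k+\eta_h)-\operatorname{Ad}_h\eta_h$. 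The additive $\eta_h+\eta_k$ is the tell-tale signature of $H=K$.

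The ad-joint is then obtained by a second differentiation: take $g(\epsilon)=(h(\epsilon),k(\epsilon))$ with $g(0)=e_G$ and $\tfrac{d}{d\epsilon}|_0 g(\epsilon)=(\xi_h,\xi_k)$, and differentiate $\operatorname{Ad}_{g(\epsilon)}(\eta_h,\eta_k)$ at $\epsilon=0$. Using $\tfrac{d}{d\epsilon}|_0 \operatorname{Ad}_{h(\epsilon)}=\operatorname{ad}_{\xi_h}$ and $\tfrac{d}{d\epsilon}|_0 \operatorname{Ad}_{k(\epsilon)h(\epsilon)}=\operatorname{ad}_{\xi_k+\xi_h}$, the stated formula drops out. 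Finally, $\operatorname{ad}^*$ is obtained purely by duality, enforcing
\begin{align*}
\bigl\langle \operatorname{ad}^*_{(\xi_h,\xi_k)}(m,n),(\eta_h,\eta_k)\bigr\rangle &=\langle m,\operatorname{ad}_{\xi_h}\eta_h\rangle + \langle n,\operatorname{ad}_{\xi_h+\xi_k}(\eta_h+\eta_k)-\operatorname{ad}_{\xi_h}\eta_h\rangle,
\end{align*}
expanding bilinearly, and re-collecting terms in $\eta_h$ and $\eta_k$ separately; the $\eta_h$-coefficient yields $\operatorname{ad}^*_{\xi_h}m+\operatorname{ad}^*_{\xi_k}n$ and the $\eta_k$-coefficient yields $\operatorname{ad}^*_{\xi_h+\xi_k}n$. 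The main obstacle throughout is purely notational: keeping the $\phi$-compositions and their derivatives straight in the first AD/Ad step; once that is handled, the remaining steps are essentially algebraic bookkeeping.
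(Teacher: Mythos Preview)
Your proposal is correct and follows essentially the same approach as the paper's own proof in the appendix: compute the inverse and $\operatorname{AD}$ directly from the group multiplication and homomorphism properties, differentiate at the identity using the Leibniz rule to obtain $\operatorname{Ad}$, differentiate again for $\operatorname{ad}$, and then dualise via the split pairing to read off $\operatorname{ad}^*$. The only cosmetic difference is that the paper expands the conjugations fully before differentiating (writing out $k_1(h_1 k_2(s) h_1^{-1})(h_1 h_2(s) h_1^{-1} k_1^{-1} h_1 h_2(s)^{-1} h_1^{-1})$ and differentiating term by term), whereas you work a bit more structurally with $\operatorname{Ad}_{kh}$ and its derivative $\operatorname{ad}_{\xi_k+\xi_h}$; the content is the same.
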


The resulting Euler-Poincaré equations are then given by 
\begin{align}
\pm \partial_t (m,n) + \operatorname{ad}^{*}_{(\xi_h,\xi_k)} (m,n) = (\pm \partial_t m + \operatorname{ad}^{*}_{\xi_h} m+ \operatorname{ad}_{\xi_k}^{*} n,\pm\partial_t n + \operatorname{ad}_{(\xi_k + \xi_h)}^{*} n  ), \label{eq:euler_arnold}
\end{align}
which we can write in co-adjoint matrix form and Lie-Poisson form respectively as, 
\begin{align}
\mp
\partial_t \begin{bmatrix}
    m\\
    n
\end{bmatrix} 
=
\left(\operatorname{ad}^{*}_{(\xi_h,\xi_k)} (m,n) \right)^{T}= \begin{bmatrix}
\operatorname{ad}_{\xi_h}^*(\cdot) & \operatorname{ad}_{\xi_k}^*(\cdot)\\
0&\operatorname{ad}_{\xi_h+\xi_k}^{*} (\cdot)
\end{bmatrix}
\begin{bmatrix}
    m\\
    n
\end{bmatrix} 
= 
\begin{bmatrix}
\operatorname{ad}_{(\cdot)}^*m & \operatorname{ad}_{(\cdot)}^*n \\
\operatorname{ad}_{(\cdot)}^*n& \operatorname{ad}_{(\cdot)}^*n
\end{bmatrix}
\begin{bmatrix}
    \xi_h\\
    \xi_k
\end{bmatrix}. \label{eq:euler_arnold_matrix}
\end{align}
Here, the positive/negative choice arises from whether the group acts on the left or the right. The diffeomorphism group acts by right action and corresponds to the upper choice of operation of $\pm,\mp$ in \cref{eq:euler_arnold,eq:euler_arnold_matrix}, and will be chosen as default throughout the rest of this paper for readability. One can diagonalise the co-adjoint matrix representation and the Lie-Poisson representation, as follows,
\begin{align}
- \partial_t
\begin{bmatrix}
    m-n\\
    n
\end{bmatrix} 
=
\begin{bmatrix}
\operatorname{ad}_{\xi_h}^*(\cdot) & 0\\
0&\operatorname{ad}_{\xi_h+\xi_k}^{*} (\cdot)
\end{bmatrix}
\begin{bmatrix}
    m-n\\
    n
\end{bmatrix}
= 
\begin{bmatrix}
\operatorname{ad}_{(\cdot)}^*(m-n) & 0 \\
        0       & \operatorname{ad}_{(\cdot)}^*n
\end{bmatrix}
\begin{bmatrix}
    \xi_h\\
    \xi_h+\xi_k
\end{bmatrix}. 
\label{eq:euler_arnold_matrix2}
\end{align}

\smallskip

Adjoint actions in the more general case of semidirect product Lie groups and Lie algebra's when $H\neq K$ are also well known. Hochschild \cite{hochschild1965structure} in particular has a discussion specifically on the Ad-joint relationship, describing the induced adjoint representation of the semidirect product Lie group $H\ltimes K$ on its Lie algebra, when $H\neq K$. One can consult the books \cite{ortega2013momentum,marsden2007hamiltonian} for a more detailed exposition into how the semidirect product of different groups fits into reduction and mechanics.

\section{Numerical Experiments}\label{sec:Numerical methodology}
We now turn back to the previous two examples, the Camassa-Holm equation and Euler's equation, to numerically model their self-semidirect product extensions in comparison with the direct product. 

\subsection{Geodesics on \texorpdfstring{$\operatorname{Diff}(S^1)\ltimes \operatorname{Diff}(S^1)$}{}}\label{sec:geodesics_on_diff_semi_diff}
The geodesic associated with $\operatorname{Diff}(S^1)\ltimes \operatorname{Diff}(S^1)$, with metric $\ell(u,v) = \frac{1}{2}\left( ||u||^2_{H^1_{\alpha}}+||v||^2_{H^1_{\beta}}\right)$, can be derived by combining \cref{sec: semidirect products,sec: volume preserving diffeomorphisms} and corresponds to a semidirect product coupled Camassa Holm - Camassa Holm (CHCH) system
\begin{align}
\partial_t m + (um)_x +u_x m + (vn)_x +v_x n &= 0,\label{eq:SDP_CH_1}\\
m - (u-\alpha^2 u_{xx}) &=0,\label{eq:SDP_CH_2}\\
\partial_t n  + ((u+v)n)_x +(u+v)_x n &= 0,\label{eq:SDP_CH_3}\\
n -( v - \beta^2 v_{xx}) &=0.\label{eq:SDP_CH_4}
\end{align}

This set of equations has been considered in \cite{escher2011euler}, where it is shown analytically that the system admits peakon solutions. We wish to verify this numerically and determine what type of nonlinear interactions occur between the peakons. By starting from non-peakon initial conditions, we investigate whether peakons are an emergent behaviour of the coupled system. We consider a Monolythic Continuous Galerkin(CG) Finite Element(FE) energy-preserving discretisation of the weak form of the above CHCH system, following the numerical approach in \cite{bendall2021perspectives}. 

Let $L^2(\Omega)$ denote the space of square integrable
functions on domain $\Omega$, whose norm and inner product are denoted $||\cdot||_{2},\langle a,b\rangle_{2}$. Let $H^1(\Omega)$ to refer to the Sobolev space $W^{1,2}(\Omega)$, whose norm and inner product are denoted $||\cdot||$, $\langle \cdot , \cdot \rangle_{H^1}$. Let $H^1_{\alpha}(\Omega)$ refer to a weighted Hilbert space with norm $||\cdot||_{H^1_{\alpha}} := \int u^2 + \alpha^2  u^2_x \mathrm{d}x$ with a weighted inner product
\begin{align}
\langle q, u\rangle_{H^{1}_{\alpha}} = \int qu + \alpha^2 q_x u_x \mathrm{d}x.
\end{align}

We define a weak solution $(m,u,n,v)\in [H^1(S^1)]^4 = V$ of the semidirect product system, by testing against arbitrary functions $(p,q,s,t) \in V$.
\begin{align}
\langle \partial_t m , p\rangle_{2} + \langle m u_x + n v_x, p \rangle_{2} - \langle  m u + n v, p_x \rangle_{2} &= 0,\quad \forall p \in H^1(S^1),\\
\langle m, q\rangle_{2} - \langle u, q\rangle_{H^{1}_{\alpha}}   &= 0, \quad \forall q \in H^1(S^1),\\
\langle \partial_t n, t\rangle_{2} + \langle  (u_x +v_x) n, t\rangle_{2} - \langle (u + v)n,t_x\rangle_{2} &= 0, \quad \forall t \in H^1(S^1),\\
\langle  n,s \rangle_{2} - \langle v,s\rangle_{H^{1}_{\beta}}  &= 0,\quad \forall s\in H^1(S^1).
\end{align}

To discretise in time we define the intermediate (implicit) variables, 
$m^{av} = \frac{1}{2}(m^{n+1} + m^{n})$,
$u^{av} = \frac{1}{2}(u^{n+1} + u^{n})$,
$n^{av} = \frac{1}{2}(n^{n+1} + n^{n})$,
$v^{av} = \frac{1}{2}(v^{n+1} + v^{n})$
and use a monolithic trapesium-rule time integration with first-order continuous Galerkin finite elements $
\text{CG}_{1}(\mathbb{T}):=\left\{v \in H^1(\mathbb{T}):\left.v\right|_K \in P_1(K),\quad \forall K \in \mathcal{M}_h\right\}
$. Where $\mathcal{M}_h$ denote an equispaced mesh of the periodic line $S^1=\mathbb{T}$ and $P_{i}(K)$ denote the $i$-th order polynomial, on the cell $K\in M$. 

\begin{align}
\langle ( m^{n+1}-m^{n})\Delta t^{-1} , p\rangle_{2} + \langle m^{av} u^{av}_x + n^{av} v^{av}_x, p \rangle_{2} - \langle  m^{av} u^{av} + n^{av} v^{av}, p_x \rangle_{2} &= 0,\quad \forall p \in \operatorname{CG}_1(\mathbb{T}),\\
\langle m^{n+1}, q\rangle_{2} - \langle u^{n+1}, q\rangle_{H^{1}_{\alpha}} &= 0, \quad \forall q \in \operatorname{CG}_1(\mathbb{T}),\\
\langle (n^{n+1}-n^{n})\Delta t^{-1}, t\rangle_{2} + \langle  (u^{av}_x +v^{av}_x) n^{av}, t\rangle_{2} - \langle (u^{av} + v^{av})n^{av},t_x\rangle_{2} &= 0, \quad \forall t \in \operatorname{CG}_1(\mathbb{T}),\\
\langle n^{n+1},s \rangle_{2} - \langle  v^{n+1},s\rangle_{H^{1}_{\beta}}  &= 0,\quad \forall s\in \operatorname{CG}_1(\mathbb{T}).
\end{align}

Where we have let $(m,u,n,v)\in \operatorname{CG}_1(\mathbb{T})^4$, and the test functions
$(q,p,s,t)\in \operatorname{CG}_1(\mathbb{T})^4$, and the inner products are understood in a discrete setting. 
Newton's iteration treats nonlinearity.
The initial conditions considered are 
$v = \frac{1}{5}\cosh(x-\frac{403}{15})$, $u =\frac{1}{2}\cosh(x-\frac{203}{15})$, and defined on the periodic interval $[0,40]$. The trapezium rule preserves quadratic invariants and the total energy of the system. The trapezium rule is doubly mutually conjugate to the symplectic implicit midpoint rule \cite{hairer2006geometric}. We consider parameters $\alpha = 1$, $\beta = \frac{1}{2}$, and use $1000$ elements, with a time increment of $\Delta t =0.0005$, over $nt = 64000$ timesteps. In order to facilitate a comparison, we also numerically solve the uncoupled direct-product $\operatorname{Diff}(S^1)\times \operatorname{Diff}(S^1)$ system
\begin{align}
\partial_t m + (u m)_x +u_x m &= 0,\label{eq:DP1}\\
m - (u-\alpha^2 u_{xx}) &=0,\label{eq:DP2}\\
\partial_t n  + (v n)_x +v_x n &= 0,\label{eq:DP3}\\
n -( v - \beta^2 v_{xx}) &=0,\label{eq:DP4}
\end{align}
using the same numerical method, parameters and initial conditions. Snapshots of the \textbf{\emph{direct product}} system \cref{eq:DP1,eq:DP2,eq:DP3,eq:DP4} are shown in \cref{FIG:DP_snapshots} as profiles in $(u,v)$ as the solution evolves, at $t = 0,4,8,...,32$. Snapshots of the \textbf{\emph{Semidirect}} product coupled system \cref{eq:SDP_CH_1,eq:SDP_CH_2,eq:SDP_CH_3,eq:SDP_CH_4} are shown in \cref{FIG:snapshots} as the solution evolves, at $t = 0,4,8,...,32$. Spacetime solution and energy-time plots of the \textbf{\emph{direct product}} system are plotted in \cref{fig:DP_CH-spacetime-and-energy}. Spacetime solution and energy-time plots of the \textbf{\emph{semidirect product}} system are plotted in \cref{fig:CH-spacetime-and-energy}.

\begin{figure}[H]
\centering
\begin{subfigure}[t]{0.3\textwidth}
\centering
\includegraphics[width=1\textwidth]{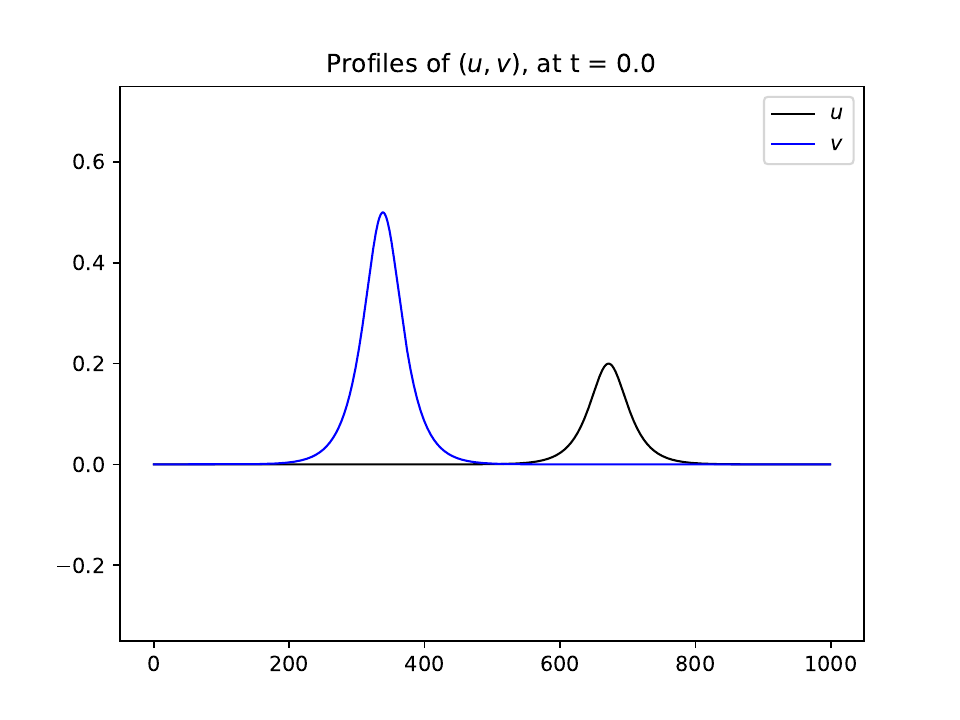}
\caption{}\label{FIG:DP_profile0_CH}
\end{subfigure}
\begin{subfigure}[t]{0.3\textwidth}
        \centering
\includegraphics[width=1\textwidth]{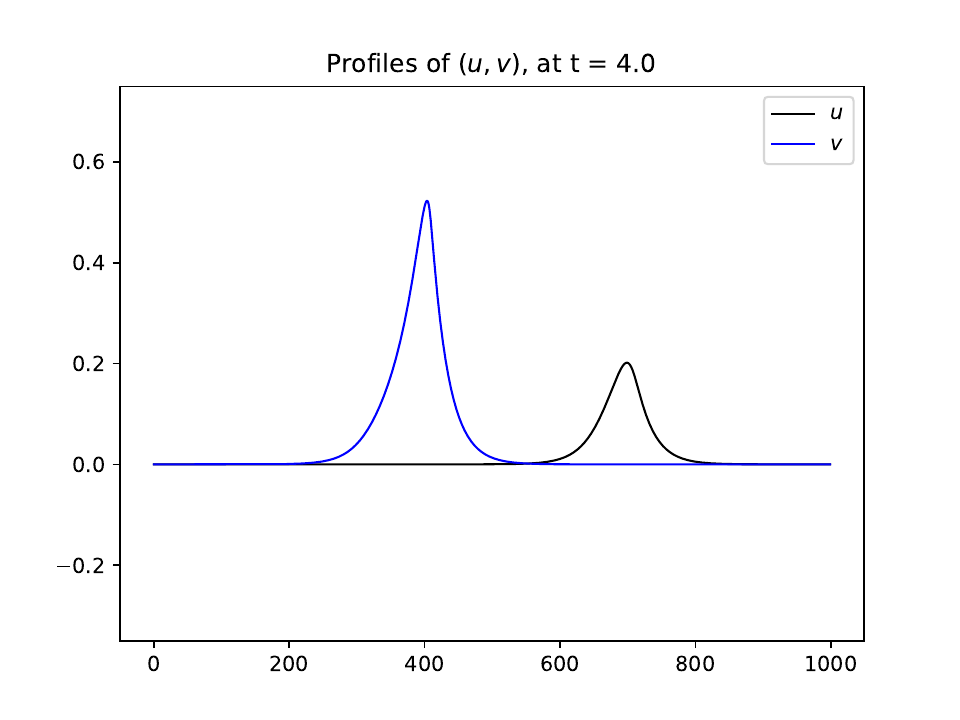}
\caption{}\label{FIG:DP_profile1_CH}
\end{subfigure}
\begin{subfigure}[t]{0.3\textwidth}
        \centering
\includegraphics[width=1\textwidth]{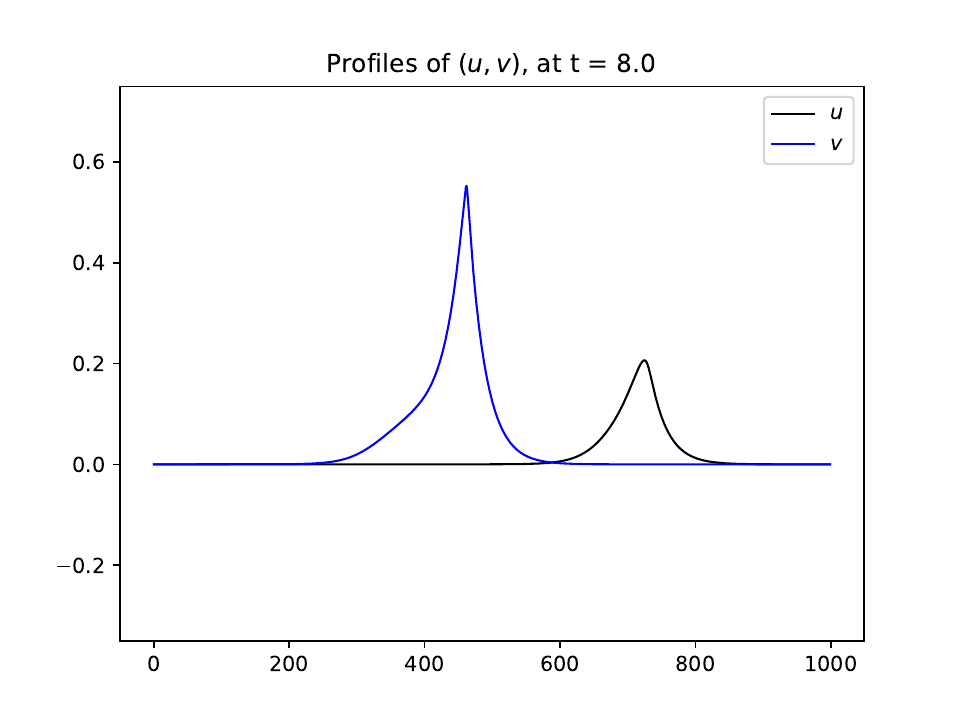}
\caption{}
\label{FIG:DP_profile2_CH}
\end{subfigure}
\begin{subfigure}[t]{0.3\textwidth}
        \centering
\includegraphics[width=1\textwidth]{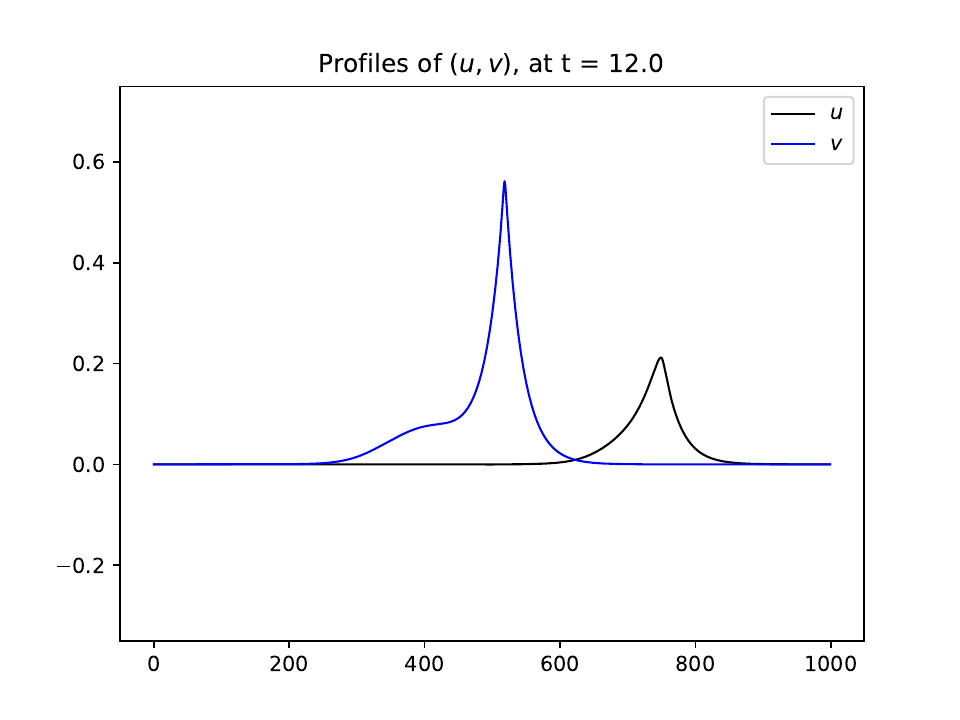}
\caption{}
\label{FIG:DP_profile3_CH}
\end{subfigure}
\begin{subfigure}[t]{0.3\textwidth}
        \centering
\includegraphics[width=1\textwidth]{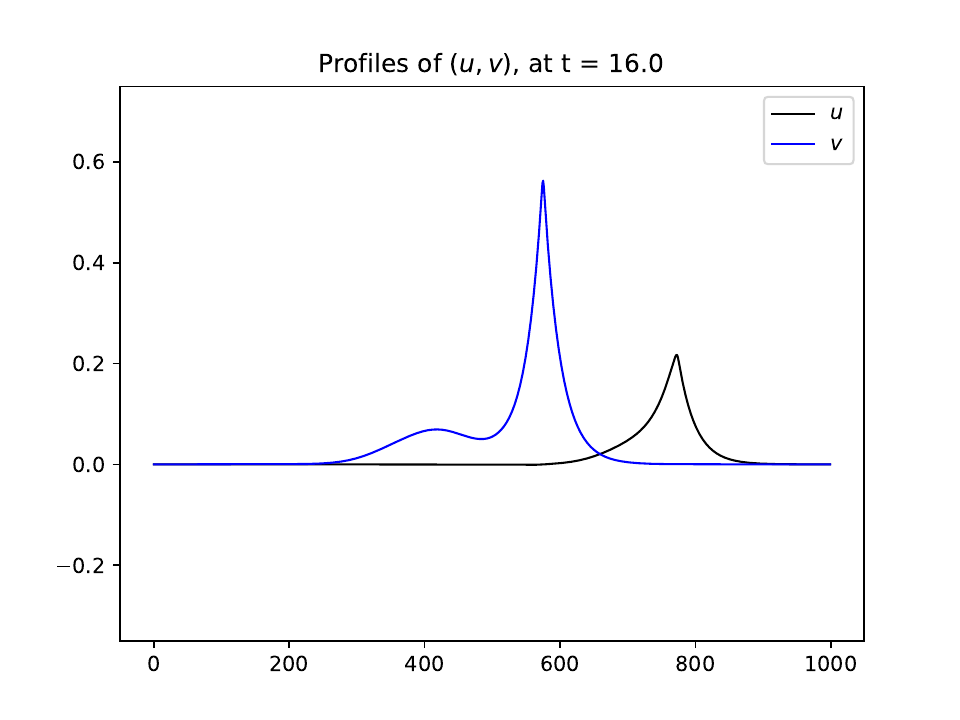}
\caption{}
\label{FIG:DP_profile4_CH}
\end{subfigure}
\begin{subfigure}[t]{0.3\textwidth}
        \centering
\includegraphics[width=1\textwidth]{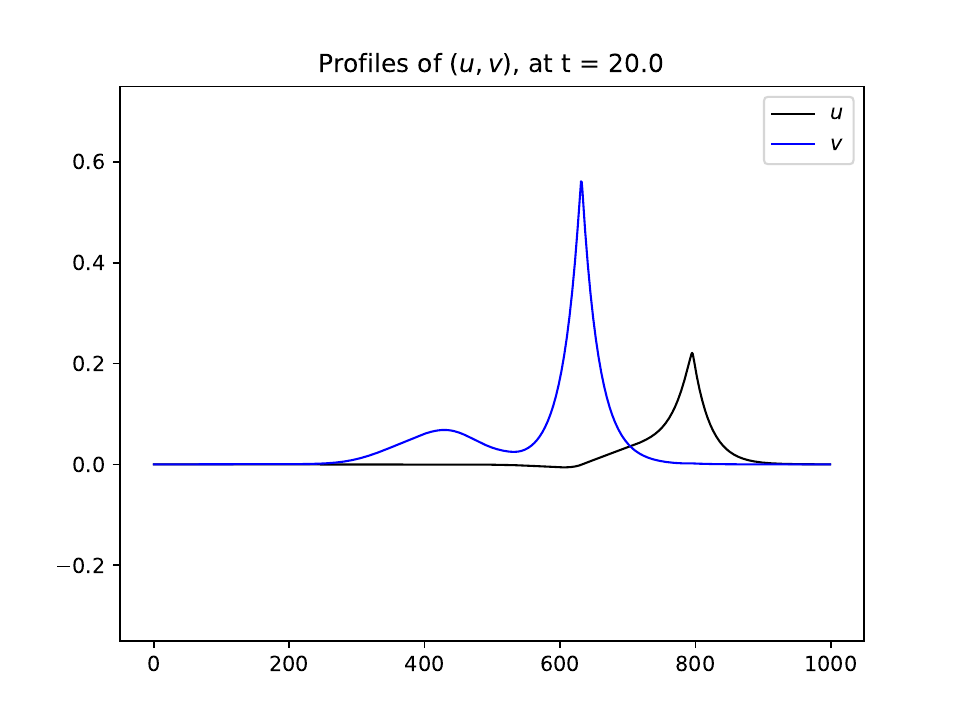}
\caption{}
\label{FIG:DP_profile5_CH}
\end{subfigure}
\begin{subfigure}[t]{0.3\textwidth}
        \centering
\includegraphics[width=1\textwidth]{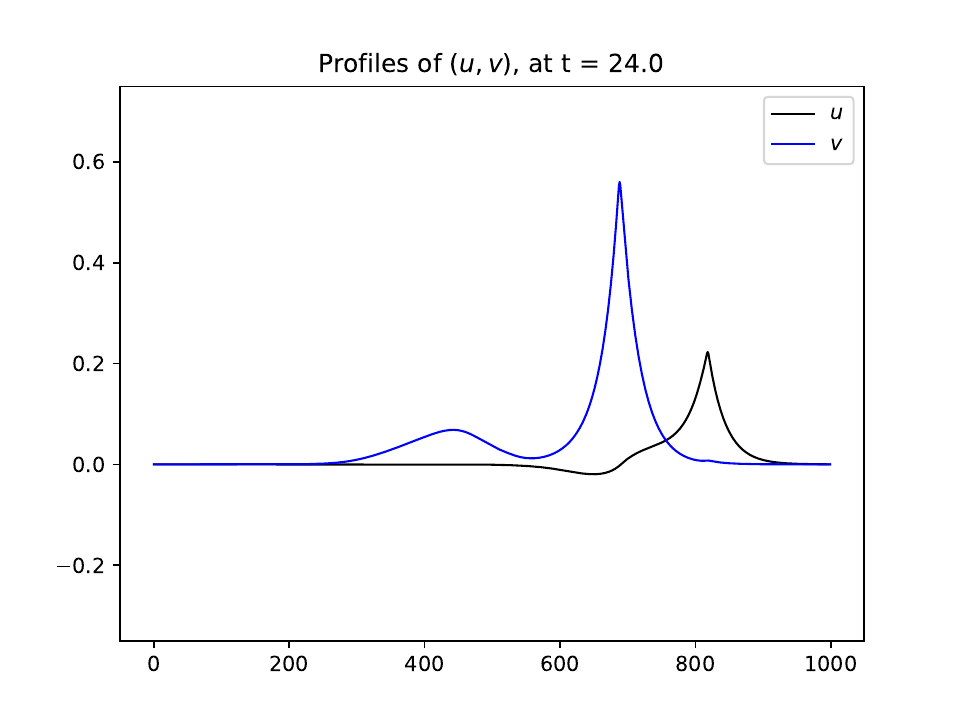}
\caption{}
\label{FIG:DP_profile6_CH}
\end{subfigure}
\begin{subfigure}[t]{0.3\textwidth}
        \centering
\includegraphics[width=1\textwidth]{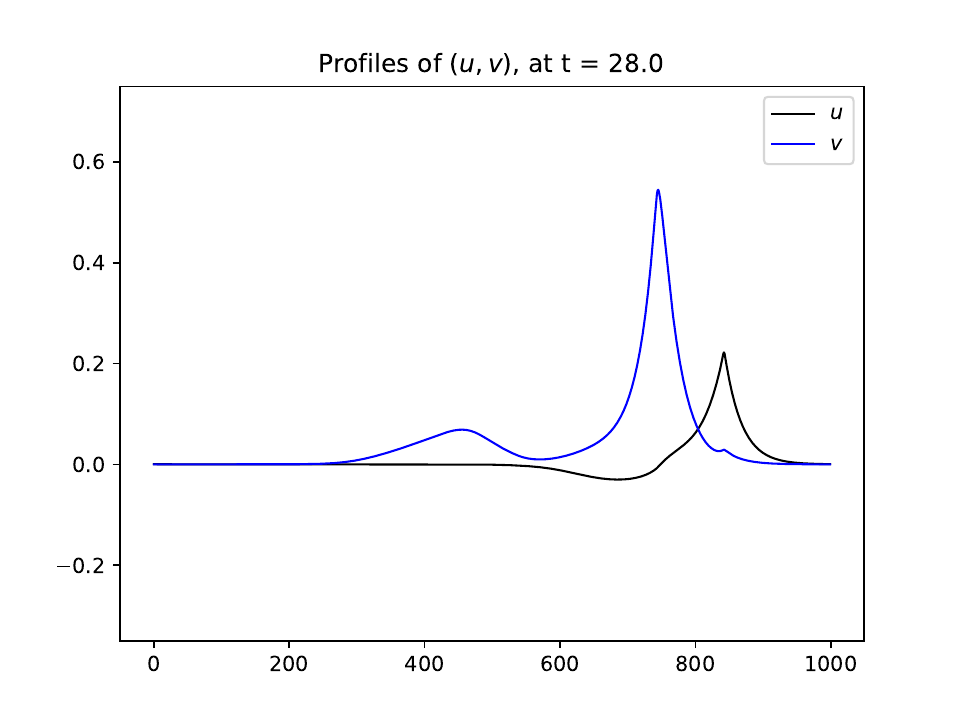}
\caption{}
\label{FIG:DP_profile7_CH}
\end{subfigure}
\begin{subfigure}[t]{0.3\textwidth}
        \centering
\includegraphics[width=1\textwidth]{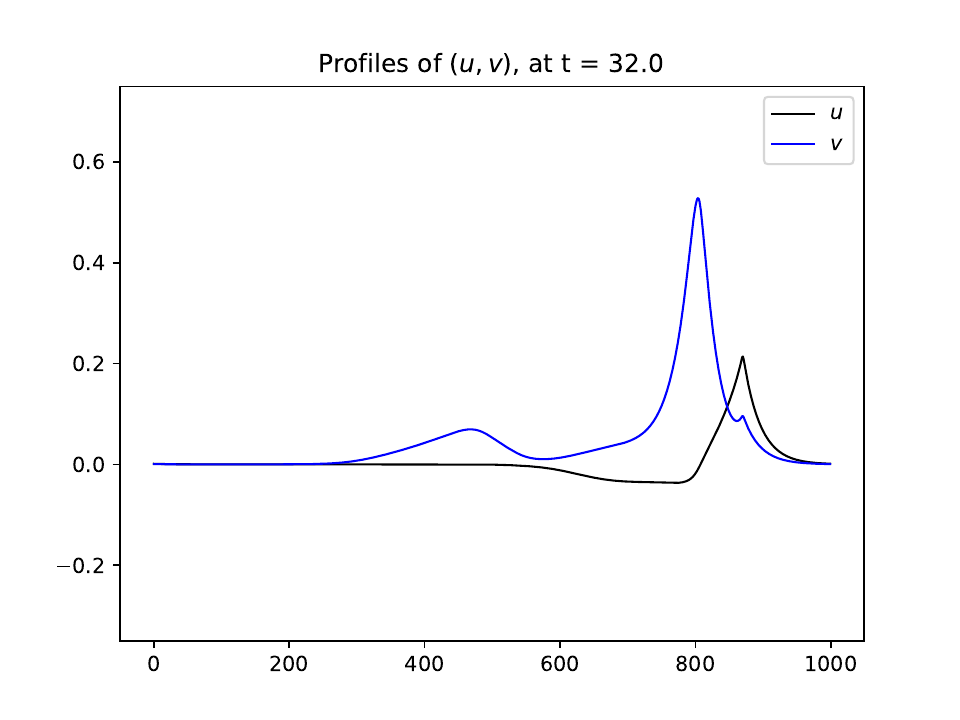}
\caption{}
\label{FIG:DP_profile8_CH}
\end{subfigure}
\caption{Finite element solution of the CH-CH \textbf{\emph{direct product}} system \cref{eq:DP1,eq:DP2,eq:DP3,eq:DP4} at $t=0,4,8,...,32$ in the variables $(u,v)$, plotted in black and blue respectively. We observe two uncoupled solutions of the Camassa-Holm equation, both initial conditions separate out into peakons. In blue the taller peakon travels faster and almost overtakes the slower peakon in black. }
\label{FIG:DP_snapshots}
\end{figure}

\begin{figure}[H]
\centering
\begin{subfigure}[H]{0.495\textwidth}
\centering
\includegraphics[width=1\textwidth]{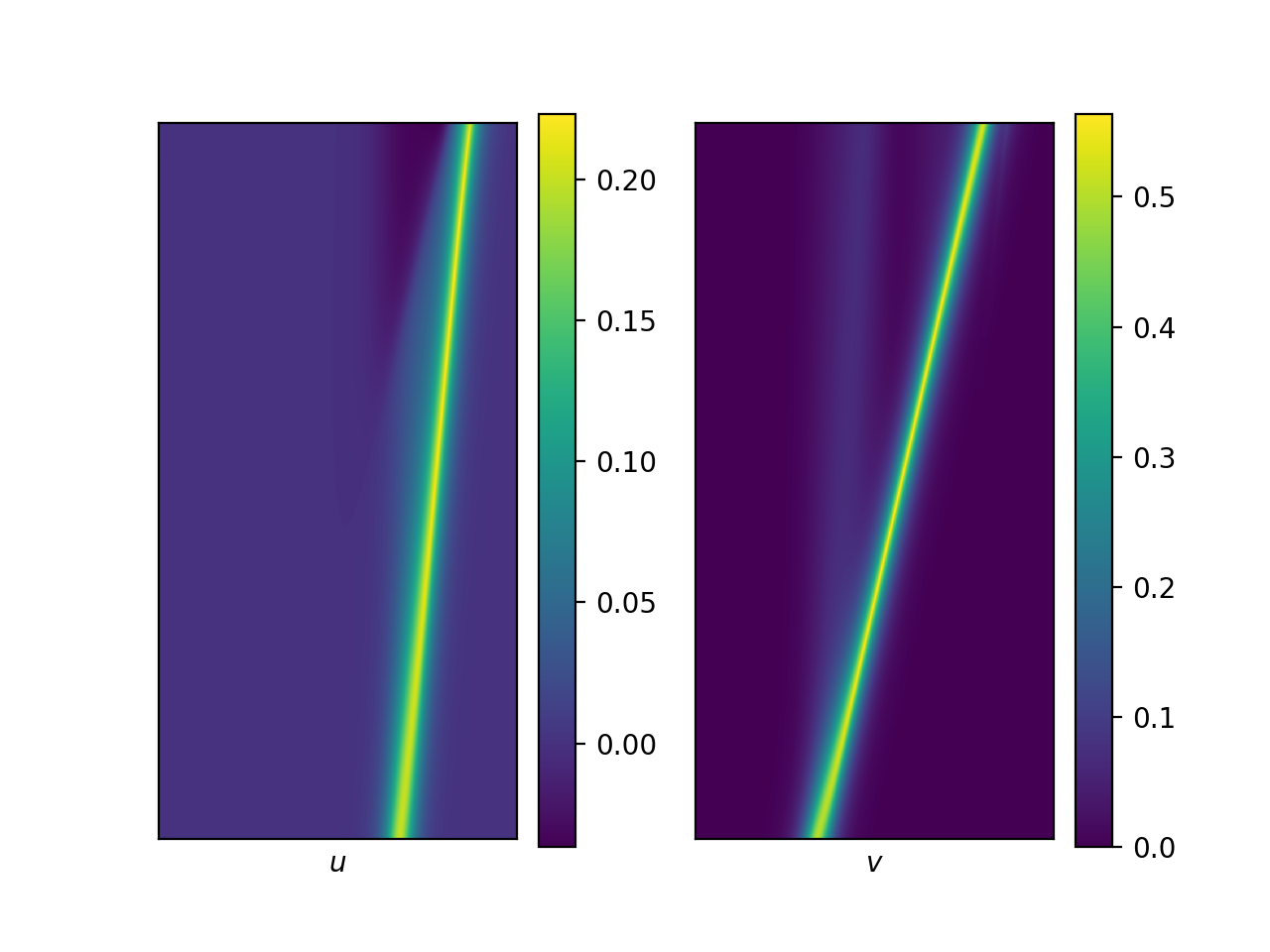}
\caption{Space-Time plot, of the \textbf{\emph{direct product}} CH-CH system $(u,v)$ respectively.}
\label{fig:DP_Space-Time plot}
\end{subfigure}
\begin{subfigure}[H]{0.495\textwidth}
\centering
\includegraphics[width=1\textwidth]{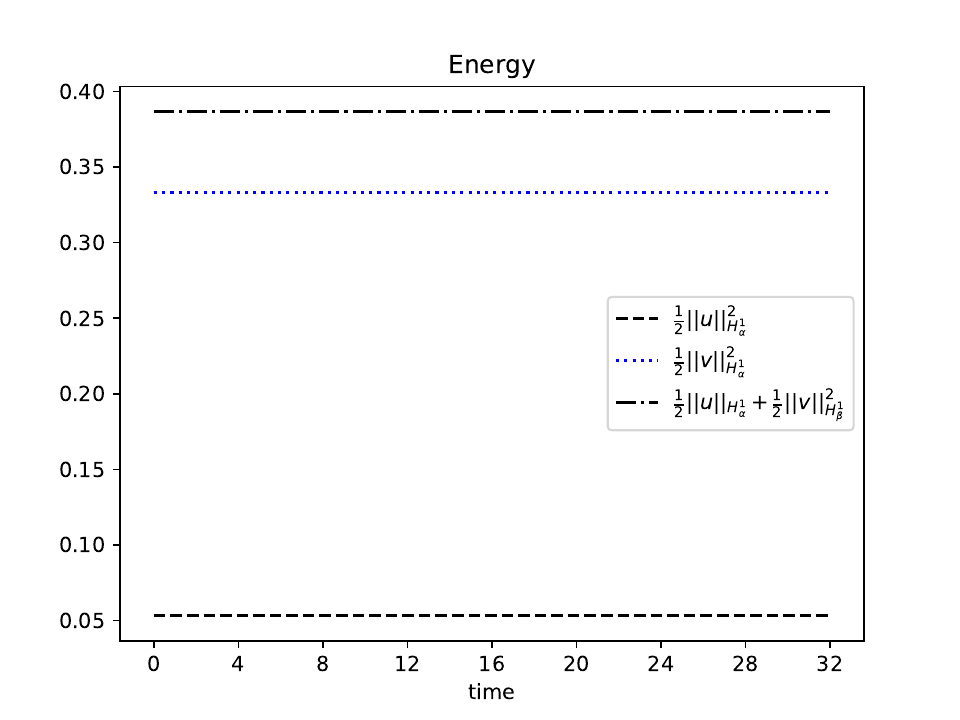}
\caption{Energy over $t\in[0,32]$ of \textbf{\emph{direct product}} CH-CH system}
\label{fig:DP_Energy}
\end{subfigure}
\caption{\textbf{\emph{Direct product}} uncoupled CH-CH system.  Space time plot \cref{fig:DP_Space-Time plot} and Energy plot \cref{fig:DP_Energy}. We observe energy preservation in each system. } 
\label{fig:DP_CH-spacetime-and-energy}
\end{figure}

\begin{figure}[H]
\centering
\begin{subfigure}[t]{0.3\textwidth}
\centering
\includegraphics[width=1\textwidth]{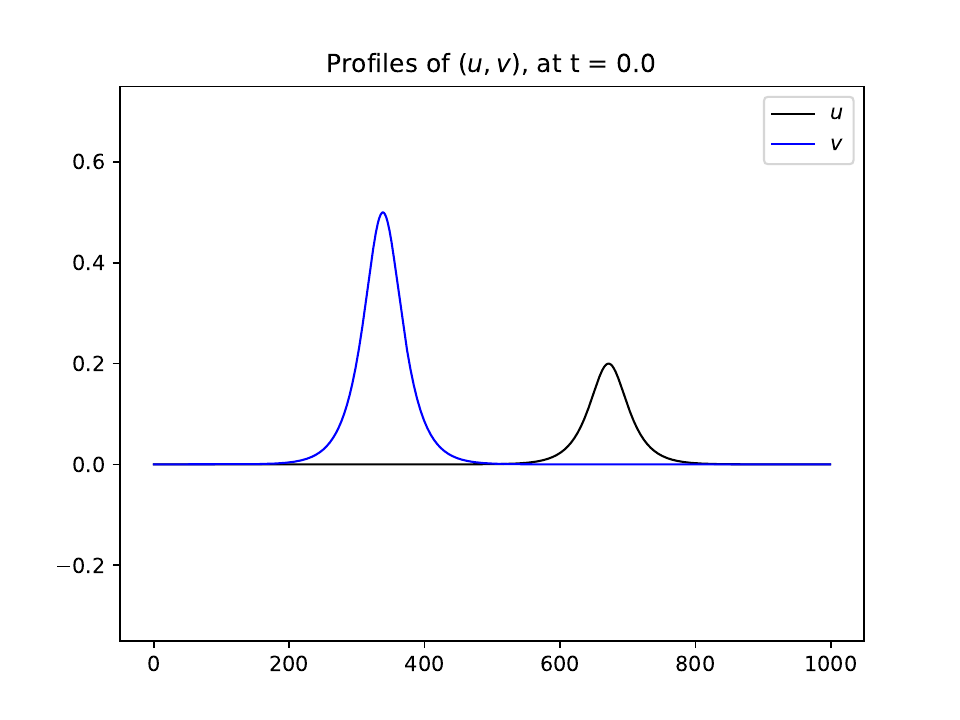}
\caption{}\label{FIG:profile0}
\end{subfigure}
\begin{subfigure}[t]{0.3\textwidth}
        \centering
\includegraphics[width=1\textwidth]{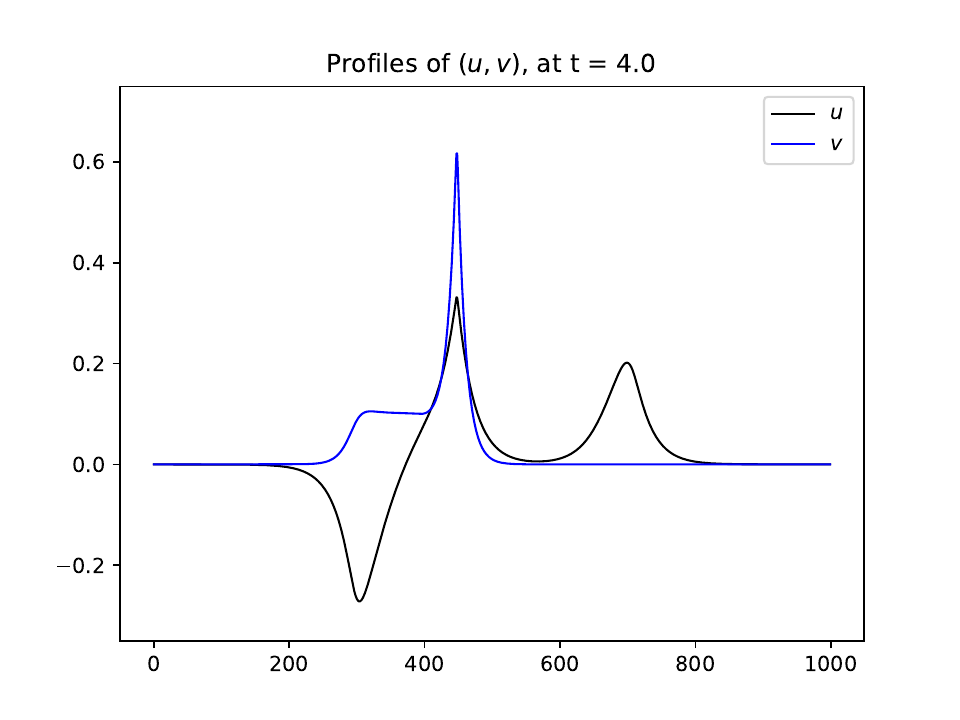}
\caption{}\label{FIG:profile1}
\end{subfigure}
\begin{subfigure}[t]{0.3\textwidth}
        \centering
\includegraphics[width=1\textwidth]{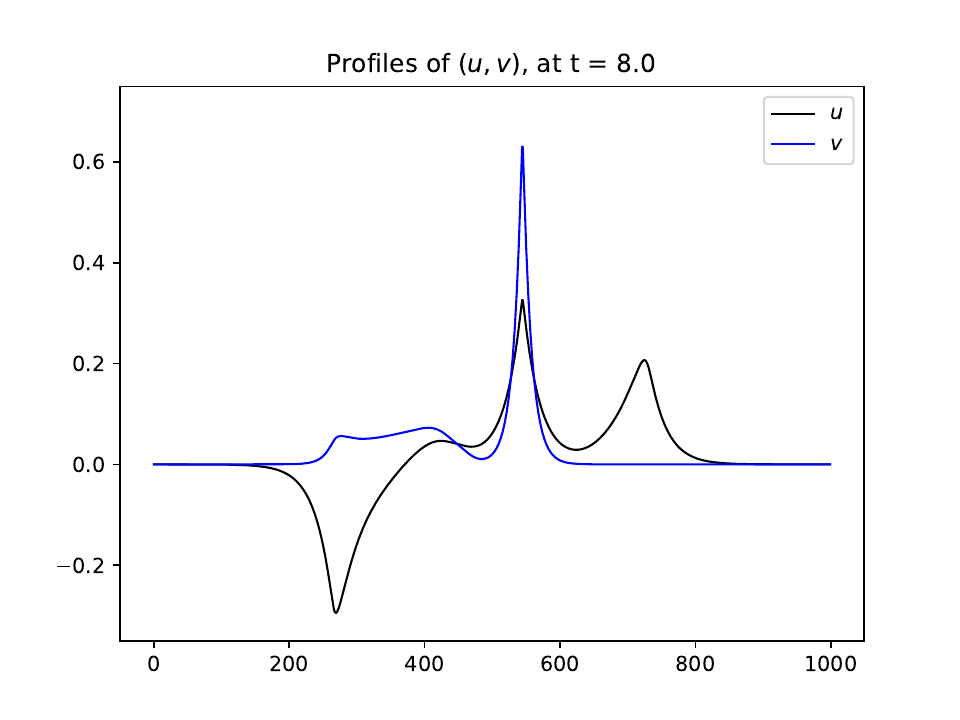}
\caption{}
\label{FIG:profile2}
\end{subfigure}
\begin{subfigure}[t]{0.3\textwidth}
        \centering
\includegraphics[width=1\textwidth]{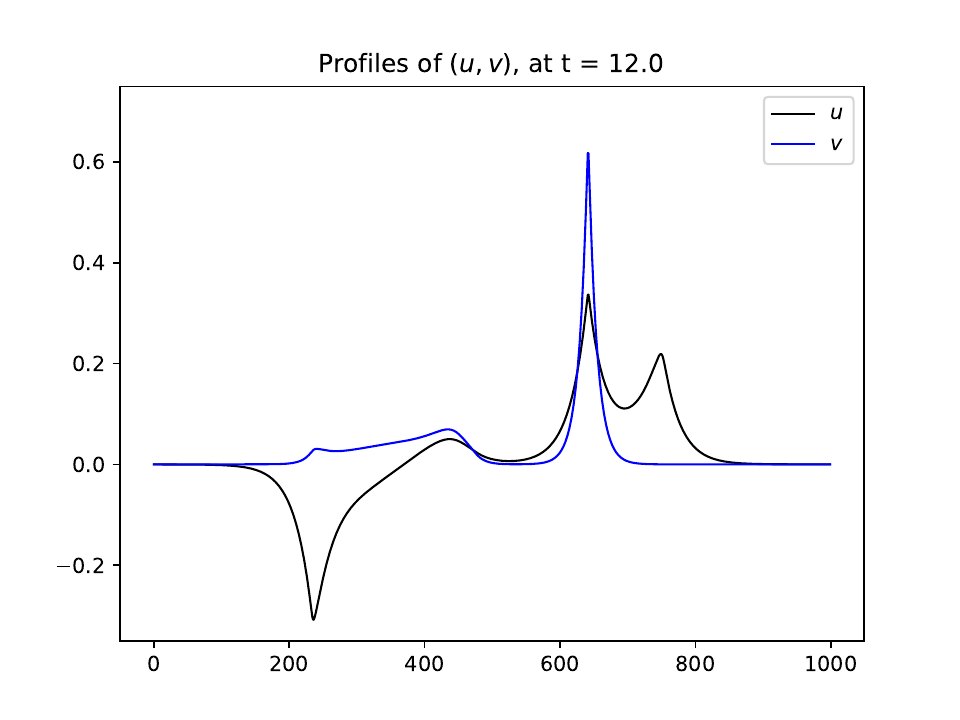}
\caption{}
\label{FIG:profile3}
\end{subfigure}
\begin{subfigure}[t]{0.3\textwidth}
        \centering
\includegraphics[width=1\textwidth]{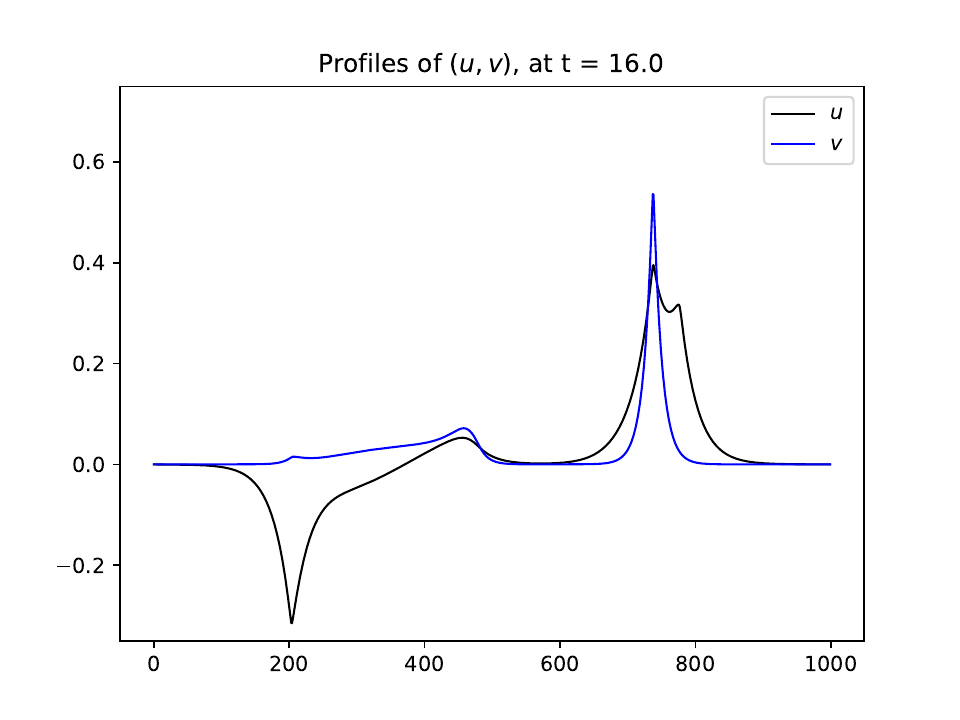}
\caption{}
\label{FIG:profile4}
\end{subfigure}
\begin{subfigure}[t]{0.3\textwidth}
        \centering
\includegraphics[width=1\textwidth]{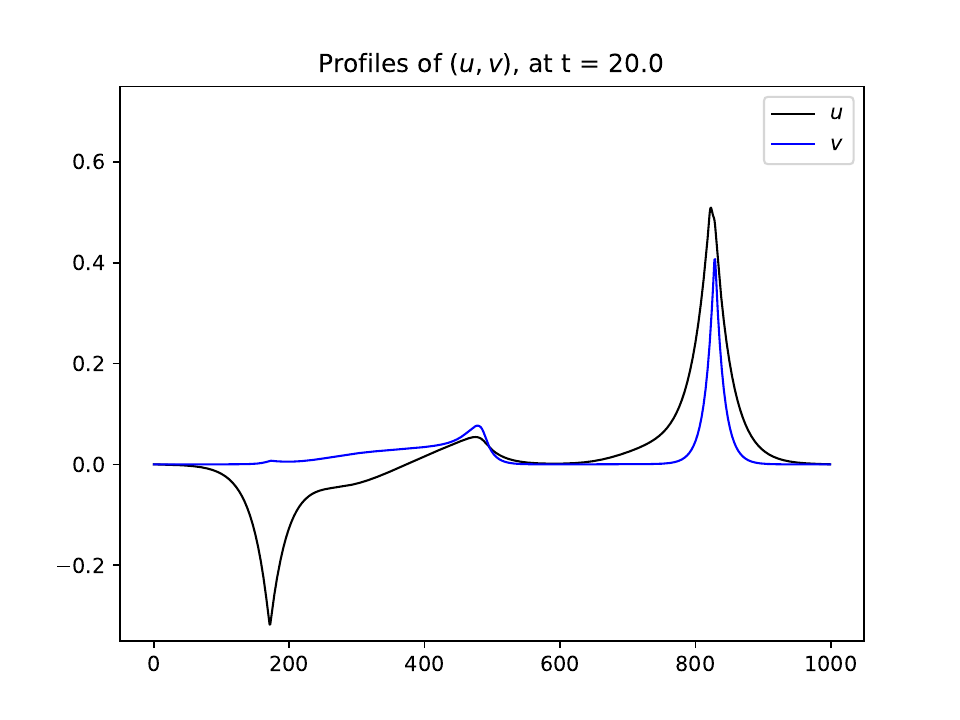}
\caption{}
\label{FIG:profile5}
\end{subfigure}
\begin{subfigure}[t]{0.3\textwidth}
        \centering
\includegraphics[width=1\textwidth]{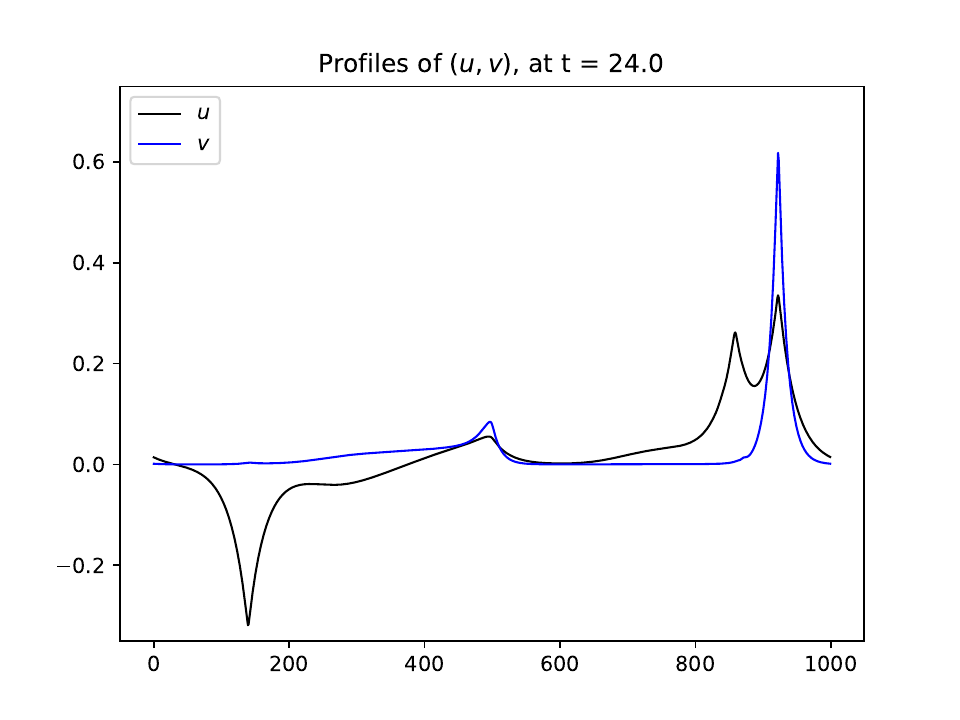}
\caption{}
\label{FIG:profile6}
\end{subfigure}
\begin{subfigure}[t]{0.3\textwidth}
        \centering
\includegraphics[width=1\textwidth]{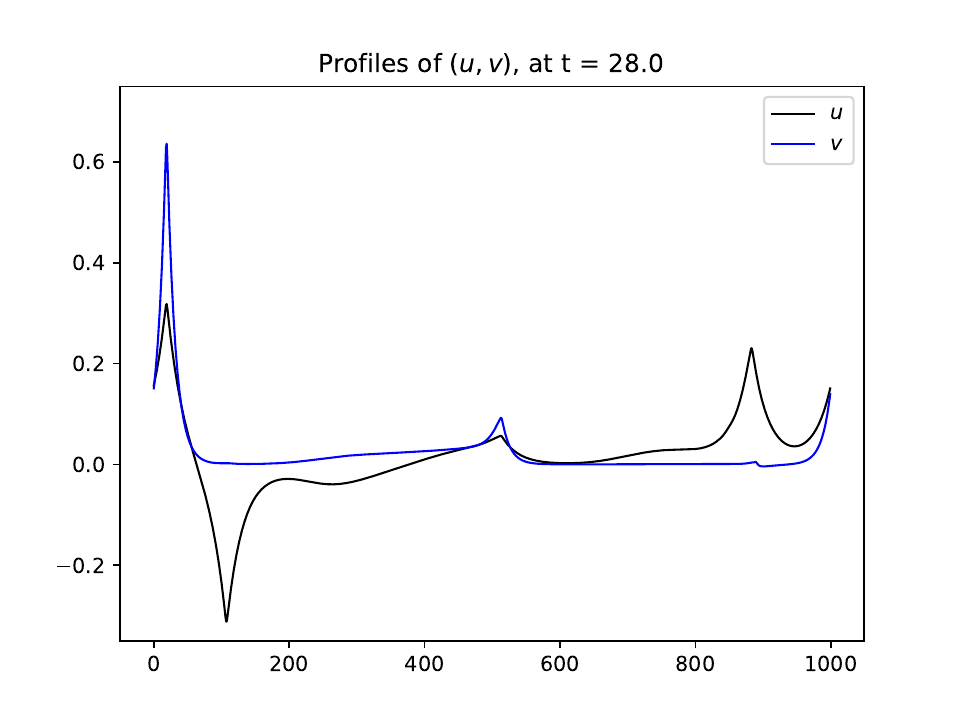}
\caption{}
\label{FIG:profile7}
\end{subfigure}
\begin{subfigure}[t]{0.3\textwidth}
        \centering
\includegraphics[width=1\textwidth]{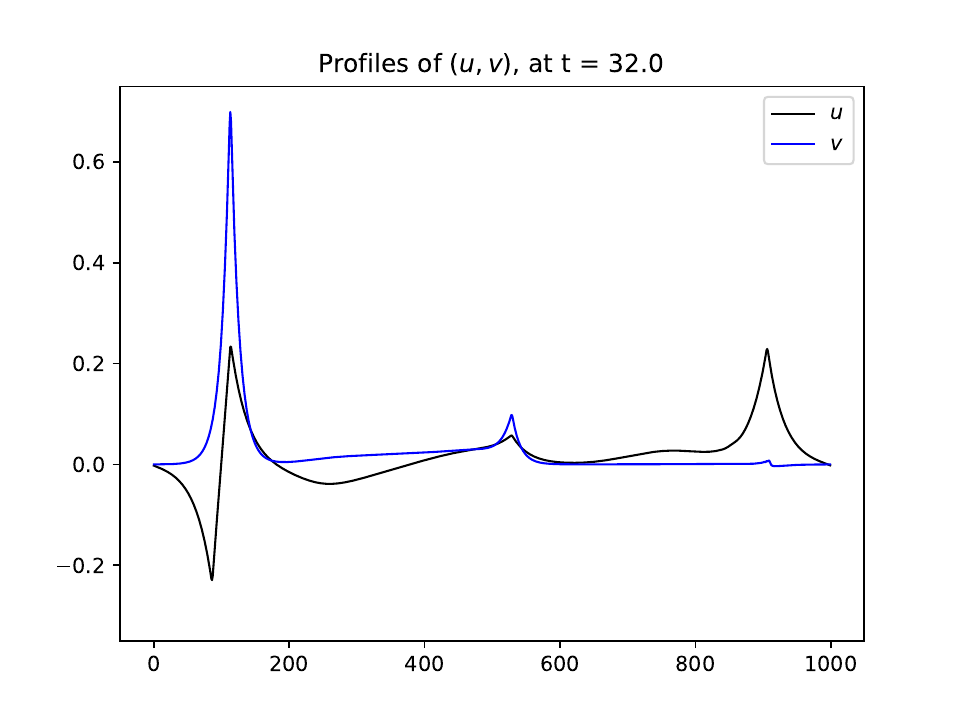}
\caption{}
\label{FIG:profile8}
\end{subfigure}
\caption{Finite element solution of the CH-CH \textbf{\emph{semidirect product}} system \cref{eq:SDP_CH_1,eq:SDP_CH_2,eq:SDP_CH_3,eq:SDP_CH_4} at $t=0,4,8,...,32$ in the variables $(u,v)$, plotted in black and blue respectively. At first, the blue initial condition $v$ created positive and negative peakons in the black $u$ variable. Later in the run, a peakon in blue and black travels with the same speed. }
\label{FIG:snapshots}
\end{figure}

\begin{figure}[H]
\centering
\begin{subfigure}[H]{0.495\textwidth}
\centering
\includegraphics[width=1\textwidth]{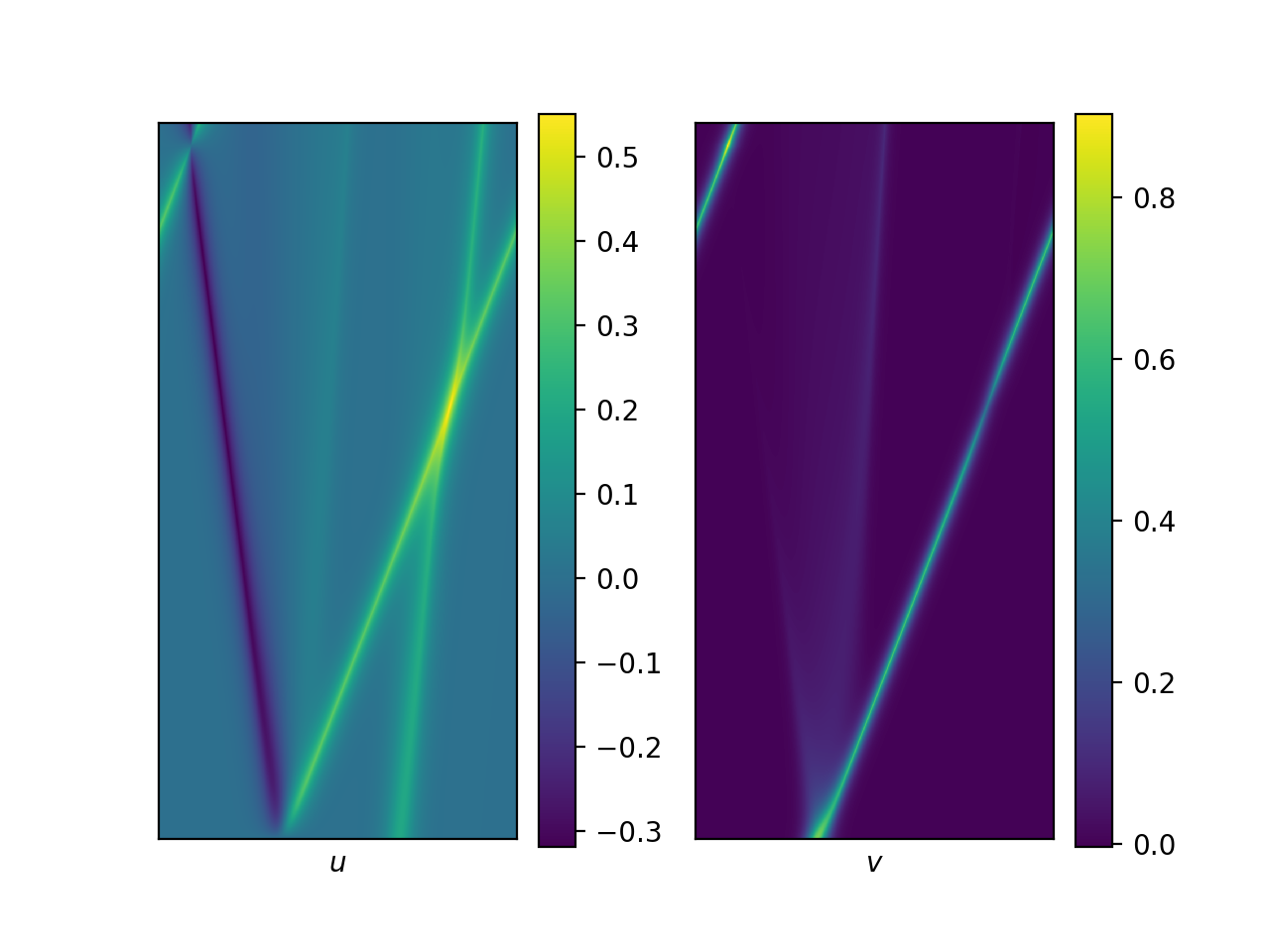}
\caption{Space-Time plot, of the coupled \textbf{\emph{semidirect product}} CH-CH system $(u,v)$ respectively.}
\label{fig:Space-Time plot}
\end{subfigure}
\begin{subfigure}[H]{0.495\textwidth}
\centering
\includegraphics[width=1\textwidth]{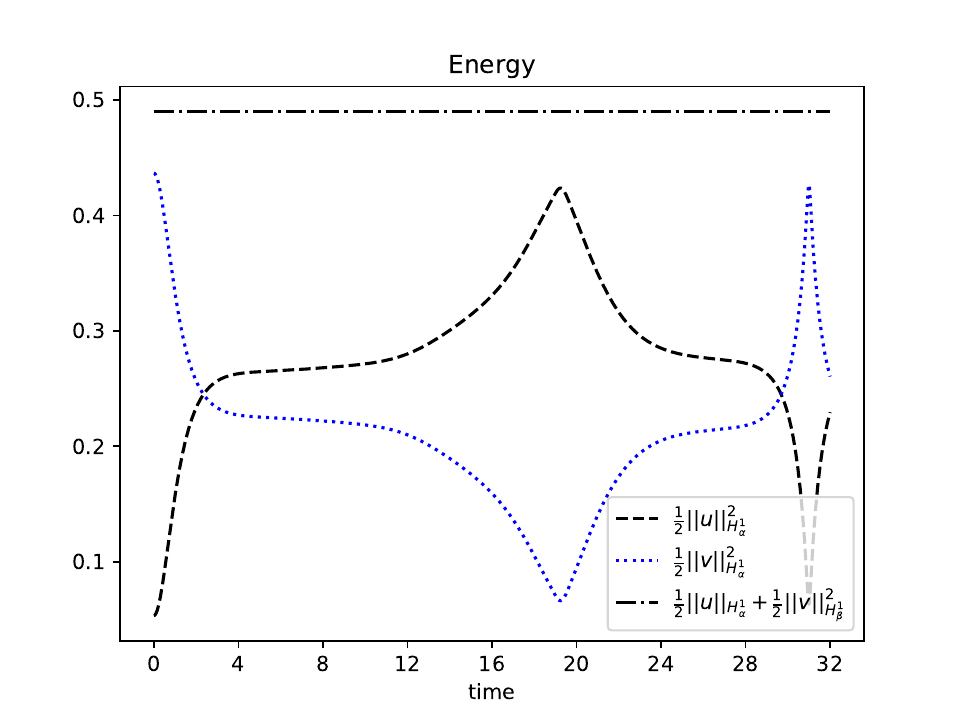}
\caption{Energy over $t\in[0,32]$ of \textbf{\emph{semidirect product}} CH-CH system}
\label{fig:Energy}
\end{subfigure}
\caption{\textbf{\emph{Semidirect product}}-CH-CH system.  Space time plot \cref{fig:Space-Time plot} and Energy plot \cref{fig:Energy}. We observe total energy preservation, and transfers of energy between the system in $u$ and $v$ associated with the nonlinear coupling.}
\label{fig:CH-spacetime-and-energy}
\end{figure}

\subsubsection{Results}
In \cref{fig:Energy} we plot discrete analogues of the energies $\frac{1}{2}||u||_{H^{1}_{\alpha}}^{2}$, $\frac{1}{2}||v||_{H^{1}_{\beta}}^{2}$, and $\frac{1}{2}||u||_{H^{1}_{\alpha}}^{2}+\frac{1}{2}||v||_{H^{1}_{\beta}}^{2}$ for the semidirect product coupled system. In \cref{fig:Energy}, the semidirect product coupled system's total energy is constant due to the discrete properties of the integrator \cite{hairer2006geometric}. In \cref{fig:Energy} we see several energy transfers as the coupled system interacts. 
In \cref{fig:DP_Energy} we plot discrete analogues of the energies $\frac{1}{2}||u||_{H^{1}_{\alpha}}^{2}$, $\frac{1}{2}||v||_{H^{1}_{\beta}}^{2}$, and $\frac{1}{2}||u||_{H^{1}_{\alpha}}^{2}+\frac{1}{2}||v||_{H^{1}_{\beta}}^{2}$ for the direct product uncoupled CH-CH system. In \cref{fig:DP_Energy}, all energies are preserved due to the integrator. No energy transfers are observed in the direct product system \cref{fig:DP_Energy}. 

Initially, for the semidirect product coupled system (on the interval $t\in [0,2]$), the $v$ system contributed energy to the $u$ system (\cref{fig:Energy}), creating positive and negative peakons (observed in \cref{FIG:profile1} at $t=4$). In \cref{FIG:profile1} $v$ created peakons and anti-peakons in $u$ whereas the $u$ system didn't transfer analogously to $v$, since $v$ acts on $u$ by forcing and $u$ acts on $v$ by transport. In comparing between the semidirect product coupled system \cref{FIG:SDP_snapshots} and the direct product uncoupled system in \cref{FIG:DP_snapshots} at $t=4$. We observe that the direct product system does not have a positive and negative peakon in the $u$ variable. Furthermore, we also observe the semidirect product coupled system has a peakon with a larger height and thinner width than the direct product system. 

In \cref{fig:Space-Time plot} we plot a space-time diagram of the coupled system, where the coupled peakon behaviour is transparent, in comparison to the direct product system space-time diagram in \cref{fig:DP_Space-Time plot}.

In \cite{escher2011euler} semidirect product CHCH equations are analytically shown to theoretically support peakons solutions from peakon initial condition. In \cref{FIG:snapshots} we observe peakon solutions, as an emergent phenomenon from non-peakon initial conditions, in addition to other nonlinear interactions. Further peakon interactions occur throughout this run, and most notably, a peakon in both the $u$ and $v$ variables appears to be travelling together. This is objectively interesting, as the peakons in $u$ and $v$ have different heights and would characteristically travel at different speeds in an uncoupled system, as observed in \cref{FIG:DP_snapshots}. 

We next investigate geodesics on $\operatorname{SDiff}(\mathbb{T}^2)\ltimes \operatorname{SDiff}(\mathbb{T}^2)$ numerically to see if this strong coupling behaviour is observed more generally in geodesic equations arising from a semidirect product of groups. 

\subsection{Geodesics on \texorpdfstring{$\operatorname{SDiff}(\mathbb{T}^2)\ltimes \operatorname{SDiff}(\mathbb{T}^2)$}{}}\label{sec:geodesics_on_Sdiff_semi_Sdiff}

Let $\ell = 1/2(||u||_2^2 + ||v||_2^2)$, the geodesic equations corresponding to the group $\operatorname{SDiff}(\mathbb{T}^2)\ltimes \operatorname{SDiff}(\mathbb{T}^2)$, can be derived by combining \cref{sec: semidirect products,sec: volume preserving diffeomorphisms} and in 2D gives a semidirect product Euler-Euler system which can be expressed in
vorticity-stream-function formulation as follows
\begin{align}
\partial_t\omega_1 + (u_1\cdot \nabla) \omega_1 + (u_2\cdot \nabla) \omega_2 = 0,\label{eq:semidirect product euler 1}\\
\partial_t\omega_2 + (u_1+u_2)\cdot \nabla \omega_2 = 0. \label{eq:semidirect product euler 2}
\end{align}
Where $u_i = -\nabla^{\perp}\psi_i$, $\omega_i = -\Delta \psi_i$ for $i=1,2$. Before solving numerically, we show that the vorticity is bounded for all time by the initial conditions.
The system can be written following \cref{eq:euler_arnold_matrix2} as 
\begin{align}
\partial_t(\omega_1-\omega_2) + (u_1 \cdot \nabla) (\omega_1-\omega_2) &= 0\\
\partial_t\omega_2 +  (u_1+u_2) \cdot \nabla \omega_2 &=0
\end{align}
This gives the following conservation laws for arbitrary $f,g$
\begin{align}
C_{f,g} = \int f(\omega_1-\omega_2) + g(\omega_2) d^2x.
\end{align}
Which imply for any norm or seminorm
\begin{align}
||\omega_1(t)-\omega_2(t)|| = ||\omega_1(0)-\omega_2(0)||,\quad 
||\omega_2(t)|| = ||\omega_2(0)||.
\end{align}
Therefore one has control over the growth of $\omega_1$, $\omega_2$. In particular using
\begin{align}
||\omega_1(t)|| \leq ||\omega_1(t)-\omega_2(t)||+||\omega_2(t)|| \leq ||\omega_1(0)-\omega_2(0)||+||\omega_2(0)||\leq ||\omega_1(0)||+2||\omega_2(0)||
\end{align}
the maximum value of vorticity can be bounded for all time by the initial condition as follows
\begin{align}
||\omega_1(t)||_{\infty} \leq ||\omega_1(0)||_{\infty} + 2||\omega_2(0)||_{\infty},  \quad ||\omega_2(t)||_{\infty}\leq ||\omega_2(0)||_{\infty}.
\end{align}
Furthermore, one could solve monotonically by discretising in the variables $(\omega_1-\omega_2,\omega_2)$, as to ensure such properties discretely, for example, using ideas presented in \cite{woodfield2024new}, using a linear invariant, nonlinear method. We numerically solve this equation in a different manner, the elliptic equations are solved spectrally using the discrete Fourier transform. The streamfunction is projected onto cell corners and then a discrete skew gradient is taken for the velocity field. This ensures a discrete divergence free property of the flow using a C-grid formulation. A flux form high-order upwind bias approach is taken to the reconstruction of the fluxes, ensuring small scale features can be resolved. The initial conditions are defined as follows,
\begin{align}
    \omega_1(x,y,0) &= 1/4 (2\cos(4 \pi x ) + \cos(2 \pi y))\label{eq:ic_q1}\\
    \omega_2(x,y,0) &=
\begin{cases}
\cos(0.5 \pi r_1/R )^2\quad \text{where}\quad r_1<R \\
\cos(0.5 \pi r_2 / R)^2\quad \text{where}\quad r_2<R \\
0,\quad\text{else}.
\end{cases} \label{eq:ic_q2}
\end{align}
Where
\begin{align}
r_1 = \sqrt{\left(x-\frac{1}{2}-\frac{1}{8}\right)^2+\left(y-\frac{1}{2}\right)^2},\quad
r_2 = \sqrt{\left(x-\frac{1}{2}+\frac{1}{8}\right)^2+\left(y-\frac{1}{2}\right)^2},\quad R = 1/8.
\end{align}

In order to facilitate a comparison, we also solve the direct product uncoupled Euler system
\begin{align}
\partial_t\omega_1 + (u_1\cdot \nabla) \omega_1 = 0,\label{eq:Euler-DP1}\\
\partial_t\omega_2 + (u_2\cdot \nabla) \omega_2 = 0, \label{eq:Euler-DP2}
\end{align}
where $u_i = -\nabla^{\perp}\psi_i$, $\omega_i = -\Delta \psi_i$ for $i=1,2$, using the same numerical method.
\smallskip
\subsubsection{Results}\label{sec:results Euler.}

For the direct product uncoupled system \cref{eq:Euler-DP1,eq:Euler-DP2}, under the initial conditions \cref{eq:ic_q1,eq:ic_q2}, fluid one undergoes mixing, and in fluid two, initially compact vortices circle one another and slowly merge. Snapshots of the \textbf{\emph{direct product}} solution at the time points at $t=0,4,...,32$ are plotted in \cref{FIG:DP_snapshots_and_qquantities} where uncoupled solutions to Euler's equation appear, and do not interact. In \cref{FIG:DP_snapshots_and_qquantities} we plot the energy and integral of vorticity where no exchange of energy or vorticity is observed. The mimetic numerical method preserves local and global conservation of vorticity, but allows energy dissipation. However, on the time window of interest, energy doesn't dissipate noticeably due to the accuracy of the scheme as seen in \cref{FIG:DP_snapshots_and_qquantities}.

In \cref{FIG:SDP_snapshots} we plot snapshots of vorticity $(\omega_1,\omega_2)$ for the \textbf{\emph{semidirect product}} system in \cref{eq:semidirect product euler 1,eq:semidirect product euler 2}, at the times $t = 0,4,..., 32$. The solution in $\omega_2$ is transported by the first fluid and prevents the merging of the vortices. The vorticity in fluid one $\omega_1$ appears to couple to the vorticity in $\omega_2$. In \cref{fig:SDP_energy-Space-Time plot}, energy is initially transferred between the fluids. 
In \cref{FIG:SDP_snapshots}, we observe vorticity in fluid one $\omega_1$ growing and becoming more concentrated in regions of high vorticity as compared with \cref{FIG:DP_snapshots_and_qquantities}.
We also observe in \cref{FIG:SDP_snapshots} higher vorticity values than in \cref{FIG:DP_snapshots_and_qquantities}, 
this is in part explained by the more relaxed growth bounds on $||\omega_1(t)||_{\infty} \leq ||\omega_1(0)||_{\infty} + 2||\omega_2(0)||_{\infty},$ for the semidirect product system, as compared with the growth bounds on the direct product system $||\omega_1(t)||_{\infty} \leq ||\omega_1(0)||_{\infty}$. Similar to the discussion on semidirect product coupled CHCH equations, initially on $t\in [0,4]$, we see that $\omega_2$ created positive and negative vorticity in $\omega_1$. In \cref{FIG:SDP_snapshots} fluid one transports the vorticity in fluid two, and fluid two forces fluid one. 
\smallskip

In summary, geodesics of semidirect product systems tend to couple flow dynamics, we observed peakons carry peakons and regions of high vorticity carry other regions of high vorticity. Fluid one transports momentum in fluid two, and fluid two forces fluid one, such that the momentum difference is transported by fluid two. 

\begin{figure}[H]
\centering
\begin{subfigure}[t]{0.495\textwidth}
\centering
\includegraphics[width=1\textwidth]{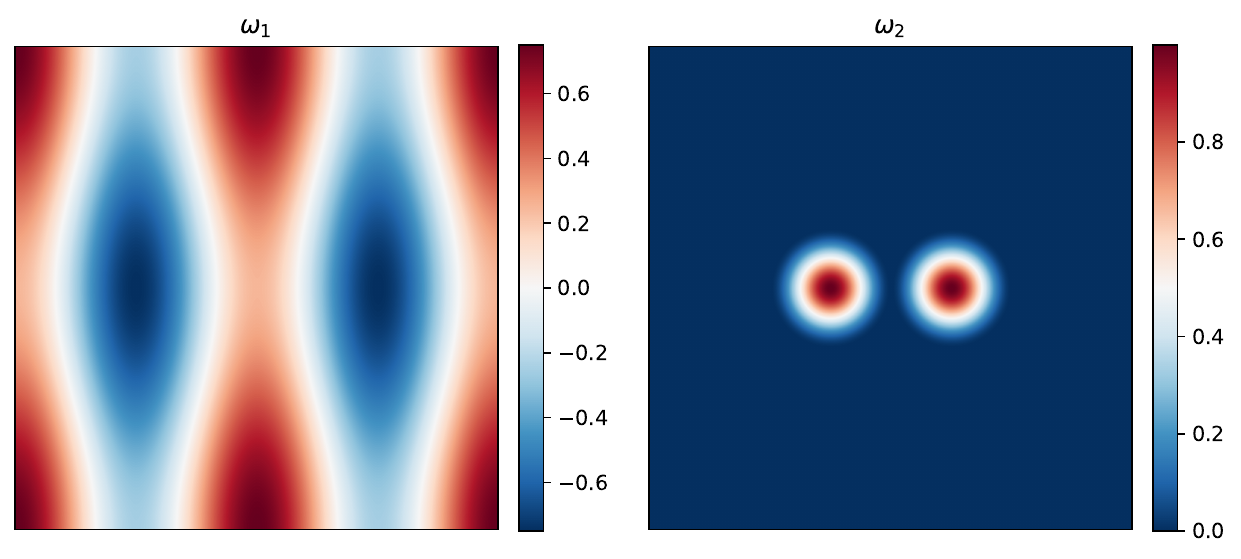}
\caption{We have $t=0$, $(\omega_1,\omega_2)$ snapshots.}\label{FIG:DP_profile0}
\end{subfigure}
\begin{subfigure}[t]{0.495\textwidth}
        \centering
\includegraphics[width=1\textwidth]{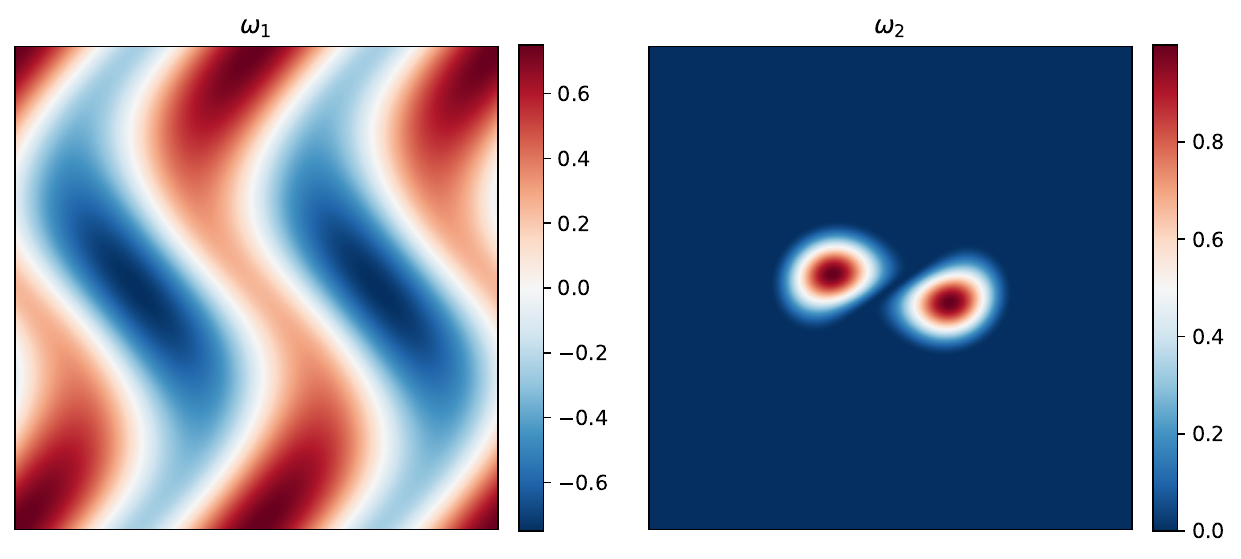}
\caption{We have $t=4$, $(\omega_1,\omega_2)$ snapshots.}\label{FIG:DP_profile1}
\end{subfigure}
\begin{subfigure}[t]{0.495\textwidth}
        \centering
\includegraphics[width=1\textwidth]{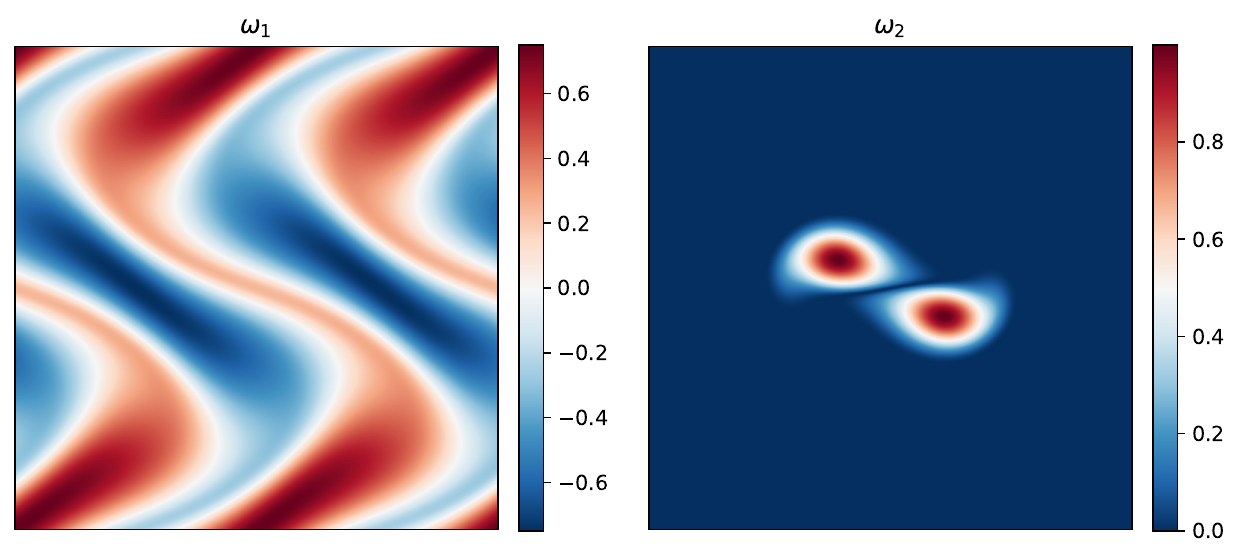}
\caption{We have $t=8$, $(\omega_1,\omega_2)$ snapshots.}
\label{FIG:DP_profile2}
\end{subfigure}
\begin{subfigure}[t]{0.495\textwidth}
        \centering
\includegraphics[width=1\textwidth]{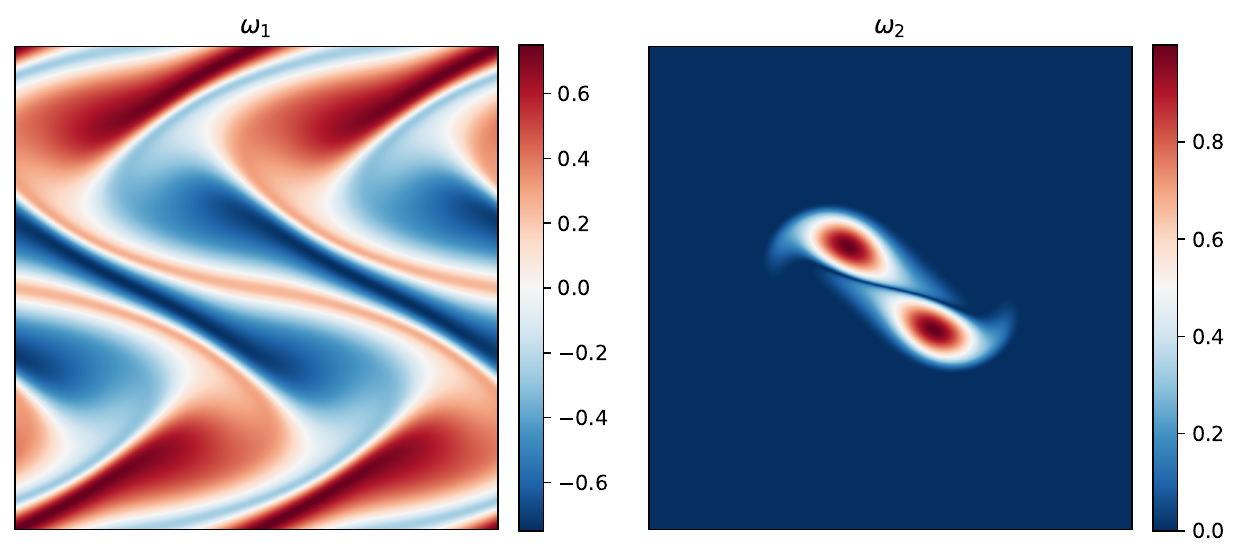}
\caption{We have $t=12$, $(\omega_1,\omega_2)$ snapshots.}
\label{FIG:DP_profile3}
\end{subfigure}
\begin{subfigure}[t]{0.495\textwidth}
        \centering
\includegraphics[width=1\textwidth]{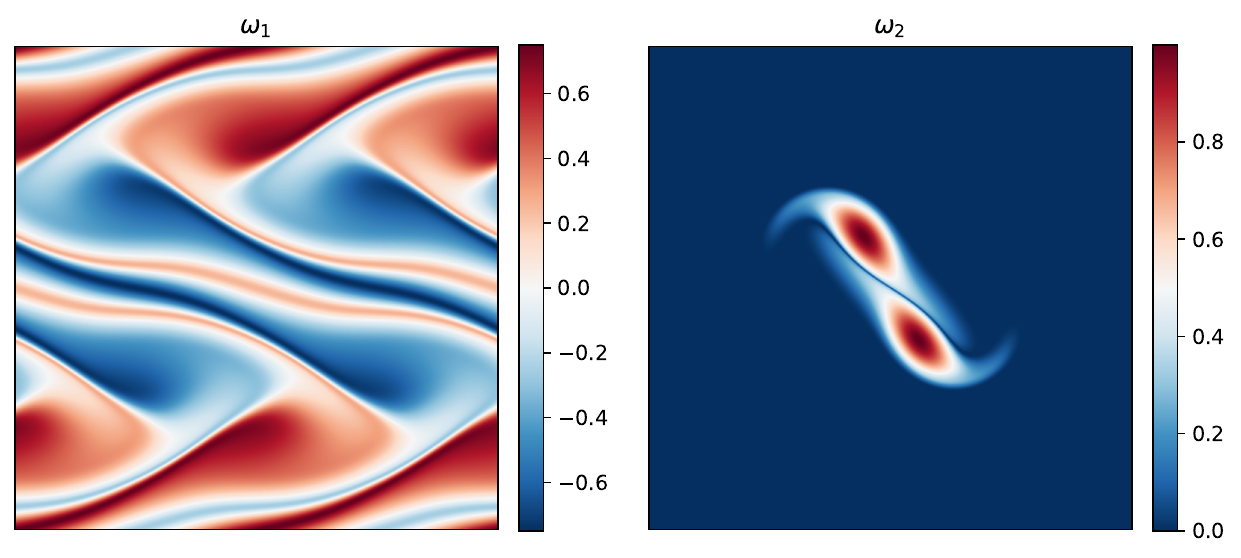}
\caption{We have $t=16$, $(\omega_1,\omega_2)$ snapshots.}
\label{FIG:DP_profile4}
\end{subfigure}
\begin{subfigure}[t]{0.495\textwidth}
        \centering
\includegraphics[width=1\textwidth]{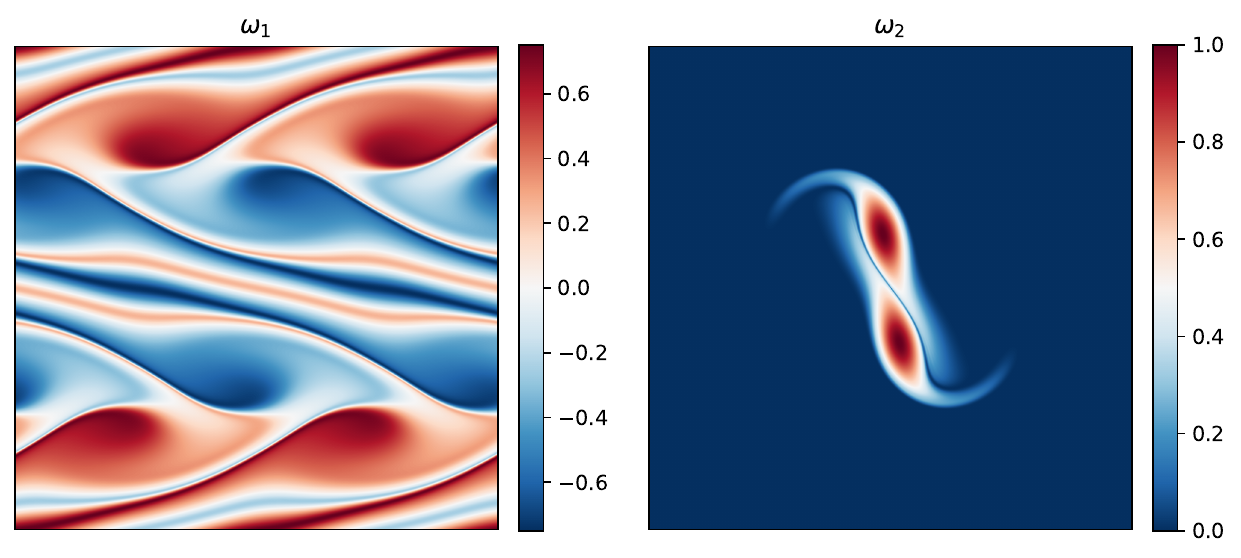}
\caption{We have $t=20$, $(\omega_1,\omega_2)$ snapshots.}
\label{FIG:DP_profile5}
\end{subfigure}
\begin{subfigure}[t]{0.495\textwidth}
        \centering
\includegraphics[width=1\textwidth]{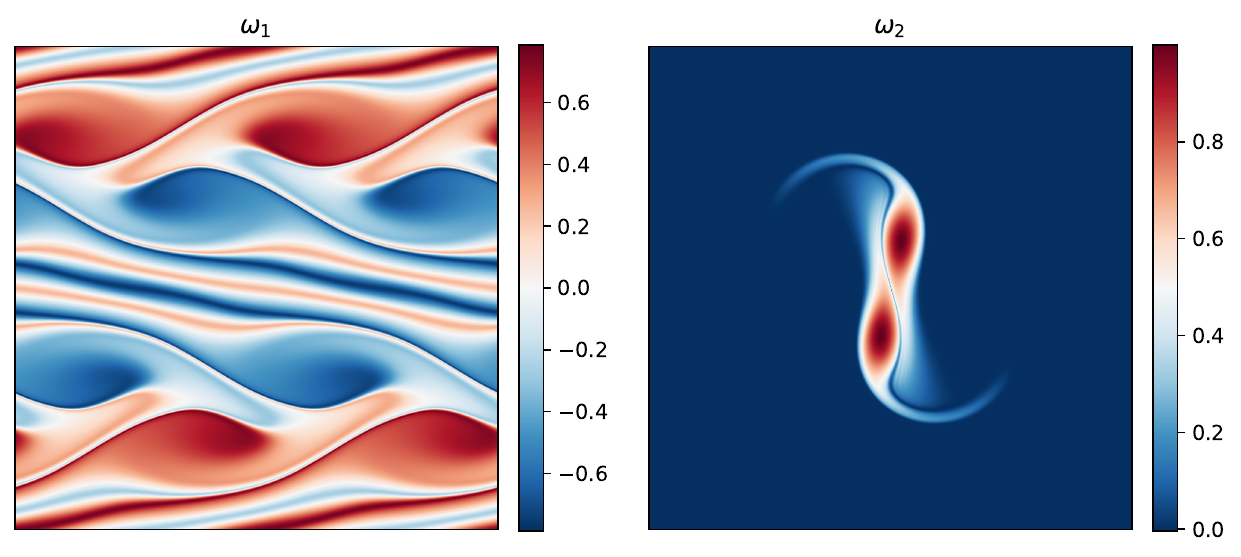}
\caption{We have $t=24$, $(\omega_1,\omega_2)$ snapshots.}
\label{FIG:DP_profile6}
\end{subfigure}
\begin{subfigure}[t]{0.495\textwidth}
        \centering
\includegraphics[width=1\textwidth]{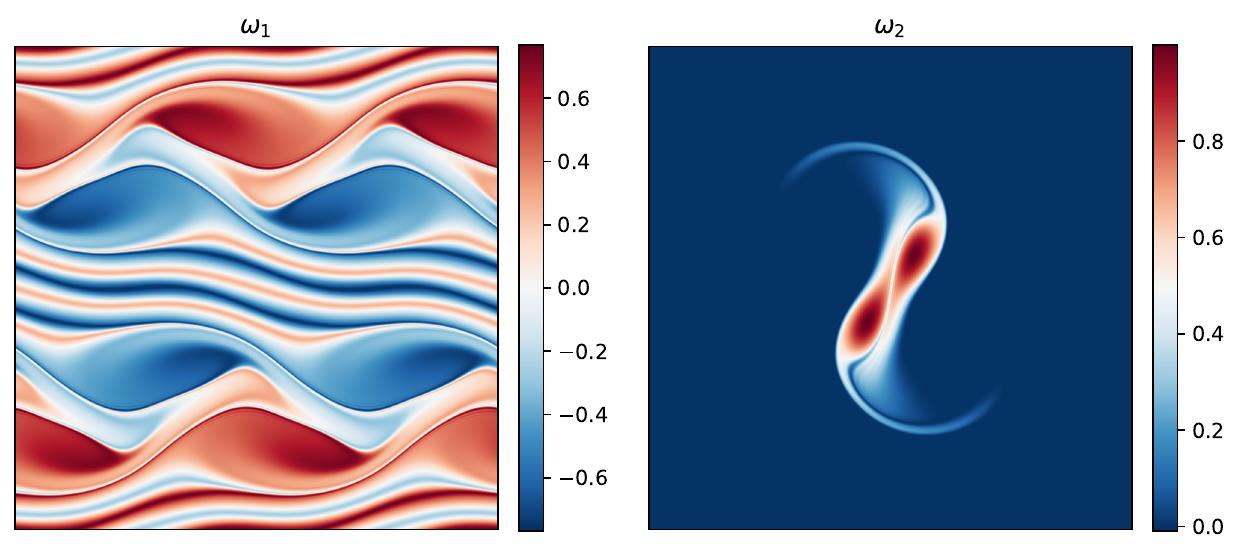}
\caption{We have $t=28$, $(\omega_1,\omega_2)$ snapshots.}
\label{FIG:DP_profile7}
\end{subfigure}
\begin{subfigure}[t]{0.495\textwidth}
        \centering
\includegraphics[width=1\textwidth]{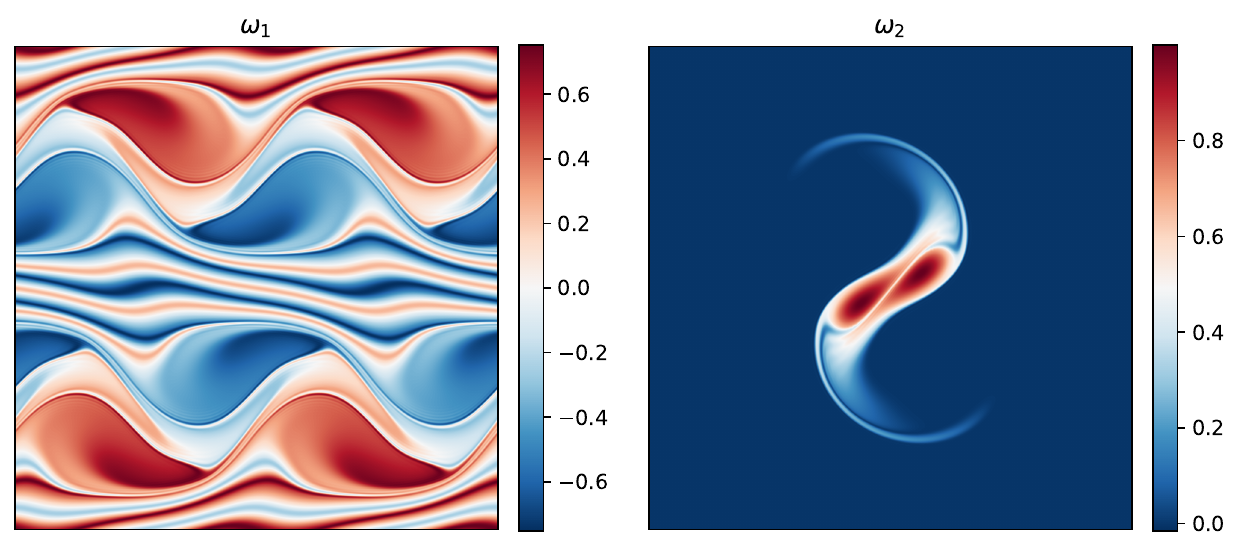}
\caption{We have $t=32$, $(\omega_1,\omega_2)$ snapshots.}
\label{FIG:DP_profile8}
\end{subfigure}
\begin{subfigure}{0.235\textwidth}
\centering
\includegraphics[width=1\textwidth]{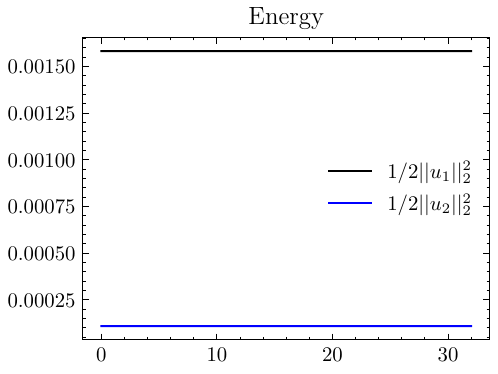}
\caption{Energy.}
\label{fig:DP_energy}
\end{subfigure}
\begin{subfigure}{0.235\textwidth}
\centering
\includegraphics[width=1\textwidth]{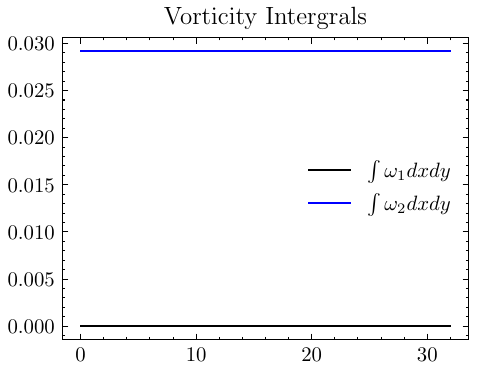}
\caption{Vorticity integrals.}
\label{fig:DP_vorticity}
\end{subfigure}
\label{fig:enter-label}
\caption{Solution of the Euler-Euler direct product uncoupled system at $t=0,4,...,32$ in the variables $(\omega_1,\omega_2)$. In $\omega_1$ a shear flow is observed, in $\omega_2$ two regions of positive vorticity interact and merge. There is no interaction between the two fluids in the direct product system. In \cref{fig:DP_energy} energy in each fluid is approximately conserved and not transfered between the fluids, in \cref{fig:DP_vorticity}
vorticity is preserved exactly.}
\label{FIG:DP_snapshots_and_qquantities}
\end{figure}

\begin{figure}[H]
\centering
\begin{subfigure}[t]{0.495\textwidth}
\centering
\includegraphics[width=1\textwidth]{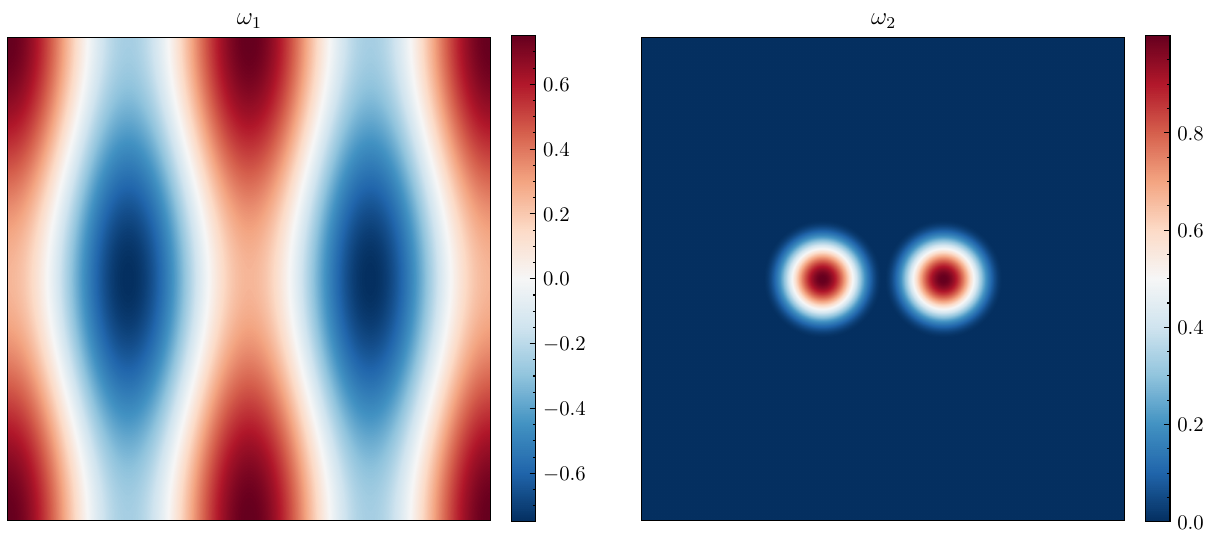}
\caption{We have $t=0$, $(\omega_1,\omega_2)$ snapshots.}\label{FIG:Euler_profile0}
\end{subfigure}
\begin{subfigure}[t]{0.495\textwidth}
        \centering
\includegraphics[width=1\textwidth]{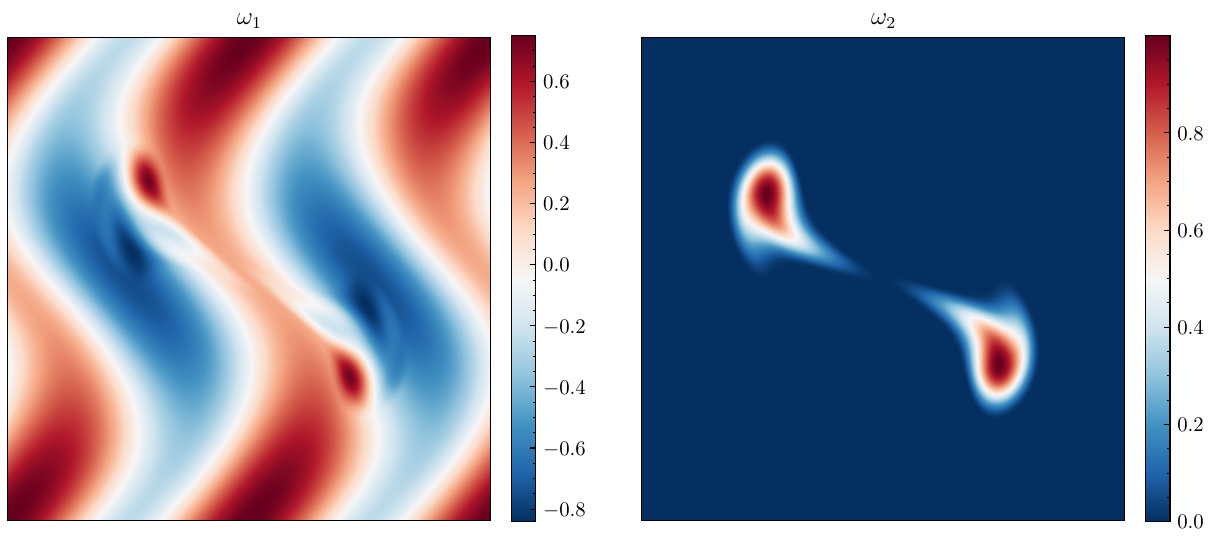}
\caption{We have $t=4$, $(\omega_1,\omega_2)$ snapshots.}\label{FIG:Euler_profile1}
\end{subfigure}
\begin{subfigure}[t]{0.495\textwidth}
        \centering
\includegraphics[width=1\textwidth]{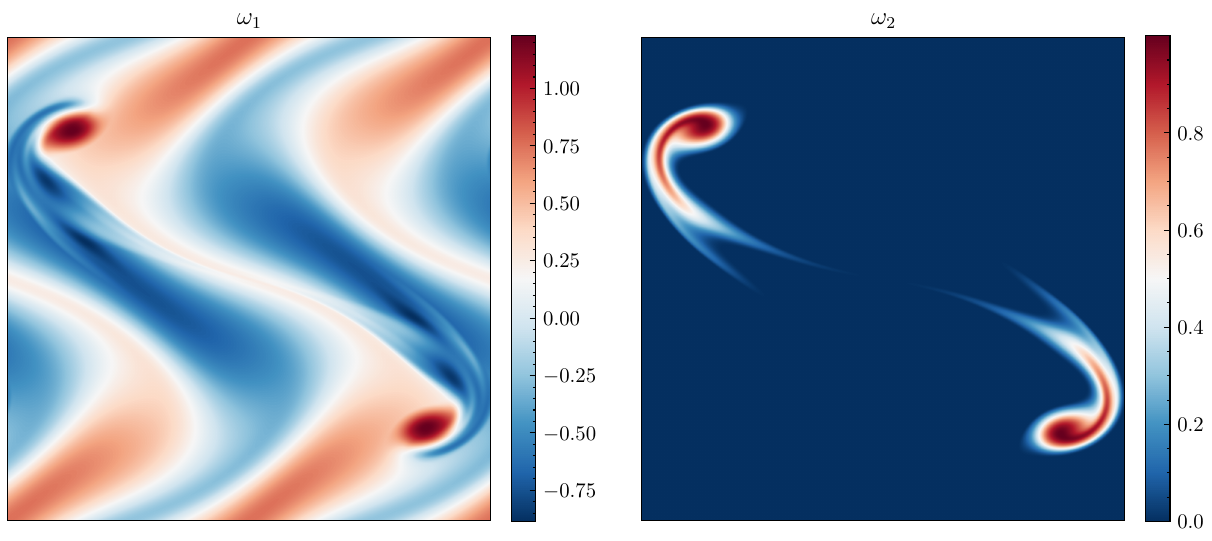}
\caption{We have $t=8$, $(\omega_1,\omega_2)$ snapshots.}
\label{FIG:Euler_profile2}
\end{subfigure}
\begin{subfigure}[t]{0.495\textwidth}
        \centering
\includegraphics[width=1\textwidth]{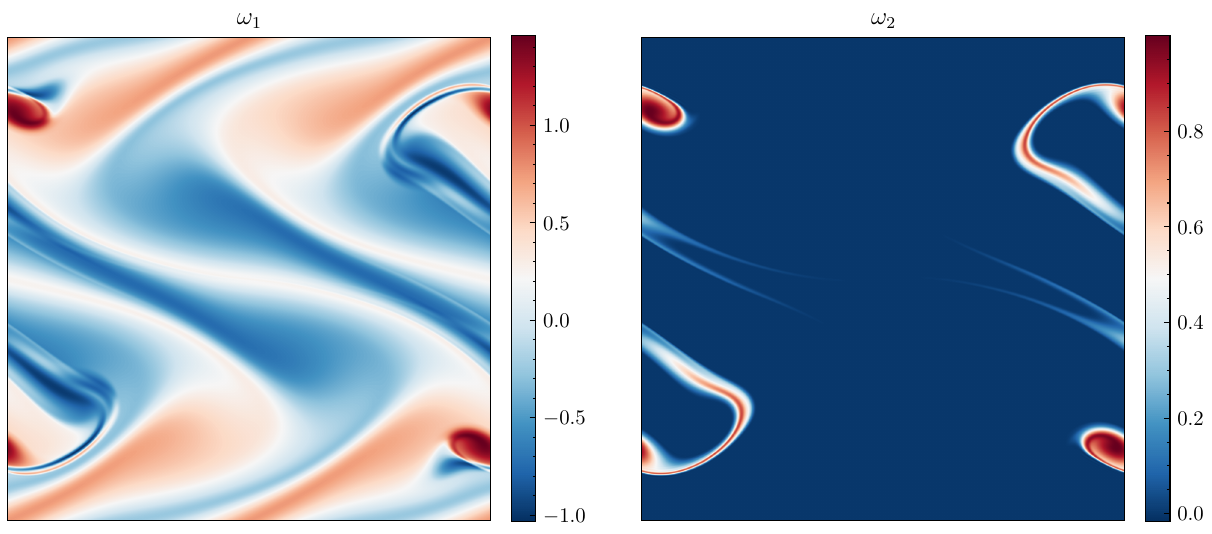}
\caption{We have $t=12$, $(\omega_1,\omega_2)$ snapshots.}
\label{FIG:Euler_profile3}
\end{subfigure}
\begin{subfigure}[t]{0.495\textwidth}
        \centering
\includegraphics[width=1\textwidth]{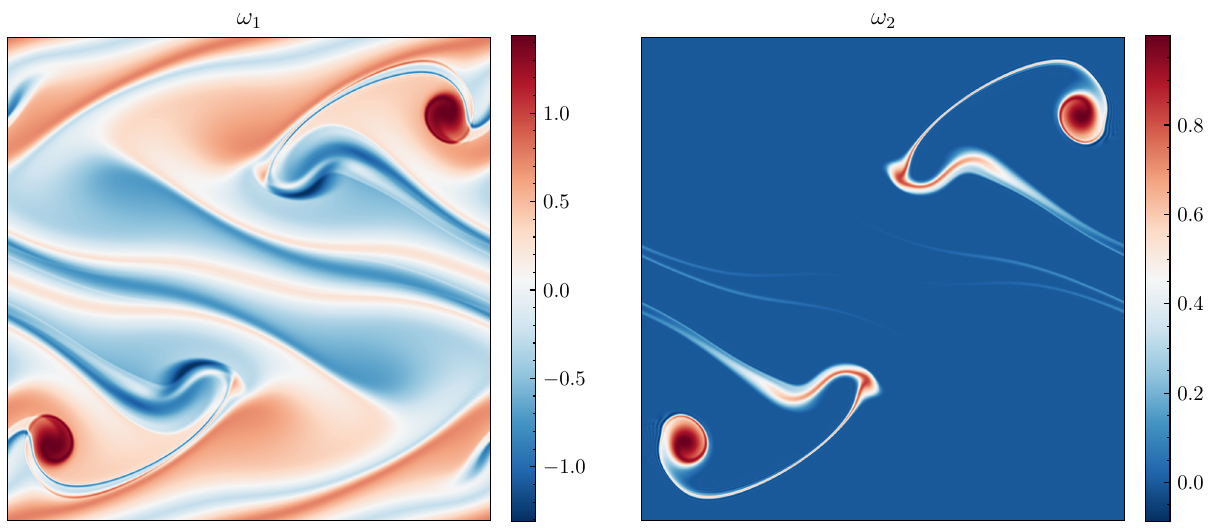}
\caption{We have $t=16$, $(\omega_1,\omega_2)$ snapshots.}
\label{FIG:Euler_profile4}
\end{subfigure}
\begin{subfigure}[t]{0.495\textwidth}
        \centering
\includegraphics[width=1\textwidth]{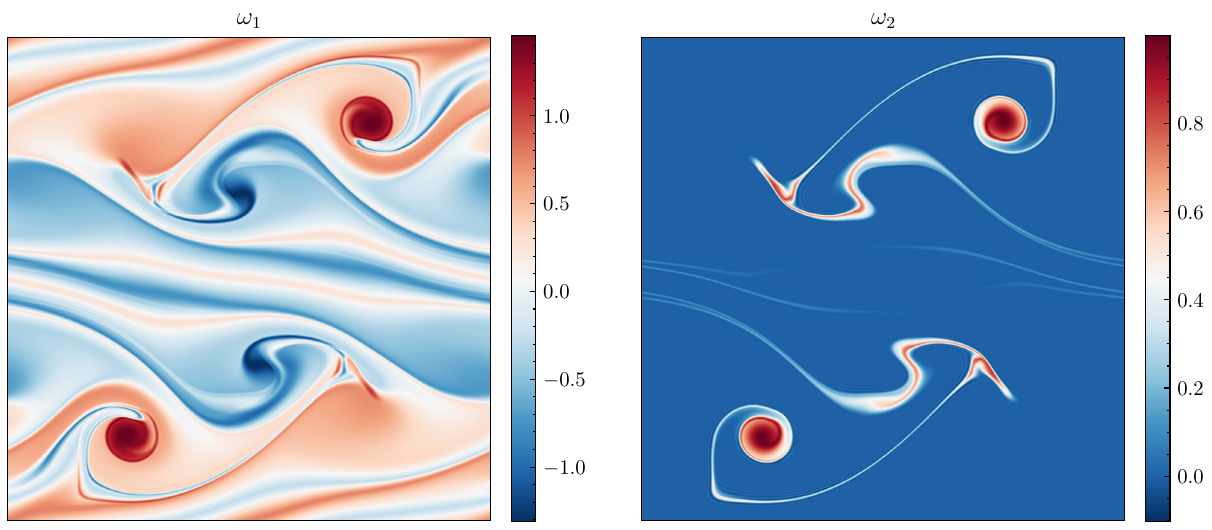}
\caption{We have $t=20$, $(\omega_1,\omega_2)$ snapshots.}
\label{FIG:Euler_profile5}
\end{subfigure}
\begin{subfigure}[t]{0.495\textwidth}
        \centering
\includegraphics[width=1\textwidth]{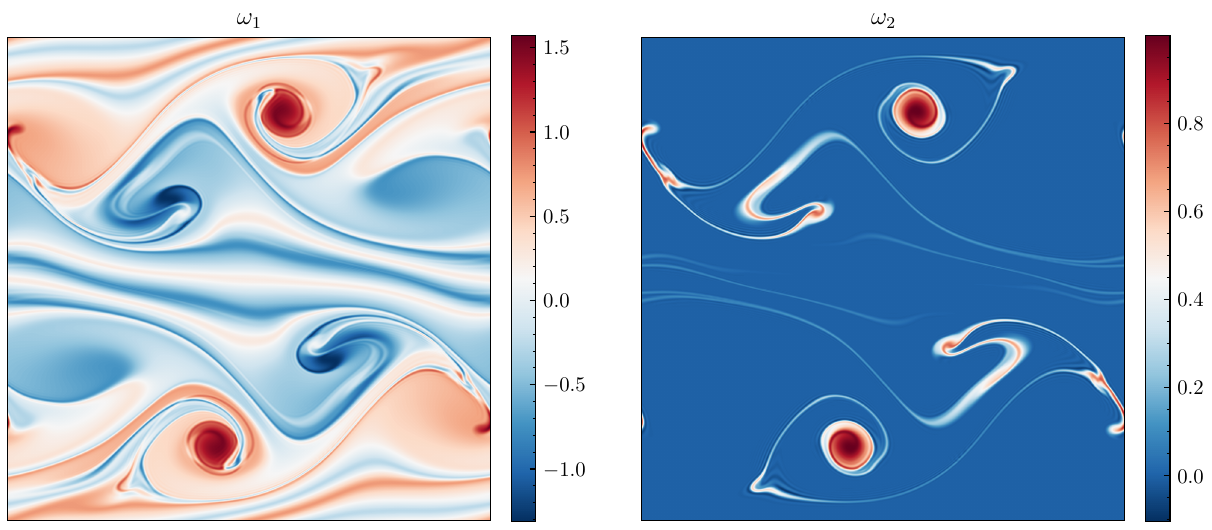}
\caption{We have $t=24$, $(\omega_1,\omega_2)$ snapshots.}
\label{FIG:Euler_profile6}
\end{subfigure}
\begin{subfigure}[t]{0.495\textwidth}
        \centering
\includegraphics[width=1\textwidth]{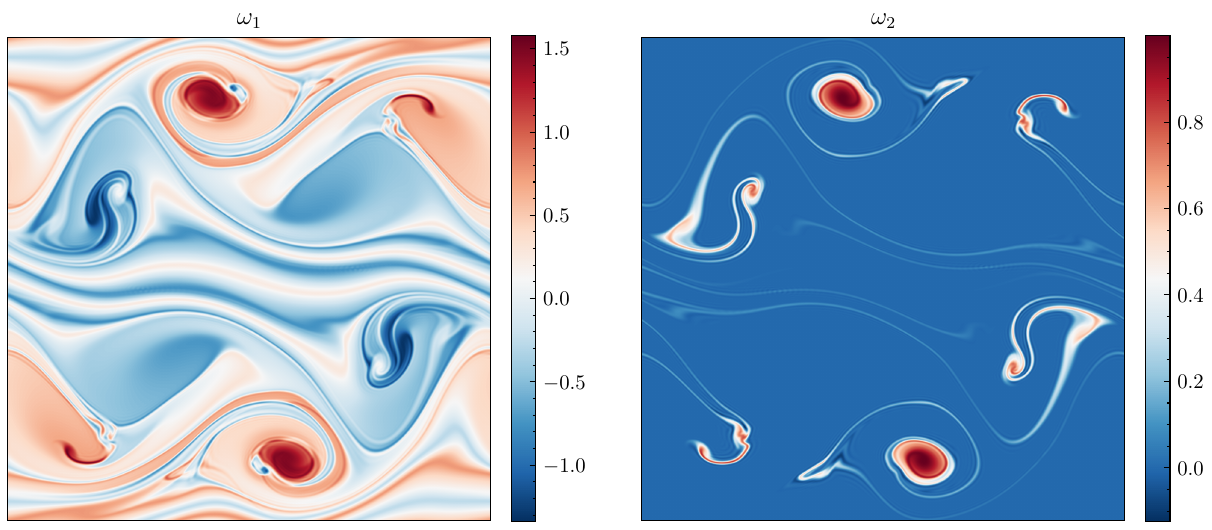}
\caption{We have $t=32$, $(\omega_1,\omega_2)$ snapshots.}
\label{FIG:Euler_profile7}
\end{subfigure}
\begin{subfigure}[t]{0.495\textwidth}
        \centering
\includegraphics[width=1\textwidth]{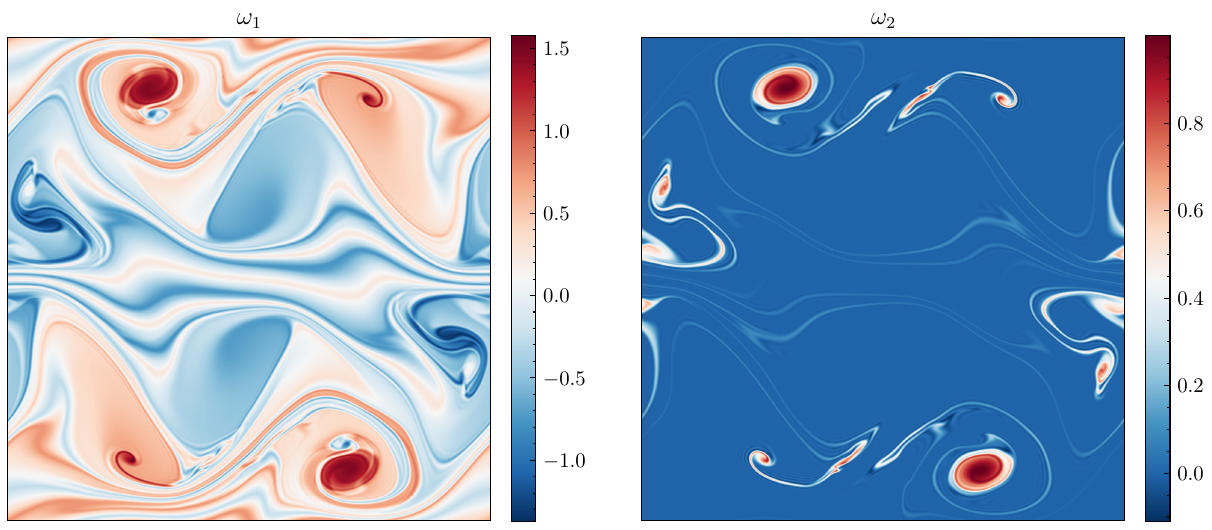}
\caption{We have $t=32$, $(\omega_1,\omega_2)$ snapshots.}
\label{FIG:Euler_profile8}
\end{subfigure}
\begin{subfigure}{0.235\textwidth}
\centering
\includegraphics[width=1\textwidth]{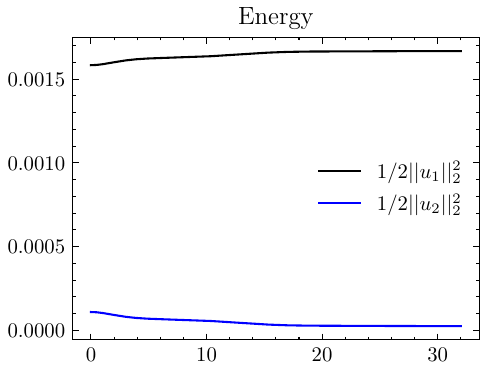}
\caption{Energy.}
\label{fig:SDP_energy-Space-Time plot}
\end{subfigure}
\begin{subfigure}{0.235\textwidth}
\centering
\includegraphics[width=1\textwidth]{pics/figure_vorticity_Euler_SDP.pdf}
\caption{Vorticity integrals.}
\label{fig:SDP_vorticity}
\end{subfigure}
\caption{Solution of the Euler-Euler semidirect product coupled system (\cref{eq:semidirect product euler 1,eq:semidirect product euler 2}) at $t=0,4,...,32$ in the variables $(\omega_1,\omega_2)$. As compared to the direct product system in \cref{FIG:DP_snapshots_and_qquantities}, there is coupling between the fluids. $\omega_2$ is transported by the the $\b u_1$ velocity, and $\omega_1$ is forced by $\b u_2 \cdot \nabla \omega_2$ term. We see the additional transport prevents the merging of the vortices in $\omega_2$. At $t=4$ we see that $\omega_2$ generates positive and negative vorticity in $\omega_1$ similar to the CH-CH system. In \cref{fig:SDP_energy-Space-Time plot} we see that the fluid 2 transfers energy to fluid 1 over this time interval. Later in the run we see vorticity in each fluid coupled together and on this time window less energy is transferred between the fluids. This is similar to the same interaction in the CH-CH system. In \cref{fig:SDP_vorticity} the integral of vorticity in each variable is preserved exactly.  }
\label{FIG:SDP_snapshots}
\end{figure}

\section{Further extensions}\label{sec:Further extensions}
We will discuss how one can use the structure of the semidirect product of groups to infer the geodesic equations of centrally extended groups $\widehat{G}\ltimes \widehat{G}$, and semidirect product groups of the type $(G\ltimes V)\ltimes (G\ltimes V)$. 

\subsection{Central extension \texorpdfstring{$\widehat{G}$}{} and \texorpdfstring{$\widehat{G} \ltimes \widehat{G}$}{}}\label{sec: semidirect product of centrall extension}

The central extension of the Lie group $(G,\cdot_{G})$, by the normal abelian Lie group $(N,+)$ is denoted $(\widehat{G},\hat{\cdot})$ and is formalised in terms of a short exact sequence $0 \rightarrow N \xhookrightarrow{i} \widehat{G} \overset{s}{\rightarrow} G \rightarrow e_G$, where $\xhookrightarrow{i}$ denotes injective homomorphism, $\overset{s}{\rightarrow}$ denotes surjective homomorphism, $0$ is the identity of $N$, $e_G$ is the identity of $G$. Such that the normal subgroup $N$ lies in the centre, where the centre of $G$ is defined as $\mathrm{Z}(G)=\lbrace z \in G | \quad \forall g \in G, z g=g z\}$, such that $N$ is invariant under left conjugation by group action (inner automorphism). The group multiplication structure is given by
\begin{align}
(g,a)\hat{\cdot}(h,b) = (gh,a + b + \Sigma(g,h)),\quad g,h\in G,\quad a,b \in \mathbb{R}.\label{eq:group product central extension}
\end{align}

Conditions on $\Sigma:G\times G \mapsto N$ are required for the centrally extended group to satisfy the usual group axioms.
\begin{proposition}\label{proposition:necessary sufficient conditions}
It is necessary and sufficient  for $\Sigma$, to satisfy the 2-co-boundary condition $\Sigma(h, e)=\Sigma(e, h)=\Sigma(e,e) = 0$ for the group $\widehat{G}$ to have identity. It is necessary and sufficient that $\Sigma\left(f, f^{-1}\right)=\Sigma\left(f^{-1},f\right)$ for all $f \in G$ for $\widehat{G}$ to have inverse.
Associativity of $\widehat{G}$ requires $\Sigma$ to satisfy the following group 2-cocycle identity
\begin{align}
\Sigma(g,h)+\Sigma(f,gh) = \Sigma(f,g)+\Sigma(fg,h). \label{eq:2cocycle identity}
\end{align}
\end{proposition}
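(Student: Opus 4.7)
The plan is to verify each group axiom (identity, inverse, associativity) by directly expanding the defining multiplication law $(g,a)\hat{\cdot}(h,b) = (gh, a+b+\Sigma(g,h))$ and reading off the conditions on $\Sigma$ required for that axiom to hold. Each axiom translates into a functional equation for $\Sigma$, and the content of the proposition is that these equations are exactly the ones stated.

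For the identity claim, I would posit the candidate identity element $(e_G, 0) \in \widehat{G}$ and impose the left and right identity equations $(e_G,0)\hat{\cdot}(h,b) = (h,b)$ and $(h,b)\hat{\cdot}(e_G,0) = (h,b)$. The former expands to $(h, b+\Sigma(e_G,h)) = (h,b)$, giving $\Sigma(e_G,h) = 0$ for all $h$, and the latter gives $\Sigma(h,e_G) = 0$. Specializing either to $h = e_G$ yields $\Sigma(e_G,e_G)=0$. Conversely, assuming these conditions, a direct substitution shows $(e_G,0)$ is a two-sided identity, establishing both directions.

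For the inverse claim, given $(g,a)$ I would search for $(g^{-1},c)$ satisfying both $(g,a)\hat{\cdot}(g^{-1},c) = (e_G,0)$ and $(g^{-1},c)\hat{\cdot}(g,a) = (e_G,0)$. Expanding the first forces $c = -a - \Sigma(g,g^{-1})$, while expanding the second forces $c = -a - \Sigma(g^{-1},g)$. Compatibility of these two formulas, which is necessary if a two-sided inverse exists and sufficient because then the common value of $c$ provides one explicitly, is exactly $\Sigma(g,g^{-1}) = \Sigma(g^{-1},g)$.

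For associativity, I would compute the two triple products $((f,a)\hat{\cdot}(g,b))\hat{\cdot}(h,c)$ and $(f,a)\hat{\cdot}((g,b)\hat{\cdot}(h,c))$ by applying the multiplication rule twice; the group part matches in $G$, and the abelian parts $a+b+c$ coincide, so the only residual condition on $\Sigma$ comes from equating the two additional contributions, yielding the 2-cocycle identity $\Sigma(f,g)+\Sigma(fg,h) = \Sigma(g,h)+\Sigma(f,gh)$. The main obstacle, insofar as there is one, is bookkeeping: being careful in the identity step that the "obvious" candidate $(e_G,0)$ is the intended normalization (so that the conditions are necessary and not merely sufficient), and being careful in the inverse step to extract the consistency condition from the two one-sided equations rather than prematurely imposing symmetry of $\Sigma$.
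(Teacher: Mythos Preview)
Your approach is essentially the paper's: direct expansion of the multiplication law for each axiom, reading off the resulting functional equation on $\Sigma$. For the inverse and associativity parts your argument matches the paper's verbatim. The only difference is in the identity step: the paper does not posit $(e_G,0)$ as the candidate but instead searches for a general element $(g,b)$ acting as identity, deduces $g=e_G$ and that $b=-\Sigma(e,f)=-\Sigma(f,e)$ must be independent of $f$, and then uses the specific element $(e,0)\in\widehat{G}$ to force this constant to vanish. By fixing the candidate $(e_G,0)$ at the outset you get sufficiency immediately, but necessity then rests on the observation (which you correctly flag as the bookkeeping point) that any two-sided identity must already be of this form; the paper makes that step explicit rather than leaving it as a remark.
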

\begin{proof}
These facts are known \cite{cendra2001lagrangian}, the details for necessary and sufficient conditions are presented in \cref{sec:GroupAxiomsCentral}.
\end{proof}

\begin{proposition}[AD-Ad-ad-joint relations on a centrally extended group and coadjoint representation.]\label{thm:Ad-central-extension}
Let $(g,a)\in \widehat{G}$, $(h,b)\in \widehat{G}$. Let $(\eta,\beta) = \frac{d}{d\epsilon}\big|_{\epsilon=0}(h_{\epsilon},b_{\epsilon})\in \widehat{\mathfrak{g}}$, $(\xi,\alpha) = \frac{d}{ds}\big|_{s=0}(g_{s},a_{s})\in \widehat{\mathfrak{g}}$ belong in the Lie algebra of $\widehat{G}$ denoted $\widehat{\mathfrak{g}} = \mathfrak{g}\oplus \mathfrak{n}$. 
Assuming the pairing between Lie-algebra and its dual as
$\langle (\xi,\alpha),(m,n)\rangle_{\widehat{\mathfrak{g}}^*} = \langle \xi,m\rangle_{\mathfrak{g}^*} + \langle \alpha ,n\rangle_{\mathfrak{n}}$. Then the adjoint and coadjoint operators are given by  
\begin{align}
\operatorname{AD}_{(g,a)}(h,b)&:= (g,a)(h,b)(g,a)^{-1} = (ghg^{-1}, b + \Sigma(g,h) -\Sigma(g,g^{-1})+ \Sigma(gh,g^{-1})), \label{eq:ADCE}\\
\operatorname{Ad}_{(g,a)}(\eta,\beta) & = (\operatorname{Ad}_{g} \eta , \beta + \partial_{2} \Sigma(g,e_G)\cdot \eta + \partial_1\Sigma(g,g^{-1})\cdot g \cdot \eta ),\label{eq:AdCE}\\
\operatorname{ad}_{(\xi,\alpha)}(\eta,\beta) &= [(\xi,\alpha),(\eta,\beta)]_{\widehat{\mathfrak{g}}} = ([\xi,\eta]_{\mathfrak{g}}, \sigma(\xi,\eta)),\label{eq:adCE}\\
\operatorname{ad}^{*}_{(\xi,a)}(m,n) &=  (\operatorname{ad}_{\xi}^{*}m + \sigma^*(\xi,n),0).\label{eq:coadCE}
\end{align}
In these equations, $[\xi,\eta]_{\mathfrak{g}}$ denotes the Lie algebra bracket (the negative commutator of vector-fields if $G$ is the diffeomorphism group), $\Sigma(g,h)$ denotes the group 2-cocycle and $\sigma(\xi,\eta):\mathfrak{g}\times\mathfrak{g}\rightarrow \mathfrak{n}$ is the Lie algebra 2-cocycle. The Lie algebra 2-cocycle can sometimes be related or calculated from a group 2-cocycle by the relation 
\begin{align}
    \sigma(\xi,\eta) = \partial_1 \partial_2 \Sigma(e_G,e_G) \cdot \eta \cdot\xi  - \partial_2\partial_1 \Sigma(e_G,e_G) \cdot\eta \cdot\xi 
    \label{eq:2cocycle-algebra-group-relationship}
    \end{align}
    where 
\begin{align}
& \partial_1 \partial_2 \Sigma(e_G,e_G) \cdot \eta \cdot \xi=\left.\left.\frac{\partial}{\partial t}\right|_{t=0} \frac{\partial}{\partial s}\right|_{s=0} \Sigma(g_t, h_s), \quad \partial_2 \partial_1 \Sigma(e_G,e_G) \cdot \eta \cdot \xi=\left.\left.\frac{\partial}{\partial t}\right|_{t=0} \frac{\partial}{\partial s}\right|_{s=0} \Sigma(h_s,g_t) .
\end{align}
Here we have used $e_G$ to denote the identity of $G$, and $\cdot$ denotes multiplication.
Note that the Lie algebra 2-cocycle $\sigma:\mathfrak{g}\times\mathfrak{g}\rightarrow \mathfrak{n}$ may exist without a corresponding group 2-cocycle. See, e.g., \cite{kriegl1997convenient} pg 506. In this case, the Lie algebra 2-cocycle is defined as algebraically satisfying sufficient conditions for the Lie algebra extension $\widehat{\mathfrak{g}}$, to remain a Lie algebra. Namely, the bilinear map $\sigma: \mathfrak{g} \times \mathfrak{g} \rightarrow \mathfrak{n}$ must be continuous, skew-symmetric and satisfy the 2-Cocycle condition 
\begin{align}
\sigma(u,[v,w]) + \sigma(v,[w,u]) + \sigma(w,[u,v]) = 0 , \quad \forall u,v,w \in \mathfrak{g} \label{eq:Algebra-2cocycle-Jacobi},
\end{align}  
which is equivalent to the centrally extended algebra satisfying Jacobi's identity.
The coadjoint contribution of the centrally extended algebra 2-cocycle is denoted $\sigma^*$ and is defined through the pairing 
\begin{align}
\langle (0, \sigma(\xi,\eta)),(m,n)\rangle_{\widehat{\mathfrak{g}}^*} = \langle n,\sigma(\xi,\eta)\rangle_{\mathfrak{n}}= \langle (\sigma^*(\xi,n),0),(\eta,\beta)\rangle_{\widehat{\mathfrak{g}}^*}. \label{eq:coadjoint contribution calculation}
\end{align} 
\end{proposition}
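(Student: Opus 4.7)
The plan is to derive the four identities in sequence, each reducing to a direct computation from the group law \cref{eq:group product central extension} combined with the boundary conditions $\Sigma(g,e_G)=\Sigma(e_G,h)=0$ and the 2-cocycle identity \cref{eq:2cocycle identity}. First I would obtain the inverse formula by solving $(g,a)\hat{\cdot}(g^{-1},c)=(e_G,0)$, which forces $c=-a-\Sigma(g,g^{-1})$, so $(g,a)^{-1}=(g^{-1},-a-\Sigma(g,g^{-1}))$. Then \cref{eq:ADCE} falls out of two applications of \cref{eq:group product central extension} to $(g,a)\hat{\cdot}(h,b)\hat{\cdot}(g,a)^{-1}$: the $a$ contributions cancel and exactly the three $\Sigma$ terms $\Sigma(g,h)$, $-\Sigma(g,g^{-1})$, and $\Sigma(gh,g^{-1})$ survive.

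Next, \cref{eq:AdCE} is obtained by differentiating $\operatorname{AD}_{(g,a)}(h_\epsilon,b_\epsilon)$ at $\epsilon=0$ with $(h_0,b_0)=(e_G,0)$. In the first slot only $g h_\epsilon g^{-1}$ is relevant, giving the classical $\operatorname{Ad}_g\eta$. In the second slot, $b$ contributes $\beta$, the $\epsilon$-independent piece $-\Sigma(g,g^{-1})$ drops, $\Sigma(g,h_\epsilon)$ contributes $\partial_2\Sigma(g,e_G)\cdot\eta$, and $\Sigma(gh_\epsilon,g^{-1})$ contributes $\partial_1\Sigma(g,g^{-1})\cdot g\cdot\eta$ since the variation of $gh_\epsilon$ at $\epsilon=0$ is $g\cdot\eta$.

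Then \cref{eq:adCE} follows by differentiating $\operatorname{Ad}_{(g_s,a_s)}(\eta,\beta)$ at $s=0$ with $g_0=e_G$ and $g'_0=\xi$. The first slot produces $[\xi,\eta]_{\mathfrak{g}}$ by the standard Lie group calculation. In the second slot, $\beta$ is constant and the two $\Sigma$-derivatives combine via the chain rule; crucially, $\Sigma(g,e_G)=0$ for all $g$ forces $\partial_1\Sigma(\cdot,e_G)\equiv 0$ and hence $\partial_1\Sigma(e_G,e_G)=\partial_1^2\Sigma(e_G,e_G)=0$, while $\Sigma(e_G,\cdot)=0$ forces $\partial_2\Sigma(e_G,\cdot)\equiv 0$. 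These vanishings kill all would-be single-derivative remainders and, after using $(g_s^{-1})'_0=-\xi$, leave precisely $\partial_1\partial_2\Sigma(e_G,e_G)\cdot\eta\cdot\xi-\partial_2\partial_1\Sigma(e_G,e_G)\cdot\eta\cdot\xi$, which is the formula \cref{eq:2cocycle-algebra-group-relationship} for $\sigma(\xi,\eta)$. Skew-symmetry of $\sigma$ is then automatic, and the 2-cocycle condition \cref{eq:Algebra-2cocycle-Jacobi} can be verified either by differentiating \cref{eq:2cocycle identity} twice at the identity, or equivalently by demanding Jacobi's identity on the extended bracket.

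Finally, \cref{eq:coadCE} is immediate from the pairing:
\begin{align}
\langle(m,n),\operatorname{ad}_{(\xi,\alpha)}(\eta,\beta)\rangle_{\widehat{\mathfrak{g}}^*} = \langle m,[\xi,\eta]_{\mathfrak{g}}\rangle_{\mathfrak{g}^*}+\langle n,\sigma(\xi,\eta)\rangle_{\mathfrak{n}} = \langle \operatorname{ad}^*_\xi m+\sigma^*(\xi,n),\eta\rangle_{\mathfrak{g}^*},
\end{align}
by definition of $\operatorname{ad}^*_\xi$ and of $\sigma^*$ in \cref{eq:coadjoint contribution calculation}; since no $\beta$-term appears on the right, the $\mathfrak{n}^*$-slot of $\operatorname{ad}^*_{(\xi,\alpha)}(m,n)$ must vanish. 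The main obstacle throughout is the second component of \cref{eq:adCE}: one must carefully track the chain-rule contributions from the $s$-dependence of $g_s^{-1}$ inside $\partial_1\Sigma(g_s,g_s^{-1})\cdot g_s\cdot\eta$ and verify that the boundary vanishings of the pure first and pure second partials of $\Sigma$ conspire to leave exactly the mixed-partial difference with the correct signs; everything else is essentially bookkeeping around the multiplication law.
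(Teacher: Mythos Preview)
Your proposal is correct and follows essentially the same route as the paper: derive the inverse from the group law, compute $\operatorname{AD}$ directly, then differentiate successively at the identity using the coboundary conditions $\Sigma(g,e_G)=\Sigma(e_G,h)=0$ to kill the pure first- and second-slot derivatives, and finally read off $\operatorname{ad}^*$ from the pairing. If anything, your accounting of why $\partial_1\Sigma(\cdot,e_G)\equiv 0$ and $\partial_2\Sigma(e_G,\cdot)\equiv 0$ annihilate the spurious terms in the second component of $\operatorname{ad}$ is slightly more explicit than the paper's, which simply states the first-order vanishing and jumps to the simplified form.
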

Proposition \ref{thm:Ad-central-extension} is exemplified in the following series of direct calculations, using a collection of standard material  \cite{michor1998geometry,arnol'd2009topological,cendra2001lagrangian,arnold2014topological,kriegl1997convenient}. Using the definition of the inverse \cref{eq:Inverse Group extension} and group action \cref{eq:group product central extension}, we attain
the ADjoint action as
\begin{align}
\operatorname{AD}_{(g,a)}(h,b) = (ghg^{-1},  b + \Sigma(g,h) -\Sigma(g,g^{-1})+ \Sigma(gh,g^{-1})),
\end{align} 
given in \cref{eq:ADCE}.
The Ad-joint action \cref{eq:AdCE} is derived, by linearising the ADjoint action \cref{eq:ADCE} on the identity of the group,
\begin{align}
\operatorname{Ad}_{(g,a)}(\eta,\beta) = \frac{d}{d\epsilon}\bigg|_{\epsilon=0} \operatorname{AD}_{(g,a)}(h(\epsilon),b(\epsilon))\quad\text{where}\quad (\eta,\beta) = (\dot{h}(0),\dot{b}(0)) \in \widehat{\mathfrak{g}}.
\end{align}
This is computed by using the chain rule to differentiate the 2-cocycle, where $h(0)=e_G$, $b(0)=0$, and $\partial_{1}\Sigma$, $\partial_{2}\Sigma$ denotes the partial derivative in the first and second component of the group 2-cocycle
\begin{align}
\frac{d}{d\epsilon} \bigg|_{\epsilon=0}\Sigma(g,h_{\epsilon}) &= \partial_{2} \Sigma(g,e_G)\cdot \eta, \quad \frac{d}{d\epsilon} \bigg|_{\epsilon=0}\Sigma(gh_{\epsilon},g^{-1}) = \partial_1\Sigma(g,g^{-1})\cdot g \cdot \eta .
\end{align}
Giving
\begin{align}
\operatorname{Ad}_{(g,a)}(\eta,\beta) = (\operatorname{Ad}_g \eta, \beta + \partial_{2} \Sigma(g,e_G)\cdot \eta + \partial_1\Sigma(g,g^{-1})\cdot g \cdot \eta ).
\end{align}

The adjoint action \cref{eq:adCE} is computed by differentiating \cref{eq:AdCE} as follows, 
\begin{align}
\operatorname{ad}_{(\xi,\alpha)}(\eta,\beta) = \frac{d}{d\epsilon}\bigg|_{\epsilon=0} \operatorname{Ad}_{(g_{\epsilon}
,a_{\epsilon})}(\eta,\beta)\quad\text{where}\quad (\xi,\alpha) = (\dot{g}(0),\dot{a}(0)) \in \widehat{\mathfrak{g}}.
\end{align}

From direct calculation one derives the adjoint relationship 
\begin{align}
\operatorname{ad}_{(\xi,\alpha)}(\eta,\beta) = \left(\operatorname{ad}_{\xi} \eta, \partial_1 \partial_2  \Sigma(e_G,e_G) \cdot \eta \cdot\xi  + \partial_1\partial_1 \Sigma(e_G,e_G) \cdot \eta \cdot \xi -\partial_2\partial_1 \Sigma(e_G,e_G) \cdot\eta \cdot\xi +\partial_1 \Sigma(e_G,e_G)\cdot \xi \cdot \eta \right).
\end{align}
Taking the derivative of the 2-co-boundary condition $\Sigma(g_{\epsilon},e_G) = \Sigma(h_{\epsilon},e_G)=0$, reveals the identities
\begin{align}
\partial_1 \Sigma(e_G,e_G)\cdot \eta=\partial_1 \Sigma(e_G,e_G)\cdot \xi  = 0.
\end{align}
Simplifying the adjoint relationship to the following
\begin{align}
\operatorname{ad}_{(\xi,\alpha)}(\eta,\beta) = (\operatorname{ad}_{\xi}\eta , \partial_1 \partial_2  \Sigma(e_G,e_G) \cdot \eta \cdot\xi  - \partial_2\partial_1 \Sigma(e_G,e_G) \cdot\eta \cdot\xi ),
\end{align}
and can be denoted as follows
\begin{align}
\operatorname{ad}_{(\xi,\alpha)}(\eta,\beta) = [(\xi,\alpha),(\eta,\beta)]_{\widehat{\mathfrak{g}}} = ([\xi,\eta]_{\mathfrak{g}}, \sigma(\xi,\eta)) = (\operatorname{ad}_{\xi} \eta, \sigma(\xi,\eta)).
\label{co-cycle ad notation}\end{align} 
Defining $\sigma^{*}(\xi,n):\mathfrak{g}\times \mathfrak{n}^{*}\rightarrow\mathfrak{g}^{*}$ as the coadjoint contribution of the 2-cocycle (\cref{eq:coadjoint contribution calculation}), one can compute
\begin{align}
\langle (m,n), \operatorname{ad}_{(\xi,a)}(\eta,\beta)\rangle_{\widehat{\mathfrak{g}}} &= \langle m , \operatorname{ad}_{\xi}\eta \rangle_{\mathfrak{g}} + \langle n, \sigma(\xi,\eta)\rangle_{\mathfrak{n}} \\
&= \langle \operatorname{ad}_{\xi}^{*}m + \sigma^*(\xi,n), \eta \rangle_{\mathfrak{g}} \\
&= \langle \operatorname{ad}_{\xi}^{*}m + \sigma^*(\xi,n), \eta \rangle_{\mathfrak{g}} + \langle 0 , \beta \rangle_{\mathfrak{n}} \\
&= \langle (\operatorname{ad}_{\xi}^{*}m + \sigma^*(\xi,n) , 0) , (\eta, \beta) \rangle_{\widehat{\mathfrak{g}}} 
\end{align}
to deduce the central extension equips the usual coadjoint operator with  $\sigma^*$ as follows, 
\begin{align}
\operatorname{ad}^{*}_{(\xi,a)}(m,n) =  (\operatorname{ad}_{\xi}^{*}m + \sigma^*(\xi,n),0).
\end{align}

The Euler-Poincaré variational principle for a centrally extended algebra is calculated, as follows.

\begin{align}
0 & =\delta \int \ell(u, a) d t  =\int\left\langle\frac{\delta \ell}{\delta(u, a)}, \delta(u, a)\right\rangle_{\widehat{\mathfrak{g}}^* \times \widehat{\mathfrak{g}}} d t \\
&=\int\left\langle\frac{\delta \ell}{\delta(u, a)}, \frac{d}{d t}(v, b)-\operatorname{a d}_{(u, a)}(v, b)\right\rangle d t \\
& =\int\left\langle-\frac{d}{d t} \frac{\delta \ell}{\delta(u, a)}-\operatorname{a d}_{(u, a)}^* \frac{\delta \ell}{\delta(u, a)},(v, b)\right\rangle d t+\left.\left\langle(v, b), \frac{\delta \ell}{\delta(u, a)}\right\rangle\right|_ 0 ^T \\
& =\int\left\langle-\frac{d}{d t}\left(\frac{\delta \ell}{\delta u}, \frac{\delta \ell}{\delta a}\right)-\operatorname{a d}_{(u, a)}^*\left(\frac{\delta \ell}{\delta u}, \frac{\delta \ell}{\delta a}\right),(v, b)\right\rangle d t+\left.\left\langle(v, b),\left(\frac{\delta \ell}{\delta u}, \frac{\delta \ell}{\delta a}\right)\right\rangle\right|_ 0 ^T
\end{align}

Here, the momentum is $(m, c)=\left(\frac{\delta \ell}{\delta u}, \frac{\delta \ell}{\delta a}\right)$ and the Euler-Poincaré equations are given by $\frac{d}{d t}(m, c) + \operatorname{ad}_{(u, a)}^*(m, c)=0$. By using \cref{co-cycle ad notation}, one finds
\begin{align}
    \frac{d}{dt} m + \operatorname{ad}^{*}_{\xi} m +\sigma^{*}(\xi,c) = 0
    \quad \hbox{and} \quad \frac{d}{dt}c= 0.
\label{eq:EP-centrally etended}
\end{align}

\begin{remark}
In matrix form \cref{eq:coadCE,eq:EP-centrally etended} can be written as a coadjoint operator acting on the dual space or in Lie Poisson form on the Lie algebra, as follows
\begin{align}
\left(\operatorname{ad}^{*}_{(\xi,a)}(m,n)\right)^{T} = 
\begin{bmatrix}
\operatorname{ad}^{*}_{\xi}(\cdot) &\sigma^{*}(\xi , \cdot)\\
0&0
\end{bmatrix}
\begin{bmatrix}
m\\
n
\end{bmatrix}
= 
\begin{bmatrix}
\operatorname{ad}^{*}_{(\cdot)}m+\sigma^{*}(\cdot,n)&0\\
0&0
\end{bmatrix}
\begin{bmatrix}
\xi\\
a
\end{bmatrix}.
\end{align}
\end{remark}
We can inspect the coadjoint and Lie-Poisson structure for the semidirect product $(\widehat{G}\ltimes \widehat{G})$ Euler-Poincaré equations following \cref{sec: semidirect products} 
as follows
\begin{align}
\frac{d}{dt}
\begin{bmatrix}
m\\
l\\
n\\
o
\end{bmatrix}
&= -
\begin{bmatrix}
\operatorname{ad}_{(\cdot)}^*m + \sigma^*(\cdot,l)& 0& \operatorname{ad}_{(\cdot)}^*n + \sigma^*(\cdot,o)&0\\
0&0&0&0\\
\operatorname{ad}_{(\cdot)}^*n+ \sigma^*(\cdot,o)& 0&\operatorname{ad}_{(\cdot)}^*n+ \sigma^*(\cdot,o)&0\\
0&0&0&0
\end{bmatrix}
\begin{bmatrix}
    \xi_h\\
    \alpha\\
    \xi_k \\
    \beta
\end{bmatrix}
= -
\begin{bmatrix}
\operatorname{ad}^{*}_{\xi_{h}}(\cdot) &
\sigma^{*}(\xi_h,\cdot)
&\operatorname{ad}^{*}_{\xi_{k}}(\cdot)
&\sigma^{*}(\xi_k,\cdot)\\
0&0&0&0\\
0&0&\operatorname{ad}^{*}_{\xi_{h}+\xi_{k}}(\cdot)& \sigma^*(\xi_{h} + \xi_{k},\cdot)\\
0&0&0&0
\end{bmatrix}
\begin{bmatrix}
m\\
l\\
n\\
o
\end{bmatrix}\label{eq:semidirectproduct_of_central_extension}
\end{align}

\subsection{Example 0: Geodesics on \texorpdfstring{$\widehat{\operatorname{Diff}}(S^1)$}{}}\label{sec:Example 0: Geodesics}
It is known, that the diffeomorphism group on the circle admits the central extension
$
\widehat{\operatorname{Diff}}(S^1)=\operatorname{Diff}\left(S^1\right) \times S^1
$, through the Thurston-Bott group 2-cocycle \cite{bott1977characteristic}, 
\begin{align}
\Sigma\left(g_1, g_2\right):=\frac{1}{2} \int_{S^1} \log \left(\frac{\partial}{\partial x}\left(g_1 \circ g_2\right)\right) \mathrm{d} \log \left(\frac{\partial g_2}{\partial x}\right). \label{eq:TB-2cocycle}
\end{align}
The Virasoro algebra $\mathfrak{vir}(S^1) = \operatorname{Vect}(S^1)\oplus S^1$ is a unique (up to isomorphism) non-trivial (modulo 2-co-boundaries) central extension of $\operatorname{Vect}(S^1)$, it also is the Lie algebra of the Lie group $\widehat{\operatorname{Diff}}\left(S^1\right)$. The Virasoro algebra has the Gefland-Fuchs algebra 2-cocycle \begin{align}
\sigma(\xi,\eta) = \int_{S^1} \xi_x \eta_{xx}\mathrm{d} x.  \label{eq:GF-2cocycle}
\end{align}
The relationship between the Thurston-Bott group 2-cocycle $\Sigma$, and the Gefland-Fuchs algebra 2-cocycle $\sigma$ is given in \cref{eq:2cocycle-algebra-group-relationship} and calculated explicitly in \cite{khesin2009geometry} (proposition 2.4). By direct calculation (integration by parts) under the pairing in \cref{eq:coadjoint contribution calculation}, the coadjoint contribution of the algebra 2-cocycle is 
\begin{align}
    \sigma^{*}(\xi,n) = n\xi_{xxx}.
\end{align} 
This is well established, see for example \cite{holm2018new,segal1991geometry,ovsienko1987korteweg,michor1998geometry,holm2018new,khesin2009geometry}, and leads to the following equation
\begin{align}
\partial_t m + (um)_x + u_x m + lu_{xxx} =0,
\quad m = \frac{\delta \ell}{\delta u}.
\end{align}
Where $\partial_t l = 0$, examples include the KdV and Camassa-Holm wave models.

\subsection{Example 1: Geodesics on \texorpdfstring{$\widehat{\operatorname{Diff}}(S^1)\ltimes \widehat{\operatorname{Diff}}(S^1)$}{}}\label{sec:Example 1: Geodesics}
Let $((f,a),(g,b)) \in \widehat{\operatorname{Diff}}_{H}(S^1)\ltimes \widehat{\operatorname{Diff}}_{K}(S^1)$ where $\widehat{\operatorname{Diff}}(S^1)$ is the central extension of $\operatorname{Diff}(S^1)$, defined by fibring the group over its identity element, the subscripts $H$, $K$ are defined to specify the semidirect group product structure. 

 
Since, this is a composition of two group extensions, through central extension and then semidirect product, we can inspect the following extended Lie-Poisson structure on $\mathfrak{vir}(S^1)\ltimes \mathfrak{vir}(S^1)$, and its formulation in its dual space $(\mathfrak{vir}(S^1)\ltimes \mathfrak{vir}(S^1))^*$ as, critical points of the following action principle
\begin{align}
0=\delta \int \ell(u,a,v,b) dt =\int \left\langle 
-\frac{d}{dt}\left(\frac{\delta\ell}{\delta u},\frac{\delta\ell}{\delta a},\frac{\delta\ell}{\delta v},\frac{\delta\ell}{\delta b}\right) - \operatorname{ad}^{*}_{(u,a,v,b)}\left(\frac{\delta\ell}{\delta u},\frac{\delta\ell}{\delta a},\frac{\delta\ell}{\delta v},\frac{\delta\ell}{\delta b}\right)
, (w,c,z,d)\right\rangle dt + \text{B.C's}
\end{align}
Where we denote the momenta $(m, l,n,o)=\left( \frac{\delta \ell}{\delta u}, \frac{\delta \ell}{\delta \alpha}, \frac{\delta \ell}{\delta v},\frac{\delta \ell}{\delta \beta} \right)$. Then by using the general structure in \cref{eq:semidirectproduct_of_central_extension}
and identifying $\operatorname{ad}^*_u m = (um)_x + u_x m$,  $\sigma^*(u,l) = l u_{xxx}$, one may derive the following Euler-Poincaré equations
\begin{align}
\partial_t {m} + (um)_x + u_x m + lu_{xxx} + (vn)_x + v_x n + o v_{xxx} = 0, \quad
\partial_t{l}=0,\\
\partial_t{n} + ((u+v)n)_x + (u+v)_x n + o (u+v)_{xxx} = 0, \quad \partial_t{o}=0.
\end{align}
Where $l,o$ are arbitrary constants.
Defining $m-n = D$, $l-o = E$, and subtracting one row from the other leads to the following system
\begin{align}
\frac{d}{dt}(D) + \operatorname{ad}^{*}_{u}(D) +\sigma^*(u,E) &= 0,\\
\frac{d}{dt}(n) + \operatorname{ad}^{*}_{u+v}(n) +\sigma^*(u+v,o) &= 0.
\end{align}
One then deduces that the difference in momentum variables $m-n$ is evolved by another geodesic flow on $\mathfrak{vir}(S^1)$. Some examples of semidirect product coupled systems are given below,
\begin{example}
Burgers-Burgers, $\operatorname{Diff}(S^1)\ltimes \operatorname{Diff}(S^1)$ , 
$\ell = 1/2 \left( ||u||_2^2 + ||v||_2^2 \right)$ 
\begin{align}
\partial_t u + 3uu_x + 3vv_x  &= 0,\\
\partial_t v + 3vv_x +2u_x v + u v_x & =0.
\end{align}
\end{example}
\begin{example}
KdV-KdV, $\widehat{\operatorname{Diff}}(S^1)\ltimes \widehat{\operatorname{Diff}}(S^1)$, $\ell  = 1/2||u||_2^2 + 1/2||v||_2^2 + 1/2 \alpha^2 + 1/2 \beta^2$.
\begin{align}
\partial_t u + 3uu_x +lu_{xxx}+ 3vv_x +o v_{xxx} &= 0,\\
\partial_t v+3vv_x +2u_x v + u v_x + o(u+v)_{xxx}& =0.
\end{align}
\end{example}
\begin{example}
KdV-Burgers ($o = 0$)
\begin{align}
\partial_t u + 3uu_x +lu_{xxx}+ 3vv_x  &= 0,\\
\partial_t v+3vv_x +2u_x v + u v_x  & =0. 
\end{align}
\end{example}
\begin{example}
Burgers-KDV ($l = 0$)
\begin{align}
\partial_t u + 3uu_x + 3vv_x +o v_{xxx} &= 0,\\
\partial_t v+3vv_x +2u_x v + u v_x + o(u+v)_{xxx}& =0,
\end{align}
\end{example}

\begin{example}
CH-CH ($\widehat{\operatorname{Diff}}(S^1)\ltimes \widehat{\operatorname{Diff}}(S^1)$ , $\ell  = 1/2||u||_{H^1_{\alpha}}^2 + 1/2||v||_{H^1_{\beta}}^2 + 1/2 \alpha^2 + 1/2 \beta^2$)
\begin{align}
    \partial_t m + (um)_x + u_x m + lu_{xxx} + (vn)_x + v_x n + o v_{xxx} &= 0, \quad m = u-\alpha u_{xx},\\
\partial_t n + ((u+v)n)_x + (u+v)_x n + o (u+v)_{xxx} &= 0,\quad n = v-\beta v_{xx},
\end{align}
\end{example}

\begin{example}[Burgers-CH]
\begin{align}
\partial_t u + 3uu_x + 3vv_x  &= 0,\\
\partial_t n + ((u+v)n)_x + (u+v)_x n + o (u+v)_{xxx} &= 0, \quad n = v- \beta v_{xx},
\end{align}
\end{example}

\begin{example}[CH-Burgers]
\begin{align}
\partial_t m + (um)_x + u_x m + lu_{xxx} + (vn)_x + v_x n + o v_{xxx} &= 0,\quad m = u-\alpha u_{xx},\\
\partial_t v + ((u+v)v)_x + (u+v)_x v  &= 0.\\
\end{align}
\end{example}
Here one notes there are additional coupling terms arising from the central extension, most visible in the Burgers-KdV equation where dispersive coupling arises. In \cref{Centrally extending the semidirect product group}, we consider a 2-component ``central extension" of the semidirect product Lie algebra which does not give such dispersive coupling terms as to exemplify noncommutativity of group extensions.

\subsection{Semidirect products \texorpdfstring{$G\ltimes V$}{}, and \texorpdfstring{$(G\ltimes V)\ltimes (G\ltimes V)$}{}} \label{sec:semi_direct_GV}

Adjoint actions associated with the group $G\ltimes V$ can be found in detail in \cite{marsden1984semidirect,guillemin1982geometric}, we briefly recall how these are used in the construction of the Euler-Poincaré equations on $G\ltimes V$, as to introduce notation required for its further self-semidirect product extension.

Let $G$ be a Lie group acting on the left on the vector space $V$. The outer semidirect product group $G\ltimes V$ can be constructed from the semidirect product group action
\begin{align}
(g,v)\cdot_{\ltimes} (h,w) = (g h,v + g w).
\end{align}
Where we employ concatenation notation to denote left group action of $G$ on $G$ by $gh := g\cdot_{G}h$, and $gw=g\cdot w = \rho(g)w$, is the left action of $G$ on $V$. The identity of the group is $(e_G,0)$. The inverse of the group is $(g^{-1},-g^{-1}v)$. One can directly compute the $\operatorname{AD}$joint action
\begin{align}
\operatorname{AD}_{(g,v)}(h,w) = (g,v)\cdot_{\ltimes} (h,w)\cdot_{\ltimes}(g^{-1},-g^{-1}v) &= (ghg^{-1}, v+g\cdot w + (gh)\cdot (-g^{-1}\cdot v))\\
&= (\operatorname{AD}_{g}h, v+ g\cdot w - (\operatorname{AD}_{g}h) \cdot v ).
\end{align}
We have used the fact that the left representation $\rho$ has the following property
\begin{align}
\rho(g_1)\rho(g_2)v = \rho(g_1 g_2)v.
\end{align}

By differentiating $\operatorname{AD}_{(g,v)}(h(t),w(t))$, with respect to $t$, and evaluating at $t=0$, where $(h(0),w(0)) = (e_G,0)$ is the identity of the group, and $(\dot{h}(0),\dot{w}(0)) = (\eta_{h},\eta_{w})\in \mathfrak{g}\ltimes V$ is in the algebra we can find the $\operatorname{Ad}$joint representation of the group on the algebra,
\begin{align}
\operatorname{Ad}_{(g,v)}(\eta_{h},\eta_{w}) = (\operatorname{Ad}_{g}\eta_{h}, g\cdot \eta_{w} - \operatorname{Ad}_{g}\eta_{h} \cdot v),
\end{align}
where the first dot in $g\cdot \eta_{w}$ denotes how $G$, acts on $V$, specified by the representation $\rho$. The second dot denotes how $\mathfrak{g}$ acts on $V$, this is an induced representation denoted $\alpha$. 
We now denote the left representation of the group on the vector space, and the induced Lie algebra representation on the vector space as follows,
\begin{align}
\rho: G \times V \rightarrow V, \quad  \rho (g, v) = \rho (g) v = (g \cdot v),\\
\alpha: \mathfrak{g} \times V \rightarrow V,\quad \alpha (\eta, v)= \alpha(\eta) v=(\eta \cdot v).
\end{align}

Next we compute the adjoint action
\begin{align}
\operatorname{ad}_{(\xi_g,\xi_v)}(\eta_h,\eta_w) &= \frac{d}{dt}\bigg|_{t=0} \operatorname{Ad}_{(g(t),v(t))}(\eta_h,\eta_w) \\ 
&=\frac{d}{dt}\bigg|_{t=0}\left( \operatorname{Ad}_{g(t)}\eta_h , g(t) \cdot \eta_{w} - (g(t)\eta_h g^{-1}(t)) \cdot v(t) \right),\\
&= (\operatorname{ad}_{\xi_g}\eta_h, \xi_{g}\cdot \eta_{w} - \frac{d}{dt}\bigg|_{t=0}(g(t)\eta_h g^{-1}(t)) \cdot v(t))\\
&= (\operatorname{ad}_{\xi_g}\eta_h, \xi_{g}\cdot \eta_{w} - \alpha([\xi_g,\eta_h]) 0 - \eta_h  \cdot \xi_v)\\
&= (\operatorname{ad}_{\xi_g}\eta_h , \xi_{g}\cdot \eta_{w} - \eta_h  \cdot \xi_v)
\end{align}
Where the induced Lie algebra representation on the vector space identity element gives no contribution. For notational convenience denote $\xi_v,\eta_w = v,w\in V$ and $\xi_g,\eta_h=\xi,\eta \in \mathfrak{g}$, and write the adjoint action as
\begin{align}
\operatorname{ad}_{(\xi,v)}(\eta,w)  
&= (\operatorname{ad}_{\xi}\eta , \xi \cdot w -\eta \cdot v).
\end{align}

Let $(m,n)\in \mathfrak{g}^{*}\times V^*$, $(\xi,v)\in \mathfrak{g}\times V$, $(\eta,w)\in \mathfrak{g}\times V$,
then defining $\diamond:V^*\times V\mapsto \mathfrak{g}$ as the operator satisfying $\langle n,\alpha(\eta)v\rangle_{V^*\times V} = -\langle n\diamond v,\eta \rangle_{\mathfrak{g}^*\times\mathfrak{g}}$ and denote $\alpha^*:\mathfrak{g}\times V^* \mapsto V^*$ as the operator satisfting $\langle n,\alpha(\xi)w\rangle_{V^*\times V} = \langle \alpha^*(\xi)n,w\rangle_{V^*\times V}$, then with the pairing 
\begin{align}
\langle (m,n),(\xi,v) \rangle_{\mathfrak{g}\times V} = \langle m,\xi \rangle_{\mathfrak{g}} + \langle n,v \rangle_{V},
\end{align}
one attains the coadjoint motion
\begin{align}
\langle \operatorname{ad}^{*}_{\xi,v}(m,n),(\eta,w)\rangle &= \langle (m,n),\operatorname{ad}_{\xi,v}(\eta,w) \rangle\\
&=\langle (m,n),(\operatorname{ad}_{\xi}\eta , \xi \cdot w -\eta \cdot v)\rangle\\
&=\langle m ,\operatorname{ad}_{\xi}\eta\rangle +\langle n, \xi \cdot w -\eta \cdot v)\rangle\\
&=\langle m ,\operatorname{ad}_{\xi}\eta\rangle_{\mathfrak{g}^*\times \mathfrak{g}} +\langle n, \alpha(\xi) w\rangle_{V^*\times V} -\langle n,\alpha(\eta)v\rangle_{V^*\times V}\\
&=\langle \operatorname{ad}^*_{\xi} m ,\eta\rangle_{\mathfrak{g}^*\times \mathfrak{g}} +\langle \alpha^*(\xi)n,  w\rangle_{V^*\times V} + \langle n \diamond v,\eta\rangle_{\mathfrak{g}^*\times \mathfrak{g}}\\
&=\langle (\operatorname{ad}^*_{\xi} m +n \diamond v ,\alpha^*(\xi)n ),(\eta,w)\rangle_{(\mathfrak{g}^*\times V^*)\times(\mathfrak{g}\times V)}
\end{align}

The contragradient representation $\alpha^*$ of $\xi\in \mathfrak{g}$ on $n\in V^*$ and the $\diamond:V^*\times V\mapsto \mathfrak{g}^*$ operator are typically determined in a case-by-case setting, dependent on the Lie algebra and vector space. 

\begin{remark}
In matrix form the coadjoint operator \cref{eq:coadCE} can be written as acting on the dual space $\mathfrak{g}^* \ltimes V^*$ or in Lie Poisson form on $\mathfrak{g} \ltimes V$ as follows
\begin{align}
\left(\operatorname{ad}^{*}_{(\xi,v)}(m,n)\right)^{T} = 
\begin{bmatrix}
\operatorname{ad}^{*}_{\xi}(\cdot) & (\cdot)\diamond v\\
0&\alpha^*(\xi)(\cdot)
\end{bmatrix}
\begin{bmatrix}
m\\
n
\end{bmatrix}
= 
\begin{bmatrix}
\operatorname{ad}^{*}_{(\cdot)}m& n \diamond (\cdot) \\
\alpha^*(\cdot)n&0
\end{bmatrix}
\begin{bmatrix}
\xi\\
v
\end{bmatrix}.
\end{align}
\end{remark}


We consider the group  $(G\ltimes V)\ltimes (G\ltimes V)$ and inspect the following Lie-Poisson structure and coadjoint representation of the Euler-Poincaré equation from arguments in \cref{sec: semidirect products}
\begin{align}
\frac{d}{dt}
\begin{bmatrix}
m\\
l\\
n\\
o
\end{bmatrix}
&= -
\begin{bmatrix}
\operatorname{ad}_{(\cdot)}^*m  & l\diamond (\cdot)& \operatorname{ad}_{(\cdot)}^*n & o\diamond (\cdot)\\
\alpha^*(\cdot)l &0&\alpha^*(\cdot)o&0\\
\operatorname{ad}_{(\cdot)}^*n & o\diamond (\cdot)&\operatorname{ad}_{(\cdot)}^*n&o\diamond (\cdot)\\
\alpha^*(\cdot)o &0&\alpha^*(\cdot)o&0\\
\end{bmatrix}
\begin{bmatrix}
    \xi_h\\
    f\\
    \xi_k \\
    h
\end{bmatrix}
= -
\begin{bmatrix}
\operatorname{ad}^{*}_{\xi_h}(\cdot) & (\cdot)\diamond f & \operatorname{ad}^{*}_{\xi_k}(\cdot) & (\cdot)\diamond h\\
0&\alpha^*(\xi_h)(\cdot)& 0 & \alpha^*(\xi_k)(\cdot)\\
0&0&\operatorname{ad}^{*}_{\xi_{h}+\xi_{k}}(\cdot)& (\cdot)\diamond (f+h)\\
0&0&0&\alpha^*(\xi_h+\xi_k)(\cdot)
\end{bmatrix}
\begin{bmatrix}
m\\
l\\
n\\
o
\end{bmatrix}. \label{eq:semidirectproduct_of_semidirect_products}
\end{align}

\begin{example}
Consider the group $\operatorname{Diff}(S^1)\ltimes C^{\infty}(S^1)$, with Lie algebra $\operatorname{Vect}(S^1)\ltimes C^{\infty}(S^1)$, with adjoint action 
\begin{align}
\operatorname{ad}_{(u(x)\partial_x,f(x))}(v(x)\partial_x,g(x)) = -([u,v]\partial_x, ug_x-v f_x).
\end{align}
Under the $L^2$ pairing 
\begin{align}
\langle (m\mathrm{d}x\otimes \mathrm{d}x,l \mathrm{d}x),(v\partial_x,g) \rangle = \int m(x)v(x) + l(x)g(x) dx, \label{eq:l2 pairing}
\end{align}
one derives the coadjoint operator on $\left(\Lambda^1(S^1)\otimes\Lambda^1(S^1)\right)\times \Lambda^1(S^1)$ as,
\begin{align}
\operatorname{ad}^*_{(u\partial_x,f)}(m \mathrm{d}x^2,l\mathrm{d}x) = \left( (mu)_x + mu_x + lf_x, (lu)_x\right).
\end{align}
This is expressible in matrix form as 
\begin{align}
\left(\operatorname{ad}^{*}_{(\xi,f)}(m,l)\right)^{T} = 
\begin{bmatrix}
\mathcal{L}_{\xi}(\cdot) & (\cdot)\partial_x f\\
0&\partial_x(\cdot \xi)
\end{bmatrix}
\begin{bmatrix}
m\\
l
\end{bmatrix}
= 
\begin{bmatrix}
\mathcal{L}_{(\cdot)}(m \,\mathrm{d}x^2)& l \partial_x (\cdot) \\
\partial_x (\cdot l)&0
\end{bmatrix}
\begin{bmatrix}
\xi\\
f
\end{bmatrix}.
\end{align}
\end{example}

\begin{example}Using the above example and the general form in \cref{eq:semidirectproduct_of_semidirect_products} one can inspect the geodesics (Euler-Poincaré equations) for $(\operatorname{Diff}(S^1)\ltimes C^{\infty}(S^1))\ltimes (\operatorname{Diff}(S^1)\ltimes C^{\infty}(S^1))$, under the canonical pairing
\begin{align}
\left\langle \left((m \mathrm{d}x\otimes \mathrm{d}x,l\mathrm{d}x),(n\mathrm{d}x\otimes \mathrm{d}x,o\mathrm{d}x)\right),((u\partial_x,f),(v\partial_x,g) ) \right\rangle = \int_{S^1} m(x)u(x) +n(x)v(x) + l(x)f(x) + o(x)g(x) dx, 
\end{align}
between Lie algebra $(\operatorname{Vect}(S^1)\ltimes C^{\infty}(S^1))\ltimes (\operatorname{Vect}(S^1)\ltimes C^{\infty}(S^1))$ and dual $((\Lambda^1(S^1)\otimes\Lambda^1(S^1))\times \Lambda^1(S^1))^{\times 2}$. As
the coupled equations
\begin{align}
\partial_t m + (um)_x + mu_x + l f_x + (vn)_x + nv_x + o g_x = 0,\\
\partial_t l + \partial_x(ul) + \partial_x(vo) = 0,\\
\partial_t n + ((u+v)m)_x + m(u_x+v_x) + o (f_x+g_x)=0,\\
\partial_t o + \partial_x ((u+v)o) = 0.
\end{align}
Where $(m,n)= \left(\frac{\delta\ell}{\delta u},\frac{\delta\ell}{\delta v}\right)$.
A self-semidirect product coupled system of Itô equations \cite{ito1982symmetries} emerges for the $L^2$ norm $(m=u,n=v)$ and a self-semidirect product coupled system of CH2-equations for the $H^1$ norm when $(m=u-u_{xx},n=v-v_{xx})$.
\end{example}

We elaborate on the coordinate-free Ideal Incompressible MHD equations in arbitrary dimensions derived in \cite{arnol'd2009topological,khesin1989invariants}, as to then model its direct product as Hall-MHD and its self-semidirect product equations, in \cref{sec:MHD}. Similar ideas are presented in \cite{araki2015differential,holm202431}, regarding Lie-Poisson structures for MHD.

\section{Conclusion}\label{sec:conclusion}
Summary: We have used an energy-preserving monolithic continuous Galerkin finite element code to numerically investigate the equations and coupling mechanism associated with the geodesic flow associated with the semidirect product of the diffeomorphism group on itself. We observed coupled peakons of different heights travelling together with the same speed in the numerical solution, and emergent peakon behaviour. We investigated the coupling phenomenon numerically for the semidirect product of the volume preserving diffeomorphism group with itself in two spatial dimensions, where similar nonlinear coupling is observed in the vorticity variables. The momentum  difference is carried by velocity in fluid one
and the sum of velocities carries the momentum in fluid two.
\smallskip

Open problems: The numerical study of the semidirect product of groups could in the future be extended to the case where the group has been augmented by group extension or semidirect product. The coupled systems of CH-CH, CH-Burgers, Burgers-KdV, KdV-Burgers, KdV-KdV, Burgers-Burgers, and other coupled equations of Itô and CH2 type in \cref{sec:Further extensions} may be interesting for further study, as well as the Euler-EPDiff or MHD-MHD equations. Group extensions typically do not commute. Consequently, the central extension of the semidirect product does not necessarily equal the semidirect product of centrally extended groups. The implications of this fact about central extensions of semidirect product groups are explored in the case of vector spaces in \cite{ovsienko1994extensions}, we preliminarily explore this in \cref{Centrally extending the semidirect product group} for the central extension.
\smallskip

Further challenges: Additional work is required to understand the physical relevance of such models. In particular, different choices of the Lagrangian in semidirect product systems results in a variety of equations whose physical applications may be interesting to identify. The usual Euler-Poincaré equations are recovered from the semidirect product Euler-Poincaré equations in \cref{sec: composition of maps}, by parametrising the metric one may link the semidirect product sytem to the original EP equation. It is also possible to extend the scope of this work beyond geodesic flow, and adopt the more general framework of reduction by stages  \cite{hochschild1965structure,kupershmidt1983canonical,ortega2013momentum,marsden2007hamiltonian}, whose application is more general.

\section*{Acknowledgements}
JW happily acknowledges many helpful and enjoyable discussions with Darryl Holm, Oliver Street and Ruiao Hu. In addition to several other useful discussions with H. Wang and T. Diamantakis. JW gratefully acknowledges the use of the SciencePlots library \cite{SciencePlots} by John D. Garrett. 
During this work, JW has been partially supported by the European Research Council (ERC) Synergy grant ``Stochastic Transport in Upper Ocean Dynamics" (STUOD) -- DLV-856408.

\bibliography{reference}
\bibliographystyle{abbrv}
\appendix

\begin{appendices}
\section{Appendix}
\subsection{Group Homomorphism}

A fundamental construction in this work is a group homomorphism, it is defined as follows.
\begin{definition}[Group Homomorphism]\label{def:group homomorphism}
Let $(H,\cdot_H)$,$(K,\cdot_K)$ be groups, a Group Homomorphism from $(H,\cdot_H)$, to $(K,\cdot_K)$, is a function $\phi:H\rightarrow K$, such that $\forall h_1,h_2\in H$ we have
\begin{align}
    \phi(h_1\cdot_H h_2) = \phi(h_1)\cdot_K \phi(h_2).
\end{align}
\end{definition}
Group Homomorphisms inherit the following properties, 
\begin{enumerate}
    \item $\phi\left(e_H\right)=e_K$.
    \item $\phi\left(h^{-1}\right)=\phi(h)^{-1}$.
\end{enumerate}
from the group homomorphism property. $\phi$ maps the unit element $e_H$ of $H$ to the unit element $e_K$ of $K$, this property follows from the homomorphism property $
\phi \left(e_H\right)=\phi\left(e_H \cdot_{H} e_H\right)=\phi\left(e_H\right) \cdot_K \phi \left(e_H\right)
$. $\phi$ maps the inverse in $H$ to inverse in $K$, this follows from the previous property since
$
\phi(h) \cdot_{K} \phi\left(h^{-1}\right)=\phi\left(h \cdot_{H} h^{-1}\right)=\phi\left(e_H\right)=e_K.$

\subsection{Group Axioms: Centrally extended group}\label{sec:GroupAxiomsCentral}
We verify that group 2-cocycle conditions on $\Sigma$, are necessary and sufficient for $\widehat{G}$ to satisfy the group axioms. 
\begin{enumerate}
\item Identity. 
For $(g,b)$ to be the left and right identity of $\widehat{G}$, the following conditions must hold
\begin{align}
(f,a)=(f,a)\hat{\cdot}(g,b) = (fg,a+b+\Sigma(f,g)) \iff g=e_G,\quad \text{and}\quad b=-\Sigma(f,e),\\
(f,a)= (g,b)\hat{\cdot}(f,a) = (gf,b+a + \Sigma(g,f)) \iff g=e_G,\quad \text{and} \quad b = -\Sigma(e,f).
\end{align}
Therefore $-b=\Sigma(f,e)=\Sigma(e,f),\quad\forall f\in G$. Consider the following property of the identity
\begin{align}
(e,-\Sigma(e,f)) = (e,-\Sigma(e,f))\hat{\cdot}(e,-\Sigma(e,f)) = (ee,-\Sigma(e,f) -\Sigma(e,f) + \Sigma(e,e))\quad \iff \quad \Sigma(e,e)=\Sigma(e,f).
\end{align} 
Therefore $(e,-\Sigma(e,e))$ is the identity, and $-b=\Sigma(e,e)=\Sigma(e,f)=\Sigma(f,e)$. Now consider the element $(e,0)\in \widehat{G}$.
\begin{align}
(e,0)= (e,0)\hat{\cdot}(e,-\Sigma(e,e)) = (e,-\Sigma(e,e)).
\end{align}
So that $\Sigma(e,e)= 0 = \Sigma(e,f)=\Sigma(f,e)$ is a neccessary and sufficient condition on the Group 2-cocycle for the group $\widehat{G}$ to have a unique identity element
\begin{align}
e_{\widehat{G}} = (e_G,0)
\end{align}. 

\item Inverse. 
From the group composition $(f,a)\hat{\cdot}(g,b) = (fg,a+b+\Sigma(f,g))$, right identity of $(g,b)$ requires $g=f^{-1}$, and then $a+b+\Sigma(f,f^{-1})=0$. Left identity of $(g,b)\hat{\cdot}(f,a) = (gf,b+a+\Sigma(g,f))$ requires $g = f^{-1}$ and $0=b+a+\Sigma(f^{-1},f)$. Therefore, since
$b = -a -\Sigma(f,f^{-1})$ and $b = -a -\Sigma(f^{-1},f)$, we must require that for uniqueness of the inverse $\Sigma(f^{-1},f)=\Sigma(f,f^{-1})$.
So the inverse is \begin{align}
(f,a)^{-1}  = (f^{-1},-a - \Sigma(f,f^{-1}))\in \widehat{G}. \label{eq:Inverse Group extension}
\end{align}
\item Associativity. Associativity in $\widehat{G}$ requires
\begin{align}
(f,a)\hat{\cdot}((g,b)\hat{\cdot}(h,c)) = ((f,a)\hat{\cdot}(g,b))\hat{\cdot}(h,c) \label{eq:associativity of central extension}
\end{align}
Consider the right hand side of \cref{eq:associativity of central extension} 
\begin{align}
(fg,a+b+\Sigma(f,g))\hat{\cdot}(h,c) = (fgh,a+b+c+\Sigma(f,g)+\Sigma(fg,h))
\end{align}
and equate it to the left hand side of \cref{eq:associativity of central extension}, 
\begin{align}
 (f,a)\hat{\cdot}(gh,b+c+\Sigma(g,h)) = (fgh,a+b+c+\Sigma(g,h)+\Sigma(f,gh))
\end{align}
observe the 2cocycle condition \cref{eq:2cocycle identity}.
\end{enumerate}

\subsection{Group Axioms: Semidirect product group}\label{sec:group-axioms}
We verify that group identity, inverse and associativity of the semidirect product group
\begin{enumerate}
    \item Identity. $e_{G} = (e_H,e_K)$, is the identity, as can be verified through the group homomorphism properties 
\begin{align}
(e_H,e_K)\cdot_{\ltimes}(h,k) = (e_{H}\cdot_{H} H , e_K \cdot_{K}\phi_{e_H}k)= (h,k) \\
(h,k)\cdot_{\ltimes}(e_H,e_K) = (h\cdot_{H}e_H , k \cdot_{K}\phi_{h}e_k)= (h,k) 
\end{align}
\item Inverse.
The inverse can be infered from the group multiplication structure \cref{eq:group product structure}
\begin{align}
    g^{-1} = (h,k)^{-1} = (h^{-1},h^{-1}\cdot k^{-1}) \label{eq:inverse}
\end{align}
The right inverse property is verified using \cref{eq:property1} as follows
$gg^{-1} = (h,k)(h^{-1},h^{-1}\cdot k^{-1}) = (e,k (h\cdot (h^{-1}\cdot k^{-1}))) = (e,e)$. The left inverse property is verified using \cref{eq:property2} as follows $g^{-1}g = (h^{-1},h^{-1}\cdot k^{-1})(h,k) = (e,h^{-1}\cdot k^{-1}(h^{-1} \cdot k) )= (e,(h^{-1}\cdot k^{-1})(h^{-1} \cdot k ) ) = (e,e)$. 
\item Associativity.
The group associativity property holds by direct computation
\begin{align}
g_1(g_2 g_3) &= g_1(h_2,k_2)(h_3,k_3) = g_1 (h_2 h_3 , k_2 (h_2 \cdot k_3) ) =  (h_1 h_2 h_3 , k_1 h_1 \cdot ( k_2 (h_2 \cdot k_3) ) ) \\
(g_1 g_2) g_3 &= (h_1,k_1)(h_2,k_2)g_3 = (h_1 h_2 , k_1 (h_1 \cdot k_2))g_3
= (h_1 h_2 h_3 , k_1 (h_1 \cdot k_2) (h_1 h_2 \cdot k_3)), \label{eq:SDP_associativity_eq2}
\end{align}
the right hand sides are equivalent under the use of \cref{eq:property2}. 
\end{enumerate}

\subsection{Adjoint calculations for a semidirect product of groups}\label{sec:adjoints for H semi K}

\begin{proposition}
[AD-Ad-adjoint for semidirect product of groups]
Let $G=H\ltimes K$, then the AD-Ad-ad-joint actions are
\begin{align}
\operatorname{AD}_{(h_1,k_1)}(h_2,k_2)
&= (\operatorname{AD}_{h_1}h_2,k_1 \cdot_{K} (\phi_{h_1} k_2 ) \cdot_{K}(\phi_{\operatorname{AD}_{h_{1}} h_{2} } k_{1}^{-1} )),\label{eq:AD_new1}\\
\operatorname{Ad}_{(h,k)}{(\eta_h,\eta_k)} 
&=(\operatorname{Ad}_{h}\eta_{h}, \operatorname{Ad}_{k}(\Phi_{h}\eta_{k})  + \Psi_{k} (\operatorname{Ad}_{h}\eta_h)  - \operatorname{Ad}_{h}\eta_h),\label{eq:AD_new2}\\
\operatorname{ad}_{(\xi_{h},\xi_{k})}(\eta_h,\eta_k) &= ( \operatorname{ad}_{\xi_h} \eta_h, \xi_h \cdot \eta_k + \xi_k \cdot \eta_h + \operatorname{ad}_{\xi_k} \eta_k)\label{eq:AD_new3}.
\end{align}
Here the $\cdot$ denotes Lie algebra action. 
If it is assumed $K = H$, then the Adjoint action simplifies as follows
\begin{align}
\operatorname{Ad}_{g}{\eta_g} &= (\operatorname{Ad}_{h}\eta_{h},\operatorname{Ad}_{kh}(\eta_k+\eta_h) -\operatorname{Ad}_{h}\eta_h) \\
\operatorname{ad}_{\xi_{g}}\eta_{g} 
&= (\operatorname{ad}_{\xi_h}\eta_h, \operatorname{ad}_{\xi_k+\xi_h} (\eta_{h}+\eta_{k})- \operatorname{ad}_{\xi_h}\eta_h )
\end{align}

If it is assumed that the inner product structure is of the form
\begin{align}
    \langle (m,n),(u,v) \rangle_{\mathfrak{g}^*\times\mathfrak{g}} = \langle m,u\rangle_{\mathfrak{h}^*\times\mathfrak{h}}+ \langle n,v\rangle_{\mathfrak{k}^*\times\mathfrak{k}}.
\end{align}
Then the co-adjoint action is 
\begin{align}
\operatorname{ad}^{*}_{(\xi_h,\xi_k)} (m,n) &= (\operatorname{ad}^{*}_{\xi_h} m+ \operatorname{ad}_{\xi_k}^{*} n, \operatorname{ad}_{(\xi_k + \xi_h)}^{*} n  ). \label{eq:adstar}
\end{align}
\end{proposition}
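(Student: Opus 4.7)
I would derive the four formulas in the order $\operatorname{AD}\to \operatorname{Ad}\to \operatorname{ad}\to \operatorname{ad}^*$, specializing to $K=H$ only after the three adjoint-type formulas are in place, and then passing to the dual via the stated pairing.

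The first step is a direct calculation of $\operatorname{AD}_{g_1}g_2 = g_1 g_2 g_1^{-1}$ using the semidirect product law \cref{eq:group product structure} and the inverse $g_1^{-1}=(h_1^{-1},h_1^{-1}\cdot k_1^{-1})$ from \cref{eq:inverse}. Multiplying twice gives $\operatorname{AD}_{h_1}h_2 = h_1 h_2 h_1^{-1}$ in the first slot immediately, and in the second slot the product $k_1\cdot_K(\phi_{h_1}k_2)\cdot_K \phi_{h_1 h_2}(h_1^{-1}\cdot k_1^{-1})$. Combining the outer $\phi_{h_1 h_2}$ with the inner $\phi_{h_1^{-1}}$ via the homomorphism property \cref{eq:property1} collapses these into $\phi_{\operatorname{AD}_{h_1}h_2}(k_1^{-1})$, yielding \cref{eq:AD_new1}.

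Next I would take a curve $g_2(\epsilon)=(h_2(\epsilon),k_2(\epsilon))$ through $e_G$ with tangent $(\eta_h,\eta_k)$ and differentiate the AD formula at $\epsilon=0$. The first slot yields $\operatorname{Ad}_h\eta_h$ by definition. In the second slot the three factors take the values $k_1,\,e_K,\,k_1^{-1}$ at $\epsilon=0$, so their product equals $e_K$ and its derivative splits as a sum of three contributions via the Leibniz rule for curves through the identity. The middle factor $\phi_{h_1}k_2(\epsilon)$ linearizes through $\Phi_{h_1}:\mathfrak{k}\to\mathfrak{k}$, the derivative at $e_K$ of the automorphism $\phi_{h_1}$; the outer factor $\phi_{\operatorname{AD}_{h_1}h_2(\epsilon)}(k_1^{-1})$ contributes through the differential of $h\mapsto \phi_h(k_1^{-1})$, namely $\Psi_{k_1}$, evaluated on the pushed-forward variation $\operatorname{Ad}_{h_1}\eta_h$; and left-translation by $k_1$ together with the compensating $k_1^{-1}$ factor conjugate the middle contribution into $\operatorname{Ad}_{k_1}(\Phi_{h_1}\eta_k)$ while producing a residual $-\operatorname{Ad}_{h_1}\eta_h$. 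Gathering these yields \cref{eq:AD_new2}. A second differentiation, now in $g=(h,k)$ along a curve with tangent $(\xi_h,\xi_k)$, replaces $\Phi,\Psi,\operatorname{Ad}$ by their infinitesimal versions and produces \cref{eq:AD_new3}, the cross terms $\xi_h\cdot\eta_k$ and $\xi_k\cdot\eta_h$ arising from differentiating $\Phi_h\eta_k$ in $h$ and $\Psi_k(\operatorname{Ad}_h\eta_h)$ in $k$, respectively.

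Specializing to $K=H$ with $\phi_h(k)=hkh^{-1}$ turns $\Phi_h$ into $\operatorname{Ad}_h$ and makes $\Psi_k(\zeta)=\zeta-\operatorname{Ad}_k\zeta$; substituting into \cref{eq:AD_new2} and using $\operatorname{Ad}_k\operatorname{Ad}_h=\operatorname{Ad}_{kh}$ collapses the second slot to the compact form $\operatorname{Ad}_{kh}(\eta_k+\eta_h)-\operatorname{Ad}_h\eta_h$, from which one more differentiation gives the simplified ad-action. Finally, the co-adjoint formula \cref{eq:adstar} follows by reading off $\langle \operatorname{ad}^*_{(\xi_h,\xi_k)}(m,n),(\eta_h,\eta_k)\rangle$ from the simplified $\operatorname{ad}$ action, expanding $\operatorname{ad}^*_{\xi_h+\xi_k}n$ by linearity in the first slot, and collecting the coefficients of $\eta_h$ and $\eta_k$ under the assumed direct-sum pairing. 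The main obstacle is the Ad step: keeping the three contributions $\operatorname{Ad}_k(\Phi_h\eta_k)$, $\Psi_k(\operatorname{Ad}_h\eta_h)$, and $-\operatorname{Ad}_h\eta_h$ distinct in the general case, and then verifying that under $K=H$ they reorganize into the single closed expression in the claimed Adjoint formula without double-counting the $\operatorname{Ad}_h\eta_h$ correction.
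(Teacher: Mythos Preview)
Your overall strategy---compute $\operatorname{AD}$ directly from the product law and inverse, then differentiate successively to obtain $\operatorname{Ad}$ and $\operatorname{ad}$, specialize to $H=K$, and finally dualize with the direct-sum pairing---is exactly what the paper does. The $\operatorname{AD}$ step and the $\operatorname{ad}^*$ step are fine as described.

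There is, however, a concrete error in your specialization of $\Psi_k$. In the paper's formula \cref{eq:AD_new2} the three second-slot contributions are obtained (in the $H=K$ case) by expanding the conjugations and differentiating term by term; the result is
\[
k\cdot(h\cdot\eta_k)\;+\;k\cdot(h\cdot\eta_h)\;-\;h\cdot\eta_h,
\]
so in that notation $\Psi_k$ is simply conjugation by $k$, i.e.\ $\Psi_k=\operatorname{Ad}_k$ when $H=K$, with the $-\operatorname{Ad}_h\eta_h$ appearing as a \emph{separate} term. Your identification $\Psi_k(\zeta)=\zeta-\operatorname{Ad}_k\zeta$ has the wrong sign and conflates the third factor's contribution with the residual term: if you substitute it into \cref{eq:AD_new2} you get
\[
\operatorname{Ad}_{kh}\eta_k+(\operatorname{Ad}_h\eta_h-\operatorname{Ad}_{kh}\eta_h)-\operatorname{Ad}_h\eta_h=\operatorname{Ad}_{kh}(\eta_k-\eta_h),
\]
not the required $\operatorname{Ad}_{kh}(\eta_k+\eta_h)-\operatorname{Ad}_h\eta_h$. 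The slip originates in your description of where the $-\operatorname{Ad}_h\eta_h$ comes from: it does not arise from ``the middle contribution'' but from differentiating the outer factor $\phi_{\operatorname{AD}_{h_1}h_2(\epsilon)}(k_1^{-1})$ and then left-translating by $k_1$, which produces $\operatorname{Ad}_{k_1}\zeta-\zeta$ with $\zeta=\operatorname{Ad}_{h_1}\eta_h$. Once this bookkeeping is corrected, the rest of your argument goes through unchanged and matches the paper's derivation.
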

The above proposition is known in the more general case and detailed exposition on \cref{eq:AD_new1,eq:AD_new2,eq:AD_new3} can be found in \cite{hochschild1965structure,marsden2007hamiltonian,ortega2013momentum}, below we detail calculations required for the case $H=K$ derivation of \cref{eq:adstar}.
The ADjoint action of $G$ on $G$ can be calculated from recalling the group associative identity \cref{eq:SDP_associativity_eq2} $g_1 g_2 g_3 = (h_1 h_2 h_3 , k_1 (h_1 \cdot k_2) (h_1 h_2 \cdot k_3))$, where $g_3=g_1^{-1}$ and substituting in \cref{eq:inverse} by
setting $h_3=h_1^{-1}$ and $k_{3} = h^{-1}_1\cdot k^{-1}_1$, gives the following $\operatorname{AD}$joint action
\begin{align}
\operatorname{AD}_{g_1}g_2 &= g_1 g_2 g^{-1}_1 = (h_1,k_1)(h_2,k_2)(h_1,k_1)^{-1}\\
&=(h_1,k_1)(h_2,k_2)(h^{-1}_1,h^{-1}_1\cdot k^{-1}_1)\\
& = (h_1 h_2 h_1^{-1} , k_1 (h_1 \cdot k_2) (h_1 h_2 \cdot (h^{-1}_1\cdot k^{-1}_1)))\\
& = (h_1 h_2 h_1^{-1} , k_1 (h_1 \cdot k_2) (h_1\cdot h_2) \cdot (k^{-1}_1)))\\
&= (\operatorname{AD}_{h_1}h_2,k_1 \operatorname{AD}_{h_1}k_2 \phi_{\operatorname{AD}_{h_1}h_2} k_1^{-1})
\end{align}
Where we have used the homomorphism property \cref{eq:property1}.
We now compute the Adjoint action
\begin{align}
\operatorname{Ad}_{(h_1,k_1)}(\eta_{h},\eta_{k}) := \frac{d}{d\epsilon}\bigg\vert_{\epsilon=0} \operatorname{AD}_{(h_1,k_1)}(h_{2,\epsilon},k_{2,\epsilon}), \quad\text{where}\quad (\eta_h,\eta_k) := \frac{d }{d\epsilon}\bigg|_{\epsilon=0} (h_{2,\epsilon},k_{2,\epsilon}) \in \mathfrak{g}=T_e G.
\end{align}
where for notational convenience $\eta_{g} = (\eta_{h},\eta_k):= (\dot{h}_2(0),\dot{k}_2(0)) \in \mathfrak{g}$ and $e_G = (e_h,e_K) = (h_2(0),k_2(0))$. The first component of $\operatorname{Ad}_{g_{1}} \eta_{2}$ is trivially given by $\operatorname{Ad}_{h_1}\eta_{h}$. For the second component of $\operatorname{Ad}_{g_{1}} \eta_{2}$ we expand out the conjugations
\begin{align}
k_1 \operatorname{AD}_{h_1}k_2 \phi_{\operatorname{AD}_{h_1}h_2}k_1^{-1} = k_1 \cdot_{K} h_1 k_2(\epsilon) h_1^{-1} \cdot_{K} (h_1h_2(\epsilon)h_1^{-1}) k^{-1}_{1} (h_1 h^{-1}_2(\epsilon) h_1^{-1}),
\end{align}
and take the derivative at the group identity
\begin{align}
    \frac{d}{ds}\bigg|_{s=0} [k_1 (h_1 k_2(s) h_1^{-1}) (h_1 h_2(s) h_1^{-1} k_1^{-1} h_1 h_2(s)^{-1} h_1^{-1})]
    &= k_1 (h_1 \dot{k}_2(0) h_1^{-1}) (h_1 e_{H} h_1^{-1} k_1^{-1} h_1 e_H h_1^{-1})  \\
    & + k_1 (h_1 e_K h_1^{-1}) (h_1 \dot{h}_2(0) h_1^{-1} k_1^{-1} h_1 e_H h_1^{-1})  
    \\
    & - k_1 (h_1 e_K h_1^{-1}) (h_1 e_H h_1^{-1} k_1^{-1} h_1 e_H \dot{h}_2(0) e_H h_1^{-1}).
\end{align}
Group homomorphisms have special properties on evaluation at identities and inverses, and the above expression simplifies to
\begin{align}
\frac{d}{ds}\bigg|_{s=0} k_1 \operatorname{AD}_{h_1}k_2 \phi_{\operatorname{AD}_{h_1}h_2}k_1^{-1}&= k_1 (h_1 \dot{k}_2(0) h_1^{-1}) ( k_1^{-1})  + k_1 (h_1 \dot{h}_2(0) h_1^{-1} k_1^{-1}) - (h_1 \dot{h}_2(0)  h_1^{-1})\\
    &= k (h\cdot \eta_k) k^{-1} + k(h \cdot \eta_h )k^{-1} - h \cdot \eta_h\\
    &= k \cdot(h\cdot \eta_k)  + k\cdot(h \cdot \eta_h ) - h \cdot \eta_h.
\end{align}
The above conjugate actions have all been denoted by $\cdot$, we briefly recall how these could be informally reinterpreted for general Lie groups $H,K$. 
Here $h\cdot\eta_k$ denotes the induced (by $\phi$) conjugate action of $h\in H$ on $\eta_{k}\in \mathfrak{k}$, denoted more explicitly as $\Phi_{h}(\eta_k)$. Similarly $h\cdot \eta_h$ is Adjoint action of $H$ on $\mathfrak{h}$, denoted $\operatorname{Ad}_h\eta_h$. Finally, $k\cdot \eta_h$ is the induced conjugate action of $k\in K$ on $\eta_h\in \mathfrak{h}$ denoted by $\Psi_k h$. In the specific instance $H=K$, simplification can be achieved, as all group actions and representations are Lie group homomorphisms given by conjugation, and elements in the Lie algebras can be added. For example
\begin{align}
 k \cdot (h\cdot \eta_k)  + k \cdot (h \cdot \eta_h ) - h \cdot \eta_h = kh \cdot (\eta_k + \eta_h) - h\cdot \eta_h = \operatorname{Ad}_{kh}(\eta_k+\eta_h) -\operatorname{Ad}_{h}\eta_h.
\end{align}

The adjoint action can be computed similarly by expanding the conjugation mappings and taking the derivative and evaluating at the identity 
\begin{align}
\frac{d}{dt}\bigg|_{t=0}\left(k_t \cdot (h_t\cdot \eta_k)  + k_t \cdot (h_t \cdot \eta_h ) - h_t \cdot \eta_h\right) &= [\xi_k ,\eta_k] + [\xi_h , \eta_k] + [\xi_k,\eta_h] + [\xi_h,\eta_h] - [\xi_h,\eta_h]\\ &=  [\xi_k+\xi_h,\eta_k+\eta_h] - [\xi_h,\eta_h].
\end{align}
verifying 
\begin{align}
\operatorname{ad}_{(\xi_h,\xi_k)} (\eta_h,\eta_k) =(\operatorname{ad}_{\xi_h} \eta_h,\operatorname{ad}_{(\xi_k+\xi_h)} (\eta_k+\eta_h)  - \operatorname{ad}_{\xi_h} \eta_h).
\end{align}

We now define an element of the dual space $(m,n)\in \mathfrak{g}^* \equiv (\mathfrak{h}\ltimes \mathfrak{k})^{*}$, and derive coadjoint motion. 
\begin{align}
\langle \operatorname{ad}^{*}_{(\xi_h,\xi_k)} (m,n), (\eta_h,\eta_k)\rangle &= \langle (m,n),\operatorname{ad}_{(\xi_h,\xi_k)} (\eta_h,\eta_k) \rangle \\
&= \langle (m,n) , (\operatorname{ad}_{\xi_h} \eta_h ,\operatorname{ad}_{(\xi_k+\xi_h)} (\eta_k+\eta_h)  - \operatorname{ad}_{\xi_h} \eta_h )\rangle \\
&= \langle (m,n) , (\operatorname{ad}_{\xi_h} \eta_h ,\operatorname{ad}_{(\xi_k+\xi_h)} \eta_k + \operatorname{ad}_{(\xi_k+\xi_h)}\eta_h  - \operatorname{ad}_{\xi_h} \eta_h )\rangle \\
&= \langle (m,n) , (\operatorname{ad}_{\xi_h} \eta_h ,\operatorname{ad}_{(\xi_k+\xi_h)} \eta_k + \operatorname{ad}_{\xi_k}\eta_h   )\rangle\label{eq:coadstarderivation}
\end{align}
We now assume that there exists a pairing with the following  inner product structure
\begin{align}
    \langle (m,n),(u,v) \rangle_{\mathfrak{g}^*\times\mathfrak{g}} = \langle m,u\rangle_{\mathfrak{h}^*\times\mathfrak{h}}+ \langle n,v\rangle_{\mathfrak{k}^*\times\mathfrak{k}}.
\end{align}
Following on the calculation from \cref{eq:coadstarderivation}, we have that
\begin{align}
\langle (m,n) , (\operatorname{ad}_{\xi_h} \eta_h ,\operatorname{ad}_{(\xi_k+\xi_h)} \eta_k + \operatorname{ad}_{(\xi_k )}\eta_h  )\rangle &= 
\langle m , \operatorname{ad}_{\xi_h} \eta_h \rangle + \langle n,\operatorname{ad}_{(\xi_k+\xi_h)} \eta_k\rangle +\langle n,\operatorname{ad}_{\xi_k}\eta_h   \rangle\\
&= 
\langle \operatorname{ad}_{\xi_h}^{*} m ,  \eta_h \rangle + \langle \operatorname{ad}_{(\xi_k+\xi_h)}^{*}n , \eta_k \rangle +\langle \operatorname{ad}_{\xi_k}^{*}n,\eta_h   \rangle.
\end{align}
Giving the following representation, 
\begin{align}
\operatorname{ad}^{*}_{(\xi_h,\xi_k)} (m,n) = (\operatorname{ad}^{*}_{\xi_h} m+ \operatorname{ad}_{\xi_k}^{*} n, \operatorname{ad}_{(\xi_k + \xi_h)}^{*} n  ),
\end{align}
which we can write informally in co-adjoint matrix form or block diagonal Lie-Poisson form respectively as, 
\begin{align}
\left(\operatorname{ad}^{*}_{(\xi_h,\xi_k)} (m,n) \right)^{T}= \begin{bmatrix}
\operatorname{ad}_{\xi_h}^*(\cdot) & \operatorname{ad}_{\xi_k}^*(\cdot)\\
0&\operatorname{ad}_{\xi_h+\xi_k}^{*} (\cdot)
\end{bmatrix}
\begin{bmatrix}
    m\\
    n
\end{bmatrix} 
= 
\begin{bmatrix}
\operatorname{ad}_{(\cdot)}^*m & \operatorname{ad}_{(\cdot)}^*n \\
\operatorname{ad}_{(\cdot)}^*n& \operatorname{ad}_{(\cdot)}^*n
\end{bmatrix}
\begin{bmatrix}
    \xi_h\\
    \xi_k
\end{bmatrix}.
\end{align}

\begin{remark}
If one assumes a more general pairing (assuming $\mathfrak{h}=\mathfrak{k}$)
as follows
\begin{align}
    \langle (m,n),(u,v) \rangle_{\mathfrak{g}^*\times\mathfrak{g}} := a\langle m,u\rangle+b\langle m,v\rangle+
    c\langle n,u\rangle
    +d\langle n,v\rangle,
\end{align}
one attains the coadjoint motion
\begin{align}
\operatorname{ad}^{*}_{(\xi_h,\xi_k)} (m,n) = (\operatorname{ad}^{*}_{\xi_h} (am+cn) + \operatorname{ad}_{\xi_k}^{*} (dn+bm) , \operatorname{ad}_{(\xi_k + \xi_h)}^{*} (bm+dn)  ).
\end{align}
Which we can write informally in co-adjoint matrix form or Lie-Poisson form, respectively as, 
\begin{align}
\left(\operatorname{ad}^{*}_{(\xi_h,\xi_k)} (m,n) \right)^{T}= \begin{bmatrix}
\operatorname{ad}_{\xi_h}^*(\cdot) & \operatorname{ad}_{\xi_k}^*(\cdot)\\
0&\operatorname{ad}_{\xi_h+\xi_k}^{*} (\cdot)
\end{bmatrix}\left(
\begin{bmatrix}
    a&c\\
    b&d
\end{bmatrix} 
\begin{bmatrix}
    m\\
    n
\end{bmatrix} 
\right)
= 
\begin{bmatrix}
\operatorname{ad}_{(\cdot)}^*(am+cn) & \operatorname{ad}_{(\cdot)}^*(bm+dn) \\
\operatorname{ad}_{(\cdot)}^* (bm+dn)& \operatorname{ad}_{(\cdot)}^* (bm+dn)
\end{bmatrix}
\begin{bmatrix}
    \xi_h\\
    \xi_k
\end{bmatrix}.
\end{align}
The variational principle gives the same geodesic equations but for the transformed momentum variables 
\begin{align}
\begin{bmatrix}
    \widehat{m}\\
    \widehat{n}
\end{bmatrix}
=
\begin{bmatrix}
    a&c\\
    b&d
\end{bmatrix} 
\begin{bmatrix}
    m\\
    n
\end{bmatrix}. 
\end{align}
For an inner product it is typically required that $b=c, ad>b^2, a>0, d>0$. Considering the parametrised metric associated with $a=d=1/2$, $c=d=\theta/2$, one attains the semidirect product equations considered in this paper when $\theta = 0$, and two copies of the original Euler-Poincaré equations when $\theta=1$. 
\end{remark}

\subsection{Modelling framework, composition of maps}\label{sec: composition of maps}
We now provide some additional motivation for studying the semidirect product of groups. 
Suppose that $g_t \in \operatorname{Diff}(M)$ describes the action of a fluid taking a position at initial time to a later time $x_t = g_t  x_0$. Then taking the time derivative of position (denoted $\dot{x}_t$) and transforming into the current frame gives $\dot{x}_t = \dot{g}_t x_0 = \dot{g}_t  g^{-1}_t  x_t$, identifying
the Eulerian velocity as $\tilde{u}:=\dot{g}g^{-1}$. We suppose that the diffeomorphism $g$ is written as a composition of two other diffeomorphisms such that $x_t = (h\circ k) x_0$. One can naturally attain a decomposition of the Eulerian velocity through the chain rule, 
$\dot{x}_t = \dot{g}g^{-1}x_0 = \partial_t(h\circ k) (h\circ k)^{-1} x_0 = [\dot{k}  (h\circ k)+\dot{h}\circ k)]( k^{-1}\circ h^{-1})x_0  =(\dot{h}h^{-1} + h_{*}\dot{k}k^{-1})x_0$. The Eulerian velocity can be written as $\tilde{u} = u + v$ where $u:= \dot{h}h^{-1}$ is the velocity of $h$ and $v := h_{*}\dot{k}k^{-1}$ denotes the velocity of $k$ pushed-forward by $h$.
Supposing through a reduction by stages (see for example \cite{holm202431,marsden2007hamiltonian,ortega2013momentum}) there exists a reduced Lagrangian consisting of the velocity $u$ and the push forwarded velocity $v$, Hamilton's principle of critical action can be written as
\begin{align}
0=\delta \int \ell(\dot{h}h^{-1},h_{*}\dot{k}k^{-1})dt = \delta\int\ell(u,v)dt.
\end{align}
Let the flow maps $(h,k)$ be parametrised by $\epsilon$, and denote $h':=\partial_{\epsilon}|_{\epsilon=0}h_{\epsilon}$, and $k':=\partial_{\epsilon}|_{\epsilon=0}k_{\epsilon}$, the constrained variations are calculated \cite{holm202431,bruveris2012mixture} to be
\begin{align}
\delta u =\frac{d}{d\epsilon}\bigg|_{\epsilon=0}\dot{h}h^{-1} &= \dot{w}_1 -\operatorname{ad}_{u}w_1,\quad w_1:=h'h^{-1},\label{eq:constrained variations1} \\
\delta v = \frac{d}{d\epsilon}\bigg|_{\epsilon=0} (h_{*}\dot{k}k^{-1}) &= \dot{w}_2 -\operatorname{ad}_{u+v}w_2 -\operatorname{ad}_{u}w_1, \quad w_2 := h_*(k'k^{-1}).\label{eq:constrained variations2}
\end{align}
This reveals a $\mathfrak{g}\ltimes \mathfrak{g}$ semidirect product structure to the resulting Euler-Poincaré equations
\begin{align}
\partial_t (m,n) + (\operatorname{ad}^*_{u}(m) + \operatorname{ad}^*_{v}n,\operatorname{ad}^*_{u+v}(n)) = 0,\quad\quad \quad (m,n):= \left(\frac{\delta \ell}{\delta u},\frac{\delta \ell}{\delta v}\right),\label{eq:semidirect_product}
\end{align}
since the constrained variations align exactly with the adjoint actions of the semidirect product of groups in \cref{prop:adjoint actions for a semidirect product}.

If one defines the kinetic energy as the square of the total velocity  
\begin{align}
\ell(\tilde{u})= \frac{1}{2}\int\tilde{u}^2d^n x = \frac{1}{2}\int(\dot{h}h^{-1}+ h_{*}\dot{k}k^{-1})^2= \frac{1}{2}\int (u + v)^2 d^n x=\ell(u,v),
\end{align}
the momenta are equal 
\begin{align}
   \frac{\delta \ell}{\delta v}=\frac{\delta \ell}{\delta u} =m=n=(u+v)^{\flat}.
\end{align}
Hamilton's principle, subject to the constrained variations \cref{eq:constrained variations1,eq:constrained variations2} gives semidirect product Euler-Poincaré equations \cref{eq:semidirect_product}, with equal momenta becomes
\begin{align}
\partial_t (\tilde{m},\tilde{m}) + (\operatorname{ad}^*_{u+v}(\tilde{m}),\operatorname{ad}^*_{u+v}(\tilde{m})) = 0.
\end{align}
Recalling the definition of total fluid velocity $\tilde{u} = u+v$ and total momentum $\tilde{m}= \frac{\delta \ell(\tilde{u})}{\delta\tilde{u}}=(u+v)^{\flat}$, we recover two copies the usual Euler-Poincaré equation
\begin{align}
    \partial_t \tilde{m} + \operatorname{ad}^*_{\tilde{u}}\tilde{m}= 0.
\end{align}
Revealing that the decomposition of the diffeomorphism attains the original Euler-Poincaré equations, as one would expect. In this work, we consider the same constrained variations, but numerically investigate equations arising from Lagrangian's when one neglects the contribution of the cross terms in the Lagrangian, where an inner product of the form $\langle (,),(,)\rangle_{\mathfrak{h}\times \mathfrak{k}} = \langle, \rangle_{\mathfrak{h}} + \langle ,\rangle_{\mathfrak{k}}$ gives rise to a diagonal Lagrangian. This gives distinct equations to the Euler-Poincaré equations.

\subsection{Centrally extending the semidirect product}\label{Centrally extending the semidirect product group}
Defining $\tilde{\sigma}$ as 
\begin{align}
\tilde{\sigma}((\xi_1,\xi_2),(\eta_1,\eta_2)) := (\sigma(\xi_1,\eta_1),\sigma(\xi_2,\eta_2)),
\end{align}
where $\sigma:\mathfrak{g}\times \mathfrak{g}\mapsto \mathfrak{n}$ denotes the algebra 2-cocycle on $\mathfrak{g}$. 
One can verify that $\tilde{\sigma}$ satisfies the 2-cocycle relationship 
\begin{align}
\tilde{\sigma}((a_1,a_2),[(b_1,b_2),(c_1,c_2)]_{\mathfrak{g}\ltimes \mathfrak{g}}) + \text{cyclic permutations of a,b,c} = 0,\quad \forall a,b,c \in \mathfrak{g}\ltimes\mathfrak{g}\\ 
\tilde{\sigma}((a_1,a_2),[(b_1,b_2),(c_1,c_2)]_{\mathfrak{g}\times \mathfrak{g}}) + \text{cyclic permutations of $a$,b,c} = 0,\quad \forall a,b,c \in \mathfrak{g}\times\mathfrak{g}
\end{align}
for both the semidirect product Lie bracket and the direct product Lie bracket, from the original 2-cocycle condition on $\sigma$. Continuity and skew-symmetry similarly follow from $\sigma$. Therefore, assuming a canonical pairing
\begin{align}
\langle(m,n,l,o),(\xi_1,\xi_2,\alpha_1,\alpha_2)\rangle = \langle m,\xi_1 \rangle_{\mathfrak{g}} + \langle n,\xi_2 \rangle_{\mathfrak{g}} + \langle l,\alpha_1 \rangle_{\mathfrak{n}} + \langle o,\alpha_2 \rangle_{\mathfrak{n}},
\end{align}
one can derive the following co-adjoint operator for this central extension of the semidirect product, as follows
\begin{align}
\left\langle\operatorname{ad}^*_{(\xi_1,\xi_1,\alpha_1,\alpha_2)}(m,n,l,o),(\eta_1,\eta_2,\beta_1,\beta_2) \right\rangle &= \langle (m,n),\operatorname{ad}_{(\xi_1,\xi_2)} (\eta_1,\eta_2) \rangle + \langle (l,o),\tilde{\sigma}((\xi_1,\xi_2),(\eta_1,\eta_2))\rangle\\ &= \langle \operatorname{ad}_{(\xi_1,\xi_2)}^*(m,n) , (\eta_1,\eta_2,\beta_1,\beta_2) \rangle + \langle l,\sigma(\xi_1,\eta_1)\rangle+ \langle o,\sigma(\xi_2,\eta_2)\rangle\\
& = \langle\operatorname{ad}^*_{\xi_1}m + \operatorname{ad}^*_{\xi_2}n + \sigma^*(\xi_1,l), \eta_1 \rangle + \langle\operatorname{ad}^*_{\xi_1+\xi_2}n + \sigma^*(\xi_2,o), \eta_2\rangle \\
&= \langle \left( \operatorname{ad}_{\xi_1}^{*} m + \operatorname{ad}_{\xi_2}^{*} n + \sigma^*(\xi_1,l), \operatorname{ad}^*_{\xi_1+\xi_2}n + \sigma^*(\xi_2,o),0,0 \right),(\eta_1,\eta_2,\beta_1,\beta_2)\rangle.
\end{align}
This can be written (in terms of the previous notation in \cref{sec: semidirect product of centrall extension} ($\xi_1 = \xi_h$, $\xi_2=\xi_k$, $\alpha_1 = \alpha$, $\alpha_2=\beta$) ) in Lie-Poisson or coadjoint operator form as
\begin{align}
\frac{d}{dt}
\begin{bmatrix}
m\\
l\\
n\\
o
\end{bmatrix}
&= -
\begin{bmatrix}
\operatorname{ad}_{(\cdot)}^*m + \sigma^*(\cdot,l)& 0& \operatorname{ad}_{(\cdot)}^*n & 0\\
0&0&0&0\\
\operatorname{ad}_{(\cdot)}^*n & 0&\operatorname{ad}_{(\cdot)}^*n+ \sigma^*(\cdot,o)&0\\
0&0&0&0
\end{bmatrix}
\begin{bmatrix}
    \xi_h\\
    \alpha\\
    \xi_k \\
    \beta
\end{bmatrix}
= -
\begin{bmatrix}
\operatorname{ad}^{*}_{\xi_{h}}(\cdot) &
\sigma^{*}(\xi_h,\cdot)
&\operatorname{ad}^{*}_{\xi_{k}}(\cdot)
&0\\
0&0&0&0\\
0&0&\operatorname{ad}^{*}_{\xi_{h}+\xi_{k}}(\cdot)& \sigma^*(\xi_{k},\cdot)\\
0&0&0&0
\end{bmatrix}
\begin{bmatrix}
m\\
l\\
n\\
o
\end{bmatrix}\label{eq:central_extension_of_semidirectproduct}.
\end{align}
The coadjoint contribution of the 2-cocycle is strictly diagonal in the Lie-Poisson form, and not involved in the coupling of the equations. Choosing $\mathfrak{g}= \operatorname{Vect}(S^1)$, one attains the following coupled systems of equations as examples of centrally extended semidirect product EP-equations, without dispersive coupling where $(o,l)$ are constants.
\begin{example}
CH-CH, $\ell  = 1/2||u||_{H^1_{\alpha}}^2 + 1/2||v||_{H^1_{\beta}}^2 + 1/2 \alpha^2 + 1/2 \beta^2$
\begin{align}
    \partial_t m + (um)_x + u_x m + lu_{xxx} + (vn)_x + v_x n  &= 0, \quad m = u-\alpha u_{xx},\\
\partial_t n + ((u+v)n)_x + (u+v)_x n + o (v)_{xxx} &= 0,\quad n = v-\beta v_{xx},
\end{align}
\end{example}

\begin{example}
KdV-Burgers, $\ell  = 1/2||u||_{L^2}^2 + 1/2||v||_{L^2}^2 + 1/2 \alpha^2$
\begin{align}
\partial_t u + 3uu_x + lu_{xxx} + 3vv_x  &= 0, \\
\partial_t v + ((u+v)v)_x + (u+v)_x v  &= 0.
\end{align}
\end{example}

\begin{example}
Burgers-KdV, $\ell  = 1/2||u||_{L^2}^2 + 1/2||v||_{L^2}^2 + 1/2 \beta^2$
\begin{align}
\partial_t u + 3uu_x   + 3vv_x  &= 0, \\
\partial_t v + ((u+v)v)_x + (u+v)_x v + o v_{xxx} &= 0.
\end{align}
\end{example}
These equations are EP-equations on a centrally extended semidirect product Lie algebra, but it is not clear what the group action is. The previous EP-equations on a semidirect product of centrally extended Lie algebras would have arisen if one introduced
$
\bar{\sigma}((\xi_1,\xi_2),(\eta_1,\eta_2)) := (\sigma(\xi_1,\eta_1),\sigma(\xi_2,\eta_2) + \sigma(\xi_1,\eta_2) + \sigma(\xi_2,\eta_1))
$ instead.

\subsection{Ideal Incompressible MHD}\label{sec:MHD}

\begin{example}[Ideal Incompressible MHD]
A magnetic extension is a particular subcase of vector space extension constructed using the dual of the Lie algebra $G\ltimes \mathfrak{g}^*$. When $G=\operatorname{SDiff}(M)$,
the Lie algebra action of
$\mathfrak{g}\ltimes \mathfrak{g}^* = \operatorname{SVect}(M) \ltimes\left(\Lambda^1(M) / \mathrm{d} \Lambda^0(M)\right)$ on itself is given by 
\begin{align}
\operatorname{ad}_{(u,[b])} (v,[c]) = (\operatorname{ad}_{u}v, \operatorname{ad}^*_{v}[b]-\operatorname{ad}^*_u [c] )=(-\mathcal{L}_u v,\mathcal{L}_v [b] - \mathcal{L}_{u}[c]). 
\end{align}
So the co-adjoint action is calculable to be $(\mathcal{L}_u [m] - \mathcal{L}_B[b],\mathcal{L}_{u}B)$, as follows
\begin{align}
\langle\operatorname{ad}^*_{(u,[b])}([m],B),(v,[c])\rangle &= \langle \operatorname{ad}_{u}v, [m]\rangle - \langle \operatorname{ad}^*_{u}[c],B\rangle +\langle\operatorname{ad}_v^*[b],B\rangle\\
&= \langle  \operatorname{ad}_{u}^*[m],v\rangle - \langle \operatorname{ad}_u B,[c]\rangle +\langle B,\mathcal{L}_v [b]\rangle\\
&= \langle  \operatorname{ad}_{u}^*[m],v\rangle + \langle \mathcal{L}_u B,[c]\rangle -\langle B \diamond [b],v \rangle\\
&= \langle \left( \operatorname{ad}_u^* [m] - \mathcal{L}_{B}[b],\mathcal{L}_{u}B \right),  \left(v,[c]\right) \rangle.
\end{align}

To compute the operator $\diamond:V^*\times V\mapsto \mathfrak{g}^*$ (defined previously as $\langle n,\alpha(\eta)v \rangle_{\mathfrak{g}\times\mathfrak{g}^*} :=-\langle n\diamond v, \eta\rangle_{\mathfrak{g}^*\times
\mathfrak{g}})$, we used the contragradient representation of the Lie derivative as follows $-\langle B\diamond [b],v  \rangle = \langle \mathcal{L}_v [b], B\rangle = -\langle [b],\mathcal{L}_v B\rangle =  \langle [b],\mathcal{L}_{B}v \rangle = -\langle \mathcal{L}_{B}[b],v \rangle$. The resulting Euler-Poincaré equation is expressible in matrix form, as 
\begin{align}
\frac{d}{dt}
\begin{bmatrix}
    [m]\\
    B
\end{bmatrix}
&= 
- 
\begin{bmatrix}
    \mathcal{L}_{u}(\cdot) & -\mathcal{L}_{(\cdot)}[b] \\
    0 &  \mathcal{L}_{u}(\cdot)  \\
\end{bmatrix}
\begin{bmatrix}
    [m]\\
    B\\
\end{bmatrix}\label{2D-incompressible-MHD}=
-
\begin{bmatrix}
    \mathcal{L}_{(\cdot)}[m] & -\mathcal{L}_{B}(\cdot) \\
 \mathcal{L}_{(\cdot)}B & 0  \\
\end{bmatrix}
\begin{bmatrix}
     u  \\
    [b] 
\end{bmatrix}.
\end{align}
Here, the Lie derivative of the one-form is denoted $\mathcal{L}_{u}[m]$, and the Lie derivative of a vector field is the commutator of vector fields and denoted $\mathcal{L}_{u}B$. 
\end{example}

\begin{example}[Ideal Incompressible Hall MHD]
By combining \cref{sec:Direct products,2D-incompressible-MHD} the coordinate free Incompressible Hall-MHD equations can be constructed in arbitrary dimensions from the direct product \cite{holm1987hall}, and takes the following coadjoint and Lie-Poisson form 
\begin{align}
\frac{d}{dt}
\begin{bmatrix}
    [m]\\
    B\\
    [n]\\
    C
\end{bmatrix}
&= 
- 
\begin{bmatrix}
    \mathcal{L}_{u}(\cdot) & -\mathcal{L}_{(\cdot)}[b] & 0 & 0\\
    0 &  \mathcal{L}_{u}(\cdot) & 0 & 0 \\
    0&0&\mathcal{L}_{v}(\cdot)&-\mathcal{L}_{(\cdot)}[c]\\
    0&0&0&\mathcal{L}_{v}(\cdot)
\end{bmatrix}
\begin{bmatrix}
    [m]\\
    B\\
    [n]\\
    C
\end{bmatrix}\label{incompressible-HMHD}=
-
\begin{bmatrix}
    \mathcal{L}_{(\cdot)}[m] & -\mathcal{L}_{B}(\cdot) & 0& 0\\
     \mathcal{L}_{(\cdot)}B & 0 & 0 & 0 \\
    0&0&\mathcal{L}_{(\cdot)}[n]&-\mathcal{L}_{C}(\cdot)\\
    0&0&\mathcal{L}_{(\cdot)}C&0
\end{bmatrix}
\begin{bmatrix}
     u  \\
    [b] \\
     v  \\
    [c]
\end{bmatrix}.
\end{align}
These are geodesic equations for $(G\ltimes \mathfrak{g}^*)\times (G\ltimes \mathfrak{g}^*)$ where $G=\operatorname{SDiff}(M)$. 
\end{example}

\begin{example}[Self-Semi-direct product] 
Using the general form in \cref{eq:semidirectproduct_of_semidirect_products,2D-incompressible-MHD} one can inspect Euler-Poincaré equations for the self-semidirect product extension of the magnetically extended volume preserving group $\operatorname{SDiff}(M)$ group given by $
(\operatorname{SDiff}(M) \ltimes\left(\Lambda^1(M) / \mathrm{d} \Lambda^0(M)\right))\ltimes (\operatorname{SDiff}(M) \ltimes\left(\Lambda^1(M) / \mathrm{d} \Lambda^0(M)\right))
$. The Lie algebra is $$(u,[b],v,[c])\in 
(\operatorname{SVect}(M) \ltimes\left(\Lambda^1(M) / \mathrm{d} \Lambda^0(M)\right))\ltimes (\operatorname{SVect}(M) \ltimes\left(\Lambda^1(M) / \mathrm{d} \Lambda^0(M)\right))
$$ and it acts on the dual space $$([m],B,[n],C)\in \left(\left(\Lambda^1(M) / \mathrm{d} \Lambda^0(M)\right)
\times\operatorname{SVect}(M)\right)\times\left(\left(\Lambda^1(M) / \mathrm{d} \Lambda^0(M)\right)
\times\operatorname{SVect}(M)\right).
$$
Assuming a Lagrangian with canonical pairing between vector space and dual space gives the Euler-Poincaré equations in coadjoint and Lie-Poisson form
\begin{align}
\frac{d}{dt}
\begin{bmatrix}
    [m]\\
    B\\
    [n]\\
    C
\end{bmatrix}
&= 
- 
\begin{bmatrix}
    \mathcal{L}_{u}(\cdot) & -\mathcal{L}_{(\cdot)}[b] & \mathcal{L}_{v}(\cdot) & -\mathcal{L}_{(\cdot)}[c]\\
    0 &  \mathcal{L}_{u}(\cdot) & 0 &  \mathcal{L}_{v}(\cdot) \\
    0&0&\mathcal{L}_{u+v}(\cdot)&-\mathcal{L}_{(\cdot)}([b]+[c])\\
    0&0&0&\mathcal{L}_{u+v}(\cdot)
\end{bmatrix}
\begin{bmatrix}
    [m]\\
    B\\
    [n]\\
    C
\end{bmatrix}\label{incompressible-MHD}=
-
\begin{bmatrix}
    \mathcal{L}_{(\cdot)}[m] & -\mathcal{L}_{B}(\cdot) & \mathcal{L}_{(\cdot)}[n] & -\mathcal{L}_{C}(\cdot)\\
     \mathcal{L}_{(\cdot)}B & 0 & \mathcal{L}_{(\cdot)}C & 0 \\
    \mathcal{L}_{(\cdot)}[n]&-\mathcal{L}_{C}(\cdot)&\mathcal{L}_{(\cdot)}[n]&-\mathcal{L}_{C}(\cdot)\\
    \mathcal{L}_{(\cdot)}C&0&\mathcal{L}_{(\cdot)}C&0
\end{bmatrix}
\begin{bmatrix}
     u  \\
    [b] \\
     v  \\
    [c]
\end{bmatrix}
\end{align}
 The magnetic extension of $\operatorname{SDiff}(M)$ gives the incompressible MHD equations (given in the upper left $2\times 2$ block), discussed in \cite{khesin2009geometry,arnol'd2009topological} where $u$ denotes velocity field and $B$ denotes magnetic field. The system \cref{incompressible-MHD} denotes the self-semidirect product extension with another incompressible magnetic fluid with velocity $v$ and magnetic field $C$. 
\end{example}

\end{appendices}

\end{document}